\tikzset{
  ->-/.style={decoration={markings, mark=at position 0.5 with {\arrow{to}}},
              postaction={decorate}},
}
\tikzset{-<-/.style={decoration={markings,mark=at position 0.5 with %
    {\arrow[scale=1.5,>=stealth]{<}}},postaction={decorate}}}
\newcommand{\half}{\tfrac{1}{2}}
\renewcommand{\i}{\mathrm{i}}
\newcommand{\CP}{\mathbb{CP}}
\newtheorem{proposition}{Proposition}
\newtheorem{theorem}{Theorem}
\newtheorem{conjecture}{Conjecture}
\newcommand{\tr}{\triangle}
\newcommand{\iso}{\cong}
\newcommand{\dbar}{\br{\partial}}
\renewcommand{\c}{\mathsf{c}}
\newcommand{\zbar}{\br{z}}
\newcommand{\eps}{\epsilon}
\newcommand{\g}{\mathfrak{g}}
\newcommand{\til}{\widetilde}
\newcommand{\mscr}{\mathscr}
\renewcommand{\det}{\operatorname{det}}
\newcommand{\br}{\overline}
\newcommand{\C}{\mathbb C}
\newcommand{\norm}[1]{\left\| #1 \right\|}
\newcommand{\Oo}{\mscr{O}}
\newcommand{\Z}{\mathbb{Z}}
\newcommand{\into}{\hookrightarrow}
\newcommand{\op}{\operatorname}
\newcommand{\mbf}{\mathbf}
\newcommand{\mbb}{\mathbb}
\newcommand{\mf}{\mathfrak}
\newcommand{\mc}{\mathcal}
\newcommand{\ip}[1]{\left\langle #1 \right\rangle}
\newcommand{\abs}[1]{\left| #1 \right|}
\newcommand{\R}{\mbb R}
\renewcommand{\d}{\mathrm{d}}
\newcommand{\sh}{{\sf{h}}}    
\renewcommand{\t}{{\sf{t}}}
\DeclareMathOperator{\Sym}{Sym} \DeclareMathOperator{\Hom}{Hom}
\newcommand{\ChS}{\mathrm{CS}}
\title{Q-operators are 't Hooft lines}
\author[a]{Kevin Costello,}
\author[a]{Davide Gaiotto}
\author[b]{and Junya Yagi}
\affiliation[a]{Perimeter Institute for Theoretical Physics, Waterloo, ON
  N2L 2Y5 Canada}
\affiliation[b]{Yau Mathematical Sciences Center, Tsinghua University,
  Beijing 100084 China}
\abstract{We study 't Hooft lines in four-dimensional
  holomorphic-topological Chern-Simons theory. We relate them to
  Q-operators in the theory of integrable systems. We give a physical
  interpretation of the fundamental TQ and QQ relations satisfied by
  Q-operators and conventional transfer matrices.}
\begin{document}
\maketitle

\section{Introduction}

Four-dimensional Chern-Simons theory \cite{Costello:2013zra,
  Costello:2017dso, Costello:2018gyb} is a unified approach to
studying integrability which has been able to explain many examples of
integrable systems: spin chains \cite{Costello:2013zra,
  Costello:2017dso}, integrable field theories \cite{Costello:2019tri,
  Vicedo:2019dej, Delduc:2019whp, Fukushima:2020dcp, Bykov:2020nal},
spin chains with boundary \cite{Bittleston:2019gkq,
  Bittleston:2019mol}, and integrable string
theories~\cite{Costello:2020lpi}.

In this paper we generalize the analysis of \cite{Costello:2017dso,
  Costello:2018gyb} relating integrable spin chains and
four-dimensional Chern-Simons theory to include an ingredient
previously missing: the Q-operator.  Before we state the problem we
solve, let us review how four-dimensional Chern-Simons theory is
related to integrable spin chains.

Consider the four-manifold $\R^2 \times \C$ with coordinates $x,y$ on
$\R^2$ and holomorphic coordinate $z$ on $\C$.  The fundamental field
of four-dimensional Chern-Simons theory is a gauge-field $A$ with
three components: $A_{\zbar}$, $A_x$, $A_y$. The Lagrangian is
\begin{equation} 
  \int \ChS(A) \d z ,
\end{equation}
where $\ChS(A)$ is the Chern-Simons three-form.  The gauge field $A$ is
required to tend to $0$ as $z \to \infty$.

The XXX spin chain systems are generated in this setup as follows.
Take the gauge group to be $SL_2(\C)$, and make one of the two
topological directions, say $y$, periodic.  Insert $N$ fundamental
Wilson lines wrapping the $x$-direction.  We can take all the
fundamental Wilson lines to live at $z = 0$.  The Hilbert space of the
spin system is $(\C^2)^{\otimes N}$, which is the tensor product of
the spaces of states living at the end of each Wilson
line.\footnote{For this analysis, it is important to use the boundary
  condition that $A \to 0$ as $z \to \infty$. This breaks the gauge
  symmetry and makes the effective two-dimensional theory on the
  $x$-$y$ plane obtained by integrating out the gauge field massive.}

The transfer matrix $T(z)$ of the spin chain is realized by also
placing a fundamental Wilson line wrapping the (periodic)
$y$-direction, at some point $z$.

An important ingredient in the theory of integrable models was missing
from the analysis of \cite{Costello:2017dso, Costello:2018gyb}:
Baxter's Q-operator \cite{Baxter:1972hz}, first introduced in his
seminal analysis of the $8$-vertex model. The Q-operators
$\mbf{Q}_{\pm}(z)$ are the two linearly independent solutions to
Baxter's TQ relation
\begin{equation} 
  \label{eqn_tq_intro}
  \mbf{T}(z,\phi) \mbf{Q}_{\pm}(z,\phi) = (z - \tfrac{1}{2}\hbar)^N \mbf{Q}_{\pm}(z +\hbar,\phi) +  (z + \tfrac{1}{2}\hbar)^N \mbf{Q}_{\pm}(z -\hbar,\phi).
\end{equation}
The Q-operators also commute with the transfer matrix:
\begin{equation} 
	[\mbf{Q}_{\pm}(z), \mbf{T}(z')] = 0. 
\end{equation}
The Q-operator, and the relations between the T- and Q-operators, are
essential for analyzing the spectrum of an integrable model, and
indeed one can derive the Bethe ansatz from the relations between the
T- and Q-operators (see \cite{Bazhanov:2010ts} for a discussion).

In this paper we will analyze another class of topological line
defects in four-dimensional Chern-Simons theory: 't Hooft lines. We
will find that 't Hooft lines produce naturally Q-operators.

Our main results are the following.
\begin{enumerate} 
\item We show that the Q-operator for $SL_2$, up to an important
  normalizing factor, is the fractional 't Hooft line of charge $1/2$
  for $SL_2$ (or of charge $1$ for $PSL_2$).  More generally, we show
  that Q-operators for $SL_n$ \cite{Bazhanov:2010jq} and $SO(n)$
  \cite{Frassek:2020nki} arise from 't Hooft lines whose charge is a
  minuscule coweight of the adjoint form of the group.  We do this by
  reproducing, from an analysis of 't Hooft lines, the ``oscillator''
  construction of Q-operators presented in these works. We also
  generalize these constructions to include the minuscule coweights of
  $E_6$, $E_7$ (which also give rise to oscillator representations).
  
\item We derive the TQ and QQ relations from a first-principles
  analysis of 't Hooft lines. The TQ relation follows from the
  Witten effect \cite{Witten:1979ey} in four-dimensional gauge theory.

\item We analyze 't Hooft-Wilson lines whose charge is an arbitrary
  coweight of an arbitrary simple group. We argue that these 't
  Hooft-Wilson lines are classified by representations of a certain
  \emph{shifted Yangian} \cite{Brundan:2004ca, MR3248988,
    Frassek:2020lky}, just as in \cite{Costello:2017dso,
    Costello:2018gyb} we found that Wilson lines are classified by
  representations of the Yangian.  We show how to construct these line
  defects by using the quantum Coulomb branches of three-dimensional
  $\mathcal{N}=4$ quiver gauge theories \cite{Braverman:2016pwk},
  which are linked to four-dimensional Chern-Simons theory by string
  dualities \cite{Costello:2018txb}. In a sense, this gives a string
  theory motivation for the known relation between Coulomb branches
  and shifted Yangians \cite{Braverman:2016pwk}.
		
\item We derive new QQ relations valid for the classical groups and
  the exceptional groups $E_6$, $E_7$. In each case, the Q-operator
  is the 't Hooft line labelled by a minuscule coweight, and the QQ
  relation expresses the product of two Q-operators of opposite
  weight as the Wilson lines associated to a parabolic Verma module.

\item We build Q-operators for some of the class of integrable field
  theories constructed in~\cite{Costello:2019tri}. We prove that, for
  $\mf{g}= \mf{sl}_2$, these Q-operators satisfy Baxter's TQ relation.
\end{enumerate}

\section{'t Hooft lines in four-dimensional Chern-Simons theory}
Before we turn to linking the Q-operator and 't Hooft lines, we will
introduce 't Hooft lines in four-dimensional Chern-Simons theory.  The
analysis in this section is mostly at the classical level. An analysis
of 't Hooft lines at the quantum level is quite subtle, and will be
discussed from two different perspectives in sections
\ref{sec:coulomb} and \ref{sec:surface_defects}.

Let us start by recalling the definition of 't Hooft line in ordinary
four-dimensional $U(1)$ Yang-Mills theory.  In this case, a line
operator is said to have magnetic charge $k$ if, in the presence of
the line operator, the gauge field on the complement of the line
defines a topologically non-trivial $U(1)$ bundle on the two-sphere
surrounding the line.  This bundle should have first Chern class $k$.

An 't Hooft line of charge $k$ is by definition a line operator of magnetic charge $k$ and electric charge $0$.   

In order to satisfy the Yang-Mills equations, the field strength in the presence of the 't Hooft line should look like a Dirac monopole:
\begin{equation} 
F \sim \norm{x}^{-3} \eps_{ijk} x_i \d x_j \d x_k + \text{ regular terms},
 \end{equation}
where $x_i$ are the coordinates normal to the line.  

For a non-Abelian Yang-Mills theory with gauge group $G$, we can define a Dirac monopole by choosing a cocharacter 
$\rho\colon U(1) \to G$ to embed the $U(1)$ Dirac monopole into $G$. A Dirac monopole of this coweight has the feature that on the $S^2$ surrounding the line, the gauge field defines a $G$-bundle which is the bundle associated to the $U(1)$ bundle on $S^2$ of first Chern class $k$, by the map $\rho$.  This means that the field strength takes the form
\begin{equation} 
 F \sim \norm{x}^{-3} \eps_{ijk} x_i \d x_j \d x_k \rho + \text{ regular terms} ,
\end{equation}
where we view $\rho$ as an element of $\g$.

In order to define an 't Hooft line one should require the field
strength to approach the Dirac monopole, up to some appropriate class
of gauge transformations. This is a subtle problem; one reason is that
for a non-Abelian gauge group the field strength is not gauge
invariant.  Further, ``bubbling'' gauge field configurations can
potentially collapse on the 't Hooft line and change its charge. Even
semiclassically, a proper UV definition of the 't Hooft line requires
one to understand and quantize the phase space of such field
configurations. This is a rather theory-dependent procedure.

Let us now try to understand what gauge fields in four-dimensional
Chern-Simons theory look like when we ask that they have monopole
charge on the $2$-sphere surrounding a line in the topological
direction.  Let us first recall that four-dimensional Chern-Simons
theory is an analytically-continued theory as in \cite{Witten:2010zr},
where the space of fields is a complex manifold and the action
functional is a holomorphic function.  The functional integral should
be performed over a contour, and in perturbation theory the contour is
irrelevant. Instead of writing the gauge group as a compact group, we
will write it as a complex group, as is more natural in our
setting. The choice of contour might involve a choice of real form of
the complex group.

Because of the mixed holomorphic/topological nature of the theory,
studying the behaviour of the gauge field on the unit two-sphere will
not be convenient. Instead we will consider a topologically equivalent
region, which is the boundary of the solid cylinder where
$\abs{z} \le \eps$ and $\abs{y} \le \eps$.

This boundary is divided into the three regions: 
\begin{align} 
  y = - \eps, & \quad  \abs{z} \le \eps, \\
  -\eps \le y \le \eps, & \quad \abs{z} = \eps, \\
  y = \eps, & \quad \abs{z} \le \eps.
 \end{align}
 A solution to the equations of motion (modulo gauge transformation)
 for four-dimensional Chern-Simons theory on this region is described
 by:
\begin{enumerate} 
\item A holomorphic bundle on the disc $\abs{z} \le \eps$ and
  $y = -\eps$ and at $y = \eps$.
\item An isomorphism between these two bundles when restricted to
  $\abs{z} = \eps$. This isomorphism is provided by parallel transport
  in the $y$-direction on the cylinder $\abs{z} = \eps$,
  $-\eps \le y \le \eps$.
\end{enumerate}
Trivializing the holomorphic bundles at $y = \pm \eps$, the parallel
transport from $y = -\eps$ to $y = \eps$ is an element of the loop
group $LG$ of the complex gauge group $G$. The change of
trivialization at $y = -\eps$ acts by left multiplication with an
element of $L_+ G \subset L G$, where $L_+ G$ is the group of loops
which extend to a holomorphic map from the disc to $G$.  Change of
trivialization at $y = \eps$ acts by right multiplication by $L_+ G$.
Therefore the moduli space of solutions to the equations of motion on
the boundary of the solid cylinder is
\begin{equation} 
  L_+ G \backslash L G / L_+ G . 
\end{equation}
We will define an 't Hooft line (at least on-shell) by declaring that
our solutions to the equations of motion is locally gauge equivalent
to the Dirac singularity given by a coweight $\rho$ of $G$, which is
the point
\begin{equation}
  z^\rho \in LG.
\end{equation}

Let us try to connect this prescription with the more familiar 't
Hooft lines of Yang-Mills theory. In four-dimensional $\mathcal{N}=2$
gauge theories, BPS 't Hooft operators impose an additional
singularity in an extra scalar field $\Phi$, so that Bogomolny
equations $F = * \d\Phi$ are satisfied
\cite{Kapustin:2005py,Kapustin:2006hi}. In particular,
\begin{equation}
  [D_{x_3} + \Phi, D_{x_1} + i D_{x_2}]=0.
\end{equation}

In four-dimensional Chern-Simons theory, one has only three of the
four components of the gauge field: $A_x$, $A_y$, $A_{\zbar}$.  Here
we work on $\R^2 \times \C$, with coordinates $x$, $y$, $z$ and align
the 't Hooft line along the $x$-direction.

If we identify $A_y = A_{x_3} + i \Phi$ and
$A_{\bar z} = A_{x_1} + i A_{x_2}$ from a BPS 't Hooft operator we
obtain an complexified, $x$-independent solution of the equations of
motion which is a good candidate Dirac singularity for
four-dimensional Chern-Simons theory.  One can then check that the BPS
't Hooft operator does indeed give us a solution of the
four-dimensional Chern-Simons equations of motion with the behaviour
described above. See e.g.~\cite{Kapustin:2006pk}.

\subsection{'t Hooft lines and the the affine Grassmannian}

In the mathematics literature, the space $LG / L_+ G$ is known as the
\emph{affine Grassmannian}. Strictly speaking, the affine Grassmannian
is a slight variant of this, where we replace the group $LG$ of loops
into $G$ by the group $G((z))$, whose elements should be viewed as
Laurent series valued in $G$.\footnote{For example $GL_n((z))$ is the
  group of invertible $n \times n$ matrices whose entries are Laurent
  series. For a general group $G$, we can define $G((z))$ by viewing
  $G$ as the subgroup of $GL_n$ cut out by some polynomial equations,
  like $A A^T = 1$ in the case of $O(n)$; and then defining $G((z))$
  to be the subgroup of $GL_n((z))$ cut out by the same polynomial
  equations, e.g.\ $A(z) A^T(z) = 1$ for $O(n)((z))$.}

If we want to consider a very small cylinder surrounding the 't Hooft line, the affine Grassmannian version of this space is in fact the correct thing to use. We should use Laurent series so that we have only finite-order poles at $z = 0$ and we can expand in series in $z$ because we are working in a region where $z$ is very small.

What this analysis tells us is that to define an 't Hooft line, we
should ask that our fields extend across some submanifold of the
affine Grassmannian $\op{Gr}_G = G((z))/ G[[z]]$ which is stable under
the action of left multiplication by $G[[z]]$ and contains $z^\rho$.
Satisfyingly, it is well-known \cite{MR3752460} that the set of
$G[[z]]$ orbits in the affine Grassmannian is precisely parametrized
in this manner in terms of the set of dominant coweights $G$. This is
our semiclassical definition of an 't Hooft line.

Most $G[[z]]$-orbits in the affine Grassmannian are not closed, and
their closures are singular algebraic varieties. It is technically
difficult to work with these varieties, but the difficulties can be
overcome with some input from the theory of Coulomb branches of
three-dimensional ${\cal N}=4$ gauge theories, as we will see in
section \ref{sec:coulomb}. Furthermore, the orbits corresponding to
minuscule coweights are always smooth and closed. For the groups
$PSL_n$, the singularities of more general 't Hooft operators can be
resolved or deformed by splitting them into a collection of minuscule
ones \cite{Gomis:2011pf}. (For other groups there are not enough
minuscule coweights to do this.)

\subsection{'t Hooft lines as generated by singular gauge transformations}
\label{sec:singular_gauge}

A convenient ansatz for us is that an 't Hooft line of charge $\rho$
at $y=0$, $z=0$ in four-dimensional Chern-Simons theory is generated
by a gauge transformation with a singularity at $z = 0$, $y = 0$.  A
gauge field on a topologically non-trivial bundle, like that sourced
by an 't Hooft line, is always obtained from gluing trivial bundles on
patches by using gauge transformations on the overlap.  The unusual
feature about four-dimensional Chern-Simons theory, as opposed to
ordinary Yang-Mills theory, is that locally all solutions to the
equations of motion are trivial.

Therefore, we can engineer the field sourced by an 't Hooft line by
taking a trivial gauge field on the region $y \le 0$, $y \ge 0$, and
gluing these two trivial gauge fields by the gauge transformation
$z^{\mu}$.

If we do this, then the parallel transport of the gauge field from the
region $y < 0$ to the region $y > 0$ is of the form
\begin{equation} 
	g_1(z) z^{\mu} g_2(z) ,
\end{equation}
where $g_1(z)$, $g_2(z)$ are arbitrary $G$-valued holomorphic
functions of $z$, regular at $0$.  Note that the possible values for
the parallel transport only depend on the Weyl orbit of $\mu$, and
that we can derive the same expression by considering the $G[[z]]$
orbit in the affine Grassmannian containing $z^{\mu}$.

If we allow monopole bubbling (which does not happen for minuscule coweights), then, if $\mu$ is a dominant coweight, we also allow field configurations such that the parallel transport takes the form
\begin{equation} 
	g_1(z) z^{\mu'} g_2(z) ,
\end{equation}
where $\mu' \le \mu$.

\subsection{'t Hooft lines at infinity}

Let us define an 't Hooft line at $0$ as above, by a singular gauge
transformation or equivalently by specifying the singularity in the
parallel transport of the gauge field past the 't Hooft line. We can
define an 't Hooft line at $\infty$ in a similar way.

In \cite{Costello:2013zra, Costello:2017dso}, the boundary conditions
at $z = \infty$ is that the gauge fields and gauge transformations go
as $1/z$ near infinity.  We extend the plane $\C$ into $\CP^1$, with
the point at $\infty$ being special because the one-form $\d z$ has a
second-order pole there.

We define the 't Hooft line at $\infty$ to be a modification of the
boundary conditions, as follows.  We ask that for $y \le 0$ and for
$y \ge 0$, the gauge field and all gauge transformations go as
$1/z$. The gauge fields on the regions $y \le 0$, $y \ge 0$, are
related by the gauge transformation $z^{\mu}$.  Because $z^{\mu}$ has
a singularity at $z = \infty$, this gauge transformation is not one of
the permitted ones at $\infty$, so that we have modified the boundary
conditions.

With an 't Hooft line at $\infty$, the parallel transport of the gauge
field on a path from $y < 0$ to $y > 0$ takes the form
$g_1(z) z^{\mu}g_2(z)$, where $g_1(z)$, $g_2(z)$ are gauge
transformations of the kind allowed at $\infty$.  This means that
$g_1(z)$ is regular in a neighbourhood of $z = \infty$ and
$g_1(\infty) = 1$.

The Dirichlet boundary conditions we use for four-dimensional
Chern-Simons theory completely break the gauge symmetry at
$z = \infty$. In the bulk, two 't Hooft lines associated to coweights
in the same Weyl orbit are equivalent, as the corresponding
singularities are related by a gauge transformation.  At the boundary,
this is no longer true, so that 't Hooft lines at $z = \infty$ are
labelled by coweights, and not by Weyl orbits of coweights.

Instead, at $z = \infty$ we have a $G$-global symmetry.  In particular
an element $w$ in the Weyl group gives a global symmetry of the theory
which sends the 't Hooft line at $\infty$ of charge $\mu$ to that of
charge $w(\mu)$.

This will turn out to be \emph{essential} because we will identify a
Q-operator with an 't Hooft line of charge $\mu$ in the bulk together
with the corresponding 't Hooft line at $\infty$ of charge
$-\mu$. Such configurations are labelled by coweights and not by Weyl
orbits of coweights.  Q-operators are also labelled by (minuscule)
coweights, and coweights related by a Weyl transformation give
different Q-operators.  For example, for $G = PSL_2$, there are two
Q-operators $\mbf{Q}_+$, $\mbf{Q}_-$ which correspond to the two
minuscule coweights of $PSL_2$ (or better, to the two fractional
minuscule coweights of $SL_2$).

\section{Background on Q-operators}

We will follow \cite{Bazhanov:2010ts} for definitions and conventions
on Q-operators.  In this section we will mostly discuss gauge groups
$SL_2$ and $PSL_2$.
 
Consider an integrable spin chain with $N$ sites and with periodic
boundary conditions. The boundary condition can be modified by a
twist, where we apply an element of the Cartan $H$ of the group $G$
when we identify the two sides of the open spin chain to make it
periodic. The twist parameter will be modelled by $e^{\i \phi}$, for
$\phi \in \mf{h}$, the Lie algebra of $H$.  The Q-operators are only
well-defined with a non-zero twist parameter $\phi$.

From the point of view of four-dimensional Chern-Simons theory, the
twist parameter is given asking that the gauge field $A$ does not tend
to zero at $z = \infty$, but is the element $\i \phi$ in the
Cartan. For $SL_2(\C)$, the Cartan is of rank $1$, so $\phi$ is a
scalar, and
\begin{equation} 
	A = \op{Diag}(\i \phi, - \i \phi ) \d x ,
\end{equation}
where $x$ is the periodic direction. 

Let us write down the TQ and QQ relations for $SL_2$.  The TQ relation is 
\begin{equation} 
\mbf{T}(z,\phi) \mbf{Q}_{\pm}(z,\phi) = (z - \tfrac{1}{2}\hbar)^N \mbf{Q}_{\pm}(z +\hbar,\phi) +  (z + \tfrac{1}{2}\hbar)^N \mbf{Q}_{\pm}(z -\hbar,\phi)\label{eqn_tq} .
 \end{equation}
In addition,  if $\mbf{T}_j^+(z,\phi)$ is the transfer matrix for the Verma module of highest weight $j$, we have the QQ relation
\begin{equation} 
 	2 i \op{sin} (\phi/2) 	\mbf{T}^+_{j-\frac{1}{2}} (z) = \mbf{Q}_+(z + \hbar j )  \mbf{Q}_- ( z - \hbar j  ).\label{eqn_qq} 
 \end{equation}
 Here $\phi$ is the twist parameter.

A final identity we will need is that $Q_+$ and $Q_-$ are related by the Weyl group of $SL_2(\C)$:
\begin{equation} 
	w \mbf{Q}_+(z,\phi) w = \mbf{Q}_-(z,-\phi). 
\end{equation}
This is equation (3.55) of \cite{Bazhanov:2010ts}.

\section{Q-operators as 't Hooft lines}
\label{sec:q}

Let us introduce the particular 't Hooft lines that we will match with
the Q-operators for the group $SL_2$.  We let
$\mbf{H}_{\pm \half}(z_0)$ be the 't Hooft lines of fractional charge
$\pm \half$ at $z_0$ and of charge $\mp \half$ at $\infty$.  By
definition, the field sourced by the 't Hooft line has a singularity
which near both $z_0$ and $\infty$ is given by gluing the trivial
field configurations along $y \le 0$ and $y \ge 0$ using the singular
gauge transformation
\begin{equation} 
	\begin{pmatrix}
		(z-z_0)^{\pm \half} & 0 \\
		0 & (z - z_0)^{\mp \half} 
	\end{pmatrix}.
\end{equation}
There are two important points to note about this. First, this gauge
transformation has a branch cut. This reflects the fact that we are
dealing with fractional 't Hooft lines, which live at the end of a
Dirac string.

Since gauge transformations are broken at $z = \infty$, the operators
$\mbf{H}_{\pm \half}$ are \emph{not} equivalent.

Our proposal, for $G = SL_2$, is that the two operators
$\mbf{Q}_+(z)$, $\mbf{Q}_-(z)$ are related to the operators
$\mbf{H}_{\pm \half}(z)$ by an equation of the form
\begin{equation} 
  \mbf{Q}_{\pm}(z) = G(z) \mbf{H}_{\pm\half}(z) 
\end{equation}
for a normalizing function $G(z)$.  

If 't Hooft lines are to be related to Q-operators in this way, then
the TQ and QQ relations should be uplifted to relations involving line
defects.

The TQ relation says that the Wilson line, which becomes $T$, when
fused with an 't Hooft line $Q$, must become a sum\footnote{In fact
  there is a short exact sequence of line defects, rather than an
  equality: at the level of the trace this distinction is irrelevant.}
of two 't Hooft lines, with certain prefactors.

The prefactors, however, are problematic. For the TQ equation
\eqref{eqn_tq} to hold in the category of line defects, the prefactors
$(z \pm \half \hbar)^N$ must be represented by a line defect, say $L$,
which, when we cross it with the $N$ fundamental Wilson lines at
$z = 0$, yields the factor $(z \pm \tfrac{1}{2} \hbar)^N$.  Such a
line defect does not exist with gauge group $SL_2$. Indeed, matching
magnetic charges on both sides of the TQ relation, this putative
line defect must have no magnetic charge, and so be a Wilson line. The
exchange of gluons between two Wilson lines goes like
$1 + O(\hbar/z)$, so can not give the factor of $z\pm \half \hbar$.

The solution to this problem lies in the fact that the T-operator
that arises from four-dimensional Chern-Simons theory has a very
particular normalization, arising from the quantum determinant. Let us
explain how this arises, and how it solves the problem just discussed.

\subsection{Normalizing the T-operator}

In the standard theory of integrable spin chains, the L-operator is a
local quantity from which one can build the transfer matrix.  In the
language of \cite{Costello:2017dso} the L-operator arises from a
fundamental Wilson line of $SL_2(\C)$ at $z = 0$ crossing some other
Wilson line at $z$.  In \cite{Costello:2017dso}, the gauge theory
setup forces one to have an L-operator normalized so that the quantum
determinant is $1$.  In standard references, the L-operator does not
have this normalization.  We will refer to the unnormalized
L-operator as $\til{L}$, and the corresponding transfer matrix as
$\mbf{\til{T}}$, whereas the normalized L-operators and transfer
matrices will be $L$ and $\mbf{T}$.

There is a Wilson line associated to any representation of
$\mf{sl}_2$.  Let $\rho(e)$, $\rho(f)$, $\rho(h)$ be the matrices
defining this representation, where as usual $[e,f] = h$,
$[h,e] = 2 e$, $[h,f] = 2 f$.  Then the unnormalized L-operator
corresponding to this Wilson line is
\begin{equation} 
	\til{L}(z) =  \begin{pmatrix}
		z + \hbar \tfrac{1}{2} \rho(h) &\hbar  \rho(f) \\
		\hbar \rho(e) & z - \hbar \tfrac{1}{2} \rho(h)
	\end{pmatrix}.
\end{equation}
The normalized L-operator is
\begin{equation} 
	L(z) = F(z,j) \begin{pmatrix}
		z + \hbar \tfrac{1}{2} \rho(h) &\hbar  \rho(f) \\
		\hbar \rho(e) & z - \hbar \tfrac{1}{2} \rho(h)
	\end{pmatrix} .
\end{equation}
If we have a highest-weight representation of $\mf{sl}_2(\C)$ of
weight\footnote{Our conventions are such that $h$ acts by $2j$ on the
  highest weight vector.} $j$, then, as we show in
appendix~\ref{sec:appendix_prefactor}, the prefactor $F(z,j)$ must
satisfy the difference equation
\begin{equation} 
  F(z,j) F(z+\hbar,j) =  \frac{1}{z^2 + \hbar z  -  \hbar^2 j(j+1) }. \label{eqn_difference_prefactor}
 \end{equation}
This equation states that the quantum determinant is one. 

For integral $j$, this equation has a unique rational solution, which
we can write as a ratio of $\Gamma$-functions
\begin{equation} 
	F(z,j)  
	= \frac{1}{2  \hbar} \frac{ \Gamma\left(\frac{1}{2 \hbar} (z+\hbar(j+1) \right) \Gamma\left(\frac{1}{2 \hbar} (z -  \hbar j )  \right)} 
	{\Gamma\left(\frac{1}{2 \hbar} (z+\hbar(j+2) \right)  \Gamma\left(\frac{1}{2 \hbar} (z-\hbar j+ \hbar) \right)} \label{eqn_F}
\end{equation}
with a nice perturbative expansion in odd powers of $\hbar / z$.

For all $j$, there is a unique solution to
\eqref{eqn_difference_prefactor} which is $z^{-1}$ times a series in
$\hbar/z$, and this is the the prefactor determined by gauge theory
considerations.  This perturbative solution is the asymptotic
expansion of \eqref{eqn_F} in a Stokes sector of width $2 \pi$ in the
$\hbar / z$ plane centered on the positive real axis, but it is not
odd under $z \to - z$. The alternative solution $- F(-z,j)$ agrees
with the perturbative solution in a Stokes sector of width $2 \pi$ in
the $\hbar / z$ plane centered on the negative real axis.

This shows that, for non-integral $j$, there are two natural
non-perturbative completions of the perturbative Wilson line, with the
normalizing factors $F(z,j)$ and $-F(-z,j)$.  (More generally, we
could multiply $F(z,j)$ by an appropriate trigonometric function of
$z/\hbar$ which does not affect the asymptotic expansion of $F(z,j)$
in the Stokes sector where it is defined)). The physical meaning of
this ambiguity is obscure, but perhaps it could be explained in the
string-theoretic embedding of the system \cite{Costello:2018txb,
  Ashwinkumar:2018tmm, Nekrasov:String-Math-2017}.  In any case, it is
only relevant in a non-perturbative completion of four-dimensional
Chern-Simons theory, which is outside of the scope of this paper.

The normalized and unnormalized T-operators in a spin chain system
with $L$ sites are related by
\begin{equation} 
  \mbf{T}(z,\phi) = F(z)^N \mbf{\til{T}}(z,\phi) 
\end{equation}
This T-operator is the one obtained by the perturbative analysis of
four-dimensional Chern-Simons theory, because the quantum determinant
relation can be derived from QFT considerations
\cite{Costello:2018gyb}.

\subsection{Normalizing the 't Hooft line}

Let 
\begin{equation} 
  \label{eqn_G}
  G(z)
  =
  \frac{1}{(2 \hbar)^{1/2}}
  \frac{ \Gamma\left(\frac{1}{2 \hbar} (z+\tfrac{\hbar}{2}) \right)}
  {\Gamma\left(\frac{1}{2 \hbar} (z+\tfrac{3\hbar}{2}) \right)}.
\end{equation}
Our proposed normalization for the relation between the Q-operators
and the 't Hooft lines is that
\begin{equation} 
  \mbf{H}_{\pm\frac{1}{2}} (z) =G(z)^N \mbf{Q}_{\pm}(z),
\end{equation}
where the 't Hooft line crosses $N$ Wilson lines, with periodic
boundary conditions.  Let us check that this normalization renders the
statement that the TQ and QQ relations are relations among line
defects plausible.

Let us first consider what form the QQ relation might take for 't
Hooft lines. The QQ relation for the normalized T-operator
$\mbf{T}^+_j$ for a Verma module of highest weight $j$ takes the form
\begin{equation}
			2 i \op{sin} (\phi/2) F(z,j-\tfrac{1}{2})^{-N} \mbf{T}^+_{j-\frac{1}{2}} (z) =  \mbf{Q}_{+}(z + \hbar j )  \mbf{Q}_{-}( z - \hbar j  ).
 \end{equation}
Writing this in terms of the 't Hooft lines, the equation is
\begin{equation} 
	\mbf{T}^+_{j-\tfrac{1}{2}}  =  \frac{F(z,j-\tfrac{1}{2})^N}{ G(z+\hbar j)^N G(z - \hbar j)^N}  \frac{1}{2 i \op{sin} (\phi/2)}  \mbf{H}_{\frac{1}{2}}(z + \hbar j )  \mbf{H}_{-\frac{1}{2}}( z - \hbar j  ).
\label{eqn_qq_prefactor}
\end{equation}
This is supposed to come from an identity between line operators in
four-dimensional Chern-Simons theory. For this to hold, the prefactor
$F(z,j-\half)^N/G(z+\hbar j) G(z - \hbar j)$ must be independent of
$z$, as in a theory with a simple gauge group, we are not free to
multiply the transfer matrix associated to a line defect by a function
of $z$. Fortunately, we have
\begin{equation} 
  G(z+\hbar j) G(z - \hbar j) = F(z,j-\tfrac{1}{2}). \label{eqn_G_relation}  
\end{equation}
Further, the factor $(2 i \op{sin}(\phi/2))^{-1}$ in
\eqref{eqn_qq_prefactor} arises from the collision of the two 't Hooft
lines at $z = \infty$. We will derive the QQ relation in this form
from first principles in section~\ref{sec:QQ}.

Note that Stirling's formula implies that $G(z)$ has an asymptotic
expansion\footnote{This expansion is valid if $\hbar$ is a small
  positive real number and $\abs{\op{Arg}(z)} < \pi$.} as
$\hbar \to 0$ which is
\begin{equation} 
G(z) = z^{-1/2} (1 + C_2 \hbar^2 z^{-2} + C_3 \hbar^3 z^{-3} + \dotsb)
 \end{equation}
 for certain constants $C_i$. That is, $G(z)$ is a series in $\hbar / z$ times $z^{-1/2}$.  The factor of $z^{-1/2}$ introduces a branch cut, consistent with the fact that we are working with a fractional 't Hooft line, which lives at the end of a Dirac string.

What about the TQ relation? In the usual normalization, this takes the form
\begin{equation} 
 \mbf{\til{T}}(z,\phi) \mbf{Q}_{\pm}(z,\phi) 
  = (z - \tfrac{1}{2} \hbar)^N \mbf{Q}_{\pm}(z +\hbar,\phi) +  (z + \tfrac{1}{2}\hbar)^N \mbf{Q}_{\pm}(z -\hbar,\phi).
\end{equation}
As we have discussed, the factors of $(z \pm \tfrac{1}{2} \hbar)^N$
can not be present if we would like a relation that holds at the level
of line defects.  For TQ relation to make sense in our context, we
need the normalization factor relating the 't Hooft lines to the
Q-operators and that relating $\mbf{T}$ to $\mbf{\til{T}}$ to cancel
the factors of $(z \pm \tfrac{1}{2} \hbar)^N$.  The normalization we
have derived does exactly this.
 
Inserting the prefactors relating the 't Hooft lines and the
Q-operators, the normalized version of the TQ relation that we need to
prove is
\begin{multline} 
  F(z,\half)^{-N} G(z)^{-N} \mbf{T}(z,\phi) \mbf{H}_{\frac{1}{2}}(z,\phi) 
  \\
  =  G(z + \hbar)^{- N} (z - \half\hbar)^N \mbf{H}_{\frac{1}{2}}(z + \hbar , \phi) + G(z - \hbar)^{- N} (z + \half\hbar)^N   \mbf{H}_{\frac{1}{2}}(z - \hbar,\phi).
 \end{multline}
 Using the fact that $F(z,\half) = G(z + \hbar)G(z - \hbar)$, and
 multiplying both sides by $G(z+\hbar)^{N} G(z-\hbar)^{N} G(z)^{N}$ we
 find that the TQ relation takes the form
\begin{multline} 
  \mbf{T}(z,\phi) \mbf{H}_{\frac{1}{2}}(z,\phi)  
  =   G(z - \hbar)^{N}G(z)^{N}  (z - \half\hbar)^N \mbf{H}_{\frac{1}{2}}(z + \hbar , \phi) \\ + G(z)^{N} G(z + \hbar)^{N} (z + \half\hbar)^N   \mbf{H}_{\frac{1}{2}}(z - \hbar,\phi).
 \end{multline}
Using the difference equation
\begin{equation} 
	G(z)^{-1} G(z+\hbar)^{-1}= z + \tfrac{1}{2} \hbar  \label{eqn_g_difference}
 \end{equation}
 we find that, expressed in terms of the operators
 $\mbf{H}_{\pm\frac{1}{2}}$, the normalized TQ relation takes the form
 \begin{equation} 
   \mbf{T}(z,\phi) \mbf{H}_{\frac{1}{2}}(z,\phi)  =   \mbf{H}_{\frac{1}{2}}(z + \hbar , \phi) +   \mbf{H}_{\frac{1}{2}}(z - \hbar,\phi). \label{eqn_th}
  \end{equation}
As desired, the factors of $(z \pm \tfrac{1}{2} \hbar)^N$ have been cancelled.

What we have shown is that, when we introduce the normalizing factor
in the T-operator required by the quantum determinant relation, the TQ
relation takes the simpler form \eqref{eqn_th}.  This is essential for
our story to work, because only an equation with integral coefficients
can come from a relation between line defects.

Note that, as in the case of Wilson lines for $SL_2$, there are several possible non-perturbative completions of the normalization factor $G(z)$.  One trivial variation is to simply use $-G(z)$, which will of course still satisfy the TQ and QQ relations.  This is simply the parity-reversal of the 't Hooft line.  More interestingly,  we can replace $G(z)$  by  $\i G(-z)$.  This still satisfies   \eqref{eqn_g_difference} and so will satisfy the TQ relation.   If we use $\i G(-z)$ in place of $G(z)$, then the QQ relation will still hold as long as we replace $F(z)$ by $-F(-z)$ in the normalization of the Wilson line. 

As in the analysis of Wilson lines, $G(z)$ and $\i G(-z)$ have the same asymptotic expansion, as one can see from the fact that
\begin{equation} 
	\i G(-z) = G(z) \frac{1  -\i e^{\i \pi z/ \hbar}  }{1+ \i e^{\i \pi z/\hbar} }. 		 
\end{equation}
In the region where $\hbar \to 0$ and $z/\hbar$ is close to the positive imaginary axis, the two expressions coincide up to exponentially suppressed terms. 

This suggests there are two (or more) natural non-perturbative
completions of the 't Hooft line, just as there are two
non-perturbative completions of the Wilson line.  If we take the pair
of Q-operators built from $G(z)$, their product yields the Wilson line
normalized by $F(z,j)$.  Similarly, the Q-operators built from
$\i G(-z)$ yields the Wilson line normalized by $-F(-z,j)$.  However,
this is not all we can do: if we multiply the Q-operator normalized by
$G(z)$ with that normalized by $\i G(-z)$, we get the T-operator
normalized by $F$ times a trigonometric function of $z/\hbar$.  Again,
we don't have a good physical understanding of the various
non-perturbative completions of 't Hooft and Wilson lines, but it
would be fascinating to understand it from a string-theoretic UV
completion of the theory.

\section{The TQ relation from the Witten effect}
\label{sec:Witten}

We will now show that our formulation of the TQ equation in terms of
't Hooft lines \eqref{eqn_th} follows from the Witten effect
\cite{Witten:1979ey}. Let us first recall some background on the
Witten effect.

Let $G$ be the gauge group, and $T$ the maximal torus. Let
$\Gamma = \Hom(U(1), T)$ be the coweight lattice, and
$\Gamma^\vee = \Hom(T,U(1))$ be the weight lattice.  The magnetic
charge of a line operator lives in $\Gamma$, and the electric charge
lives in $\Gamma^\vee$.  The Killing form gives a map
$\rho\colon \Gamma \to \Gamma^\vee$.

In the case of $G = SU(2)$, the electric and magnetic charge lattices
are both $\Z$, but the map from the magnetic to the electric charge
lattices is given by multiplication by $2$.  We can take a basis for
the magnetic charge lattice to be associated to the element
$h \in \mf{sl}_2(\C)$.

The Witten effect states the following. Suppose we have a
four-dimensional gauge theory with gauge group $G$, and a line defect
with magnetic charge $m \in \Gamma$.  Then, the electric charge is not
in $\Gamma^\vee$; rather, it is in
\begin{equation} 
  \Gamma^\vee + \rho(m) \theta / 2 \pi  ,
\end{equation}
where $\theta$ is the $\theta$-angle.  In particular, if we
continuously vary $\theta$ to $\theta'$, then the electric charge of a
line defect of magnetic charge $m$ shifts by
$\rho(m) (\theta' - \theta) / 2 \pi$.

The Lagrangian of four-dimensional Chern-Simons theory, in the
normalization we are using, is
\begin{equation} 
  \frac{1}{2 \hbar \pi} \int \d z \, \ChS(A)
  = \frac{1}{4\hbar \pi}
    \int \d z \left(\ip{A, \d A} + \frac{1}{3} \ip{A,[A,A]}\right).
 \end{equation}
By integration by parts, we can rewrite this as 
\begin{equation} 
\frac{-1}{4 \hbar \pi} \int z \op{Tr} F(A)^2.   
 \end{equation}

The usual $\theta$-angle in a four-dimensional gauge theory is defined so that we add on a term
\begin{equation} 
- \frac{\theta}{8 \pi^2}  \int \op{Tr} F(A)^2  .
 \end{equation}

Thus, the Lagrangian describes varying $\theta$-angle. If we move an 't Hooft line from $z$ to $z_0 + \hbar$, then the $\theta$-angle seen by the 't Hooft line shifts by $2 \pi$.  This, combined with the Witten effect, tells us that a shift $z \to z + \hbar$ in the position of a line defect of magnetic charge $m$ shifts the electric charge $e$ to $e + \rho(m)$.   

We are interested in this in the case when the gauge group is $SU(2)$, and we have a fractional 't Hooft line at $z$ of magnetic charge $\tfrac{1}{2}$.  We call this line $\mbf{H}_{\frac{1}{2}}(z)$.  We see that the 't Hooft line $\mbf{H}_{\frac{1}{2}}(z + \hbar)$ is the same as the dyonic line at $z$ of magnetic charge $\tfrac{1}{2}$ and electric charge $\rho(\tfrac{1}{2}) = 1$.   

Let us now derive the TQ relations.  We propose that the operator
$\mbf{Q}_+(z)$ is given by the fractional 't Hooft line
$\mbf{H}_{\frac{1}{2}}(z)$ of charge $1/2$, up to the factor
$G(z)$. The T-operator is, of course, associated to a Wilson line.  On
the 't Hooft line, gauge symmetry is broken to the Borel subgroup of
$SL_2$.\footnote{ The easiest way to see this is to note that, viewing
  the 't Hooft line as a line defect of charge $1$ for $PSL_2$, the
  corresponding orbit in the affine Grassmannian is $\CP^1$.  The
  stabilizer of a point in $\CP^1$ is the Borel subgroup.}

This means that bringing a fundamental Wilson line to the 't Hooft line gives a dyonic line obtained by coupling the fundamental representation of $SL_2$ to the Borel gauge symmetry. As a representation of the Borel, the fundamental representation of $SL_2$ is an extension of the charge $1$ and $-1$ representations of the Cartan. At the level of traces,  this representation is equivalent to the sum of the charge $\pm 1$ representations of the Cartan. 

The Witten effect tells us that the 't Hooft line $\mbf{H}_{\frac{1}{2}}(z)$ of charge $\half$ coupled to a representation of charge $\pm 1$ is the 't Hooft line $\mbf{H}_{\frac{1}{2}}(z \pm \hbar)$ with shifted value of the spectral parameter.

We conclude that
\begin{equation} 
\mbf{T}(z) \mbf{H}_{\frac{1}{2}}(z) = \mbf{H}_{\frac{1}{2}}(z + \hbar) + \mbf{H}_{\frac{1}{2}}(z - \hbar). 
 \end{equation}
 This is precisely the TQ relation \eqref{eqn_th}, when we identify as before $\mbf{H}_{\frac{1}{2}}(z) = G(z)^N \mbf{Q}_+(z)$. 

This concludes our derivation of the TQ relation from the Witten effect.

\subsection{The Witten effect in four-dimensional Chern-Simons theory}

We have derived the TQ relation from the Witten effect.  Although the
Witten effect is very robust, a cautious reader might be concerned
about the application of the Witten effect in the non-standard gauge
theory we are using.  In this section we will verify the Witten effect
directly to leading order, by a Feynman diagram computation.

For simplicity, we will present the analysis for the group $G = SL_2$
and a fractional 't Hooft line of charge $\half$, although it is not
difficult to generalize to an arbitrary group.

On the 't Hooft line, the gauge group is broken to the Borel
$B \subset SL_2$.  Since there is a homomorphism from the Borel $B$ to
the Cartan $H$, we can couple a Wilson line in a rank one
representation of $H$ to give a (classically) gauge-invariant
Wilson-'t Hooft line. A rank one representation of $H$ is specified by
the weight, which is the electric charge of the Wilson-'t Hooft line
and which we denote by $e$.

We let $\mbf{H}_{(e,\half)}(z)$ be the Wilson-'t Hooft line of electric charge $e$ and magnetic charge $\half$.

The Witten effect in four-dimensional Chern-Simons theory states that
\begin{equation} 
\mbf{H}_{(e,\frac{1}{2})}(z) = \mbf{H}_{\frac{1}{2}}(z+ e \hbar). 
 \end{equation}
Let us derive this, to leading order in $\hbar$.  

Consider an 't Hooft line wrapping $y = 0$, $z = 0$.   
The field sourced by an 't Hooft line is, essentially by definition, the field given by gluing the trivial field configuration at $y \le 0$ and $y \ge 0$ by the singular gauge transformation $\op{Diag}(z^{\half}, z^{-\half})$ at $y = 0$.   This means that the effect of an 't Hooft line wrapping $y = 0$, $z = 0$ on a Wilson line wrapping $x = 0$, at $z$ is simply the matrix 
\begin{equation} 
	 M(z) = \begin{pmatrix}z^{1/2} & 0 \\
0 & z^{-1/2} 
\end{pmatrix} ( 1 + O(\hbar /z ) ).
 \end{equation}

Adding electric charge to the 't Hooft line means that we add the term
\begin{equation} 
	e \int_{y = 0, z = 0} A_h 
\end{equation}
to the Lagrangian of the theory. To leading order in $\hbar$, the effect of this modification on a Wilson line passing the 't Hooft line will be given by the exchange of a single gluon between the Wilson line and the dyonic line.  We can compute this by using the same Feynman diagrammatics as in \cite{Costello:2017dso}. 

Only the $h$-component of the gauge field is coupled when we deform the 't Hooft line to a Wilson-'t Hooft line. Therefore the group theory factor in the propagator connecting the Wilson and dyonic lines will have $h$ on each end.  

The analytic factor is identical to that considered in the calculations in \cite{Costello:2017dso} of the single gluon exchange between two Wilson lines.  In \cite{Costello:2017dso}, we found that the analytic factor is $\hbar / z$. The component of the quadratic Casimir involving $h \in \mf{sl}_2(\C)$ is $\tfrac{1}{2} h \otimes h$ (as usual $h = \op{Diag}(1,-1)$). We find that introducing electric charge $e$ to the 't Hooft line changes the matrix $M$ by 
\begin{equation} 
	M(z) \mapsto \left(1 +  \frac{e \hbar}{2 z} \begin{pmatrix} 1 & 0 \\ 0 & -1  \end{pmatrix}  \right) M(z) + O(\hbar^2). 
 \end{equation}

Now, it is obvious that
\begin{equation} 
e \hbar \partial_z M(z) = \frac{e \hbar}{2 z} h M(z) + O(\hbar^2). 
 \end{equation}
This verifies the Witten effect, to leading order in $\hbar$:
\begin{equation} 
\mbf{H}_{(1,\frac{1}{2})}(z) = \mbf{H}_{\frac{1}{2}}(z + \hbar) + O(\hbar^2).  
 \end{equation}

 It is more challenging to show that this result holds to all orders
 in $\hbar$. To see this, we will first use the fact, proved in
 section \ref{sec:quantizing_tHooft}, that the 't Hooft line has only
 one parameter, which we can take to be the position in the $z$-plane.

This implies that
\begin{equation} 
  \mbf{H}_{(e,\frac{1}{2})}(z)  = \mbf{H}_{\frac{1}{2}}(z + f(\hbar,e))
\end{equation}
for some unknown function $f$. Both the 't Hooft line and the dyonic
line placed at $z = 0$ are preserved by the symmetry of the system
which scales both $z$ and $\hbar$. This implies that the shift in $z$
on the right hand side of the equation must be linear in $\hbar$, so
that, by the argument we have already provided,
\begin{equation} 
  f(\hbar,e) = e \hbar. 
\end{equation}

\section{Phase spaces of 't Hooft operators}
\label{section_phase}
In this section we will carefully analyze the phase space of
Chern-Simons theory in the presence of 't Hooft lines at $0$ and at
$\infty$.  There are a number of subtle issues that arise when we
discuss 't Hooft lines that do not appear for Wilson lines, so we will
first discuss the phase space in the absence of 't Hooft lines.

\subsection{Moduli of bundles on $\CP^1$}
Recall that we treat four-dimensional Chern-Simons theory as a theory
on $\R^2 \times \CP^1$ with the one-form $\d z$ on $\CP^1$, where all
components of the gauge field and all gauge transformations go like
$1/z$ near $z = \infty$. In \cite{Costello:2017dso, Costello:2018gyb}
we analyzed the gauge theory perturbatively, working near the trivial
field configuration $A = 0$. In that case, the moduli space of
solutions to the equations of motion is a point.  The key point here
is that the trivial holomorphic bundle on $\CP^1$ admits no
infinitesimal deformations or automorphisms which are trivial at
$z = \infty$.

A perturbative analysis is not sufficient for the treatment of 't
Hooft lines, so here we must be more careful. If the group $G$ is
simply connected, then the moduli space of $G$-bundles on $\CP^1$ is
connected, and the only stable bundle is the trivial bundle.  In this
case, the previous analysis suffices.

If $G$ is not simply connected, then the moduli space of $G$-bundles
on $\CP^1$ has components labelled by $\pi_1(G)$, and a
non-perturbative treatment of four-dimensional Chern-Simons theory
must take account of these components.  For example, for the group
$PSL_2$, the moduli of bundles has two components, one containing the
trivial bundle and one containing $\Oo \oplus \Oo(1)$.  In the
non-trivial component, the moduli\footnote{As a stack, the moduli
  space is bigger and contains bundles such as
  $\Oo(1) \oplus \Oo(-2)$.  However, these do not contribute to the
  path integral, as the bundles isomorphic to $\Oo \oplus \Oo(1)$ form
  an open substack.}  of bundles trivialized at $\infty$ is $\CP^1$ is
$\CP^1$.

To see this, note that the group $PSL_2$ acts on the moduli space of
bundles trivialized at $\infty$. The stabilizer of $\Oo \oplus \Oo(1)$
consists of those projective automorphisms of the fibre at
$z = \infty$ which extend to bundle automorphisms on all $\CP^1$.
This is the Borel subgroup $B \subset PSL_2$.  Thus, the moduli of
bundles is $PSL_2 / B = \CP^1$.  More generally, take any simple Lie
group $G$ of adjoint type.  The components of the moduli space of
$G$-bundles on $\CP^1$, trivialized at $\infty$, consist of the
trivial bundle, plus a component for each Weyl orbit of minuscule
coweights of $G$.  Given any minuscule coweight $\mu$, we form a
bundle on $\CP^1$ whose transition function around $\infty$ is
$z^{\mu}$.  This bundle, together with its trivialization at $\infty$,
lives in a moduli space which is $G/P_{\mu}$, where
$P_{\mu} \subset G$ is a parabolic subgroup associated to $\mu$.  (The
Lie algebra $\mf{p}_{\mu}$ is defined to be the subalgebra of $\mf{g}$
of elements of charge $\ge 0$ under $\mu$, and $P_{\mu}$ is the
exponential of $\mf{p}_{\mu}$.)

To see that the moduli space is $G/P_{\mu}$, we note that $G$ acts on
this moduli space by changing the framing at infinity, and at the
level of the Lie algebra, a symmetry at $\infty$ extends to a bundle
automorphism on all of $\CP^1$ if and only if it is if non-negative
charge under $\mu$.

In \cite{Costello:2013zra,Costello:2017dso} it was shown that spin
chain systems arise by placing Wilson lines in four-dimensional
Chern-Simons theory on $\R^2 \times \CP^1$.  The analysis here only
used the trivial bundle on $\CP^1$, and did not take into account the
other components of the moduli space. Because of this, we conclude
that this analysis is complete \emph{only} in the case that $G$ is
simply connected.  When $G$ is not simply connected, the other
components of the moduli space will give rise to other sectors of the
Hilbert space of the spin chain system, which have not been analyzed.

Since our goal in this paper is to make contact with the Q-operators
in ordinary spin chain systems, we see that we are forced to always
work with the simply-connected form of the group.

\subsection{Solutions to the equations of motion in the presence of an 't Hooft line} \label{sec:phasespace}

Consider the four-dimensional Chern-Simons setup on
$\R^2 \times \CP^1$ for a simply connected group.  Place an 't Hooft
line wrapping $y = 0$, $z = 0$ of charge $\mu$. This may be a
fractionally charged 't Hooft line, in which case we view it as living
at the end of a Dirac string at $y = 0$ stretched between $z = 0$ and
$z = \infty$.

In the region $y < 0$, since $G$ is a simply-connected group, the $G$-bundle on $\CP^1$ must be trivial, and in fact trivialized because the trivialization at $z = \infty$ extends across uniquely all $\CP^1$.The same holds for $y > 0$.  This means that we can form a gauge-invariant classical observable  by measuring the parallel transport of the gauge field from $y  \ll 0$ to $y \gg 0$:
\begin{equation} 
	L(z) = \op{PExp} \int_{y} A_y (z) \in G. 
\end{equation}
This is a holomorphic $G$-valued function of $z$, with possible poles and zeroes at $z = \infty$, $z = 0$ arising from the 't Hooft lines at $0$ and $\infty$.  

The space of solutions to the equations of motion will be a sub-space of the space of holomorphic maps from $\C^\times$ to $G$, specified by the allowed poles at $0$ and $\infty$.  By expanding in Laurent series near $z = 0$, we will view it as a subgroup of the loop group $G((z))$.

As we have seen, the singularity in the 't Hooft line of charge $\mu$ at $z = 0$ means that the parallel transport takes the form
\begin{equation} 
	G[[z]] z^{\mu'} G[[z]] \subset G((z)),
\end{equation}
where $\mu$ is dominant and $\mu' \le \mu$. 

At $z = \infty$, the singularity from an 't Hooft line of charge $\eta$ means the parallel transport takes the form 
\begin{equation} 
	G_0[z^{-1}] z^{-\eta} G_0[z^{-1}],
\end{equation}
where $G_0[z^{-1}]$ is the group of polynomial maps from a neighbourhood of $\infty$ to $G$ which takes the value $1$ at $\infty$. 

We thus conclude that the phase space of the 't Hooft line is the set of operators $L(z) \in G((z))$ which satisfy both of these constraints.  It is known \cite{Braverman:2016pwk} that if $G$ is an ADE group, this space is the Coulomb branch of an ADE quiver gauge theory. This gives us a perspective on quantizing these phase spaces that we will explore in section \ref{sec:coulomb}.

In the case of fractional 't Hooft lines, we will also allow
L-operators in $G(z^{\tfrac{1}{n}})$ where $n$ is the rank of the
center of $G$, and the branch cuts live in the center.  (We could also
of course pass to the adjoint form of the group, in which case the 't
Hooft line is no longer fractional, but we prefer not to do that for
the reasons mentioned above).

\subsection{Poisson bracket on the phase space}
The Poisson bracket on the phase space of the theory in the presence of an 't Hooft line can be computed using the semiclassical version of the RTT relation.  The idea is the following. Fix an 't Hooft line at $z = 0$ crossed by a Wilson line in some representation $R$ at $z$.  Let $A$ be the algebra of functions on the phase space of the 't Hooft line.   The L-operator is an element
\begin{equation} 
	L(z) \in A \otimes \op{End}(R). 
\end{equation}
The matrix entries of $L(z)$ are thus entries in the algebra $A$.  These matrix entries $L^i_j(z)$ are obtained by asking that the Wilson line has initial state $i$ and final state $j$.

\begin{figure}[htbp]
\begin{center}
\begin{tikzpicture}
\node(In) at (0,4) {$\langle i|$};
\node(Out) at (0,0) {$|j \rangle$};
\node(HLeft) at (-2,2) {};
\node(HRight) at (2,2) {};
\draw (HLeft) to (HRight); 
\draw[very thick, color=white] (In) to (Out);
\draw (In) to (Out);  
\node at (0.2,3) {$z$};
\end{tikzpicture}
	\caption{\small{A vertical Wilson line, equipped with an incoming state $\langle i |$ and an outgoing state $|j \rangle$, gives rise to a function $L^i_j(z)$ on the phase space of the  horizontal 't Hooft line. }}
\label{figure_crossing}
\end{center}
\end{figure}
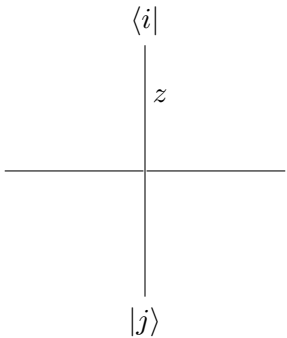

These functions satisfy some commutation relations (or, semiclassically, Poisson brackets) which arise from the R-matrix appearing when Wilson lines cross.  These commutation relations are called the RTT or RLL relations, and were derived from a field theory analysis in \cite{Costello:2018gyb}.  We will review the analysis and apply it to understand the Poisson bracket on the phase space of 't Hooft lines.  

\begin{figure}
\begin{center}
\begin{tikzpicture}
\begin{scope}
\node(In1) at (0,5) {$\langle i|$};
\node(Out1) at (1,0) {$|j \rangle$};
\node(In2) at (1,5) {$\langle k|$};
\node(Out2) at (0,0) {$|l \rangle$};
\node(HLeft) at (-2,2) {};
\node(HRight) at (3,2) {};
\draw[->] (HRight) to (HLeft); 
\draw[very thick, color=white] (In1) to (Out2);
\draw[very thick, color=white] (In2) to (Out1);
	\draw[rounded corners] (In1.south) to (0,3.5) to (1,2.1) to  (Out1);
\node at (1.2,1) {$z$};
	\draw[  rounded corners] (In2.south) to (1,3.5) to  (0,2.1) to  (Out2);
\node at (-0.5,1) {$z'$};
\end{scope}

\node at (4,2) {$=$};

	\begin{scope}[shift={(7,0)}]
\node(In1) at (0,5) {$\langle i|$};
\node(Out1) at (1,0) {$|j \rangle$};
\node(In2) at (1,5) {$\langle k|$};
\node(Out2) at (0,0) {$|l \rangle$};
\node(HLeft) at (-2,3.5) {};
\node(HRight) at (3,3.5) {};
\draw[->] (HRight) to (HLeft); 
\draw[very thick, color=white] (In1) to (Out2);
\draw[very thick, color=white] (In2) to (Out1);
	\draw[rounded corners] (In1.south) to (0,3.5) to (1,2.1) to  (Out1);
\node at (1.2,1) {$z$};
	\draw[  rounded corners] (In2.south) to (1,3.5) to  (0,2.1) to  (Out2);
\node at (-0.5,1) {$z'$};
\end{scope}

\end{tikzpicture}
\caption{\small{Two vertical Wilson lines are ``bent'' to cross each other, above or below a given horizontal 't Hooft line line. }}
\label{figure_bent} 
\end{center}
\end{figure}
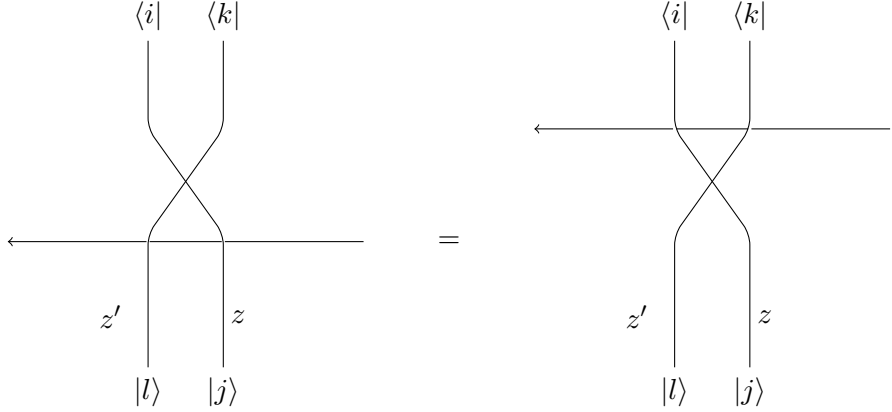

The commutation relation given is described diagrammatically from figure \ref{figure_bent}.  The diagram holds simply because the Wilson lines do not touch the 't Hooft line or each other in the four-dimensional space-time, and so can be freely moved. Translated into symbols, the relation becomes 
\begin{equation}
\sum_{r,s} R^{i k}_{r s}(z - z') L^r_j(z) L^s_l(z') = \sum_{r,s} L^i_r(z) L^k_s(z')R^{rs}_{jl}(z -z').
\end{equation}
In interpreting this equation, note that the entries
$R^{ij}_{rs}(z-z')$ of the R-matrix are scalar functions, whereas
the entries $L^r_j$ of the T-operator are functions on the phase space
in the presence of the 't Hooft line.

To understand the Poisson bracket between 't Hooft lines, we should take the semiclassical $r$-matrix, under the expansion 
\begin{equation} 
  R(z) = 1 + \hbar \frac{1}{z} c + O(\hbar^2),
\end{equation}
where $c \in \mf{g} \otimes \mf{g}$ is the quadratic Casimir. In the
above formula, we should use $c$ acting on the representation
$R \otimes R$, which we write $c^{ik}_{rs}$.

By examining the order $\hbar$ term in the RLL relation, we find the
Poisson bracket between the operators $L^i_j(z)$ is
\begin{equation}
	\{L^i_j(z), L^k_l(z')\} = 	\sum_{r,s}  \frac{1}{z - z'}  c^{i k}_{r s}L^r_j(z) L^s_l(z') - \frac{1}{z-z'}  L^i_r(z) L^k_s(z')c^{rs}_{jl}.
\end{equation}
This expression, in general, describes the Poisson bracket on the
phase space of the 't Hooft line, where we put arbitrary charge $\mu$
at $0$ and $\eta$ at $\infty$.  It is shown in \cite{Kamnitzer2020}
that in the case of an ADE group, this Poisson bracket matches the
natural Poisson bracket on the Coulomb branch of a three-dimensional
$\mathcal{N}=4$ quiver gauge theory.  We will discuss the quiver gauge
theory picture in more detail later.

\section{Oscillator realizations of Q-operators}

In the sequence of papers \cite{Bazhanov:2010ts,Bazhanov:2010jq} it
was shown that the Q-operators for the group $SL_n$ can be realized by
a more fundamental object: an L-operator valued in an oscillator
representation. For the group $SL_2$, the L-operator takes the form
\begin{equation}
  L(z) = \frac{1}{\sqrt{z}} \begin{pmatrix} z+b c & b \cr c & 1\end{pmatrix},
\end{equation}
where $b$, $c$ are the generators of a Weyl algebra, satisfying the
commutation relations $[b,c] = 1$.

In this section we will show how this L-operator arises from the
analysis of four-dimensional Chern-Simons theory. We will show that
the phase space for the group $SL_2$, in the presence of an 't Hooft
line of charge $(\half,-\half)$ at $0$ and $(-\half,\half)$ at
$\infty$, is $\C^2$ with Darboux coordinates $b$, $c$.  Then we will
calculate the L-operator that arises when we cross a Wilson line at
$z$ with these 't Hooft operators, and we will find the result written
above.

Our analysis also yields the Q-operators for $SL_n$ as constructed
in \cite{Bazhanov:2010jq}, and for $SO_n$ as analyzed in
\cite{Frassek:2020nki}.  Our approach gives a uniform construction of
oscillator-valued L-operators associated to 't Hooft lines of
minuscule charge for any simple group with a minuscule coweight,
namely the classical groups and $E_6$, $E_7$.

\subsection{Phase space for the minuscule coweight of  $\mathfrak{sl}_2$}
Let us start by analyzing the phase space for a pair of 't Hooft
operators for $SL_2$ of charge $\mu = (\half,-\half)$ at $0$ and
$-\mu$ at $\infty$.  As we discussed earlier, the boundary conditions
at $z = \infty$ on the region $y < 0$ and $y > 0$ force the bundle to
be trivial, so that parallel transport from $y < 0$ to $y > 0$ is a
well-defined matrix valued function of $z$.

We claim that the permitted singularities at $0$ and $\infty$ show
that the possible matrices are of the form
\begin{equation}
L(z) = \frac{1}{\sqrt{z}} \begin{pmatrix} z+b c & b \cr c & 1\end{pmatrix}
\end{equation}
for arbitrary $b$, $c$. Thus, the phase space is a copy of $\C^2$.

The proof of this is an explicit computation. To perform it, we work
in $PSL_2$ instead of $SL_2$ and drop the normalizing prefactor, so
that $z^{\mu} = \op{Diag}(z,1)$.  Suppose that there are elements
$A(z), B(z)$ in $PSL_2[[z]]$ so that
\begin{equation} 
	A(z) z^{\mu}  B(z) = L(z). 
\end{equation}
Then, clearly, $L(z)$ has no poles at $z = 0$, and $M(0)$ is a matrix
of rank $1$.

Similarly, at $z = \infty$, we have
\begin{equation} 
  \til{A}(z) z^{\mu}  \til{B}(z) = L(z)  ,
\end{equation}
where $\til{A}(z),\til{B}(z)$ go to the identity at $z = \infty$. From
this we see that $L(z)$ has at most a first order pole at infinity,
and the polar part is $\op{Diag}(z,0)$.  Further, the term regular at
infinity is of the form
\begin{equation} 
	\begin{pmatrix}0 & 0\\
		0 & 1
		\end{pmatrix} 
		+ \til{A}_{-1} \begin{pmatrix} 1 & 0 \\ 0 & 0 \end{pmatrix} + 	 \begin{pmatrix} 1 & 0 \\ 0 & 0 \end{pmatrix} \til{B}_{-1}  	 ,
\end{equation}
where $\til{A}_{-1}$, $\til{B}_{-1}$ are the coefficients of $z^{-1}$.   Therefore the coefficient of $z^0$ in $L(z)$ is of the form
\begin{equation} 
	L_0 = \begin{pmatrix}
		a & b \\
		c & 1
	\end{pmatrix}.
\end{equation}
The constraint that $L_0$ has determinant $0$ forces $a = bc$, so that
\begin{equation} 
	L(z) = \begin{pmatrix}
		z + bc & b \\
		c & 1 
		\end{pmatrix}
\end{equation}
as desired.

We have shown that the 't Hooft line, with charge $(\half,-\half)$ at
$0$ and $(-\half,\half)$ at $0$, has phase space $\C^2$ and classical
L-operator as above.

This is a classical limit of the L-operator which gives rise to
$Q$-functions \cite{Bazhanov:2010ts} in $\mathfrak{sl}_2$ spin chains.
For this L-operator, $b$ and $c$ live in an oscillator algebra with
commutation relations $[b,c] = \hbar$.  We will show shortly that for
the 't Hooft line, the Poisson bracket is $\{b,c\} = 1$.

This computation is a proof, to leading order, that the Q-operator is
given by the 't Hooft line.

\subsection{A general minuscule coweight}

Now let us consider an 't Hooft line of charge $\mu$ where $\mu$ is a
minuscule coweight in the adjoint form (necessarily in the simply
connected form $\mu$ is fractional).  We place charge $-\mu$ at
$\infty$.  We will find that the phase space is, as in the case of
$SL_2$ discussed above, a symplectic vector space, which quantizes
into an oscillator algebra.  The L-operator obtained from a Wilson
line crossing the 't Hooft line has a very simple general form.

Let us first recall the definition of minuscule coweight
\cite{MR1795753}.  A minuscule coweight $\mu$ in the Cartan of a
simple Lie algebra $\g$ is characterized by the fact that the only
eigenvalues of $\mu$ on $\g$ are $-1$, $0$, $1$. We denote the $\pm 1$
eigenspaces by $\mf{n}^{\pm}$, and the $0$ by $\mf{l}_{\mu}$.  Note
that $[\mf{n}^{\pm 1}, \mf{n}^{\pm 1}] = 0$.  We can form a parabolic
subalgebra $\mf{p} = \mf{l} \oplus \mf{n}^+$ consisting of the charge
$\ge 0$ elements of $\mf{g}$.  The exponential $P$ of $\mf{p}$ is a
parabolic subgroup of $G$, and the quotient space $G / P$ plays an
important role in the theory.

Minuscule coweights are closely related to symmetric spaces, since we
get a $\Z/2$ grading on $\mf{g}$ where given by reducing the
eigenvalues of $\mu$ modulo $2$.  The symmetric spaces associated to
minuscule coweights are Hermitian.

For example, the minuscule coweights of $PSL_n$ are of the form 
\begin{equation} 
	\mu_k =
		\op{Diag}(\underbrace{z, \dots, z}_{k \text{ times}} ,\overbrace{ 1,\dots,1}^{n-k \text{ times}})	
\end{equation}
for $k = 1$, $\dotsc$, $n-1$.  When viewed as fractional cocharacters
of $SL_n$, they are
\begin{equation} 
	\mu_k = 	\op{Diag}(\underbrace{z^{1-\tfrac{k}{n}}, \dots, z^{1-\tfrac{k}{n}  }}_{k \text{ times}} ,\overbrace{ z^{\tfrac{-k}{n}} ,\dots,z^{\tfrac{-k}{n}}}^{n-k \text{ times}})	.
\end{equation}

The $0$ eigenspace of the minuscule coweight we call $\mf{l}_{\mu}$;
it is a Levi factor of the parabolic $\mf{p}_{\mu}$ consisting of the
$0$ and $1$ eigenspaces.  The Levi factor corresponding to $\mu_k$ is
\begin{equation} 
  \mf{l} = \mf{sl}_k \oplus \mf{sl}_{n-k} \oplus \C \cdot \mu_k ,
\end{equation}
where here we view $\mu_k$ as an element of the Cartan Lie algebra
$\mf{h}$.  The parabolic subalgebra $\mf{p}$ is the subalgebra of
block-upper triangular matrices
\begin{equation} 
	\mf{p} = \mf{sl}_k \oplus \mf{sl}_{n-k} \oplus \C \cdot \mu_k \oplus \op{Hom}(\C^k, \C^{n-k}) . 
\end{equation}
This has the feature that, when we exponentiate to a subgroup $P \subset SL_n$, we have 
\begin{equation} 
	SL_n / P = \op{Gr}(k,n). 
\end{equation}

Now let us analyze the phase space of an 't Hooft line with charge
$\mu$ at $0$ and $-\mu$ at $\infty$, where $\mu$ is minuscule. The
singularity at $0$ means that we can write
\begin{equation} 
	L(z) = A(z) z^{\mu} B(z) ,
\end{equation}
where $A(z)$, $B(z)$ are regular at $0$. We can use the decomposition $\g = \mf{n}^{-} \oplus \mf{l} \oplus \mf{n}^+$ to decompose the elements $A(z)$, $B(z)$:
\begin{align} 
	A(z) &= e^{a^+(z)} A_0(z) e^{a^-(z)} , \\
	B(z) &= e^{b^+(z)} B_0(z) e^{b^-(z)}  ,
\end{align}
where $a^{\pm}(z)$, $b^{\pm}(z)$ are valued in $\mf{n}^{\pm}$, and $A_0$, $B_0$ are valued in $L$, the exponentiation of $\mf{l}$.  Now, 
\begin{equation} 
	z^{\mu} e^{a^{\pm}(z)}  = e^{z^{\pm 1} a^{\pm} (z)}  z^{\mu}. 
\end{equation}
Also, $z^{\mu}$ commutes with $A_0(z)$, $B_0(z)$. Then we can write
\begin{equation}
	\begin{split}
		L(z)& =  e^{a^+(z)} A_0(z) e^{a^-(z)} z^{\mu}   e^{ b^+(z)}    B_0(z) e^{b^-(z)}\\
		&= e^{a^+(z)} A_0(z) e^{a^-(z)}   e^{z b^+(z)} z^{\mu}    B_0(z) e^{b^-(z)}.
	\end{split}
\end{equation}
Since $	e^{a^-(z)} e^{z b^+(z)}$ is an element of $G[[z]]$ which is in $\op{Exp} \mf{n}_-$ at $z = 0$, we can write it in the form 
\begin{equation} 
	e^{a^-(z)} e^{z b^+(z)}=  e^{z \til{b}^+(z)} M_0(z) e^{\til{a}^-(z)} ,
\end{equation}
where $M_0(z)$ is valued in $L$ and is the identity at $z= 0$.
Further, we can move elements $M_0(z)$, $B_0(z)$, $A_0(z)$ past the
elements $e^{a^{\pm}(z)}$, $e^{b^{\pm}(z)}$ at the price of
conjugating them by $B_0(z)$ or $A_0(z)$.  Thus, by re-defining
$a^{\pm}(z)$, $b^{\pm}(z)$ we find that $L(z)$ can be written in the
form
\begin{equation} 
	L(z) =  e^{a^+(z)+ z b^+(z)}  z^{\mu} C_0(z)  e^{ z a^-(z) + b^-(z)}  
\end{equation}
Here, all expressions are regular at $0$, so that we can absorb
$z b^+(z)$ into $a^+(z)$ and $z a^-(z)$ into $b^-(z)$ to give an
expression of the form
\begin{equation} 
  L(z) =  e^{a^+(z)}  z^{\mu} C_0(z)  e^{ b^-(z)}  .
\end{equation}

Using the same analysis with the decomposition 
\begin{equation} 
	L(z) = \til{A}(z) z^{\mu} \til{B}(z) ,
\end{equation}
where $\til{A}(z)$, $\til{B}(z)$ take value $1$ at $\infty$
we get
\begin{equation} 
	L(z) =  e^{z \til{b}^+(z)}  z^{\mu} \til{C}_0(z)  e^{ z \til{a}^-(z) }   ,
\end{equation}
where $\til{a}^-(z)$, $\til{b}^+(z)$ vanish at $z = \infty$.

In the analysis at $0$ and $\infty$, we have decomposed $L(z)$ as a
product of an element of $N^+((z))$, $ L((z))$, and $ N^-((z))$.  This
decomposition is unique.  Since
\begin{equation} 
	 e^{z \til{b}^+(z)}  z^{\mu} \til{C}_0(z) e^{ z \til{a}^-(z) }  =   e^{a^+(z)}  z^{\mu} C_0(z))  e^{ b^-(z)}  ,
\end{equation}
we have
\begin{align}
	 z \til{b}^+(z) & = a^+(z) , \\
	z \til{a}^-(z)  & = b^- (z) , \\
	\til{C}_0(z) &= C_0(z).  
\end{align}
Imposing the constraints that $\til{C}_0(z)$ is the identity at
$\infty$ and $C_0(z)$ is regular at $0$, we find that $C_0(z) = 1$.

Next, $\til{a}^{\pm}(z)$, $\til{b}^{\pm}(z)$ have series expansions in
$\mf{n}^{\pm}$ as series in $z^{-1}$, with no constant term.
Similarly, $a^{\pm}(z)$, $b^{\pm}(z)$ have series expansions in $z$,
where the constant term is allowed. Identifying both sides, we find
that $a^{+}(z)$ and $b^-(z)$ are constant. We let $X^+$, $X-$ be the
values of $a^+(z)$, $b^-(z)$ at $0$.  We find that we can write $L(z)$
uniquely as
\begin{align} 
	L(z) &= e^{X^+} z^{\mu} e^{X^-} \\
	&= e^{X^- / z} z^{\mu} e^{X^+/ z}.
\end{align}

Let us now compute this for minuscule coweights of $\mf{sl}_n$. We
will focus on those which are in the Weyl orbits of
$\mu = (1,0,\dots,0)$, and, because the L-operator becomes conjugate
by a Weyl transformation when we choose a different element in the
Weyl orbit of $\mu$, we only need to compute the L-operator for this
element.  In this case, $\mf{n}^+$ consists of those matrices $M^i_j$
whose only non-zero entries are $M^1_j$, $j > 1$; and $\mf{n}^-$
consists of the matrices with non-zero entries $M^j_1$, $j > 1$.

The decomposition of $L(z)$ as $e^{X^+} z^{\mu} e^{X^-}$ is
\begin{equation}
  \label{gl_n_L} 
  \begin{split}
    L(z) &=  
    \begin{pmatrix}
      1 & b_1 & \dots & b_{n-1}\\
      0 & 1 & \dots & 0 \\
      \vdots & \vdots & & \vdots \\
      0 & 0 & \dots & 1  
    \end{pmatrix}
    \begin{pmatrix}
      z & 0 & \dots & 0\\
      0 & 1 & \dots & 0 \\
      \vdots & \vdots & & \vdots\\
      0 & 0 & \dots & 1  
    \end{pmatrix}	
    \begin{pmatrix}
      1 & 0 & \dots & 0\\
      c_1 & 1 & \dots & 0 \\
      \vdots & \vdots & & \vdots\\
      c_{n-1} & 0 & \dots & 1  
    \end{pmatrix}\\
    &= \begin{pmatrix}
      z + \sum b_i c_i & b_1 & \dots&  b_{n-1} \\
      c_1 & 1 & \dots & 0\\
      \vdots & \vdots & & \vdots \\
      c_{n-1} & 0 & \dots & 1
    \end{pmatrix}.    
  \end{split}
\end{equation}
This is the same L-operator (in the semiclassical limit) as derived in
\cite{Bazhanov:2010jq}.  This paper also included more general
L-operators, but these are simply obtained from the one we have
derived by a Weyl group transformation which reorders the rows and
columns. From our perspective, they are obtained by applying a Weyl
group element to the charge at $\infty$, and so replacing it by a
different coweight in the same Weyl orbit.

\subsection{The Poisson bracket on the phase space for $\mf{gl}_n$}
Let us consider the Poisson bracket on the phase space in the case of the coweight of $\mf{gl}_n$ of charge $(1,0,\dots,0)$.  We have seen that the L-operator \eqref{gl_n_L} is the same as that studied by Bazhanov et al \cite{Bazhanov:2010jq}, with $L^i_1(z) = b_i$, $L^1_j (z)= c_j$.  The quadratic Casimir for $\mf{gl}_n$ is
\begin{equation} 
	c^{ij}_{kl} = \delta^i_l \delta^j_k. 
\end{equation}
Therefore the Poisson bracket takes the form
\begin{equation} 
	(z-z')	\{L^i_1(z), L^1_j(z')\} = L^1_1(z) L^i_j(z') - L^i_j(z) L^1_1(z'). 
\end{equation}
Since $L^1_1(z) = z + b_i c^i$, and $L^i_j (z) = \delta^i_j$ if $i,j > 1$, we find
\begin{equation} 
	\{b_i, c_j\} = \delta^i_j 
\end{equation}
as in \cite{Bazhanov:2010jq}.

\subsection{The Poisson bracket in general}

Now let us return to the case of a general minuscule coweight $\mu$ in
a simple group $G$ in which we have a minuscule coweight $\mu$ at $0$
and we have $-\mu$ at $\infty$.  In this case, it is convenient to
take the Wilson line to live in the adjoint
representation.\footnote{Note that outside of type $A$, the adjoint
  representation does not lift to a quantum Wilson line.  The anomaly,
  as analyzed in \cite{Costello:2017dso}, occurs at two loops and does
  not affect the Poisson bracket computation.}  We decompose $\g$ as
above into $\mf{n}^+ \oplus \mf{l} \oplus \mf{n}^-$ and take a basis
$X^{m}$ for $\mf{n}^+$, and $Y_{n}$ for $\mf{n}^-$.  We then write, as
before
\begin{equation} 
	L(z) = e^{ b_{m} X^m}  z^{\mu} e^{c^n Y_n}. 
\end{equation}
The $b_m$, $c^n$ are coordinates on the phase space. We choose a basis so that the non-degenerate pairing between $\mf{n}^+$ and $\mf{n}^-$ coming from the Killing form is $\delta^n_m$.

The coordinates $b_m$, $c^n$ can be recovered by studying how $L(z)$
acts on the elements $X^m$, $Y_n$ and $\mu$ in the adjoint
representation:
\begin{align} 
	\ip{\mu,	L(z) Y_n} &=  z^{-1}\ip{\mu,  b_m[X^m,Y_n]} = - z^{-1}  b_n , \\
	\ip{X^m, L(z) \mu} &= z^{-1}  c^m ,
\end{align}
where $\ip{-,-}$ is the Killing form. 

We will write the matrix elements of the components of $L(z)$ which
act on the space $\mf{n}_- \oplus \C \cdot \mu$, using the index $0$
to indicate $\mu$.
\begin{align}
	\ip{\mu,\mu} L_n^0(z)& = -z^{-1}  b_n , \\
	L^n_0(z) &= z^{-1} c^n , \\
	L^0_0(z)&= 1 - c^m b_m  z^{-1}  \ip{\mu,\mu}^{-1} , \\
	L^m_n &= z^{-1}\delta^m_n .
\end{align}

The quadratic Casimir has three terms, living in
$\mf{l} \otimes \mf{l}$ and $\mf{n}^{\pm} \otimes \mf{n}^{\mp}$. Only
the last two terms will enter the Poisson bracket relation. These
terms are
\begin{equation} 
	  X^m \otimes Y_m +  Y_m \otimes X^m. 
\end{equation}
Acting on the elements $Y_m$, $\mu$, we have 
\begin{equation} 
  c^{m0}_{0n} = \frac{1}{\ip{\mu,\mu}}\delta^m_n,
  \qquad
  c^{0m}_{n0} =   \frac{1}{\ip{\mu,\mu}}\delta^m_n .
\end{equation}

The Poisson bracket relation in this case becomes 
\begin{align}
  \ip{\mu,\mu} (z -z')	\{L^0_n(z), L^m_0(z')\}
  &= 	L^m_n(z) L^0_0(z') -  L^0_0(z) L^m_n(z') \\
  \delta^m_n \frac{1}{z z'} (z' - c^k b_k)
  - \delta^m_n \frac{1}{z z'} (z - c^k b_k)
  &= \delta^m_n   \frac{z'-z}{z z'}.
\end{align}
From which we conclude, 
\begin{equation} 
	\{b_n, c^m\} = \delta^m_n. 
\end{equation}

\subsection{Fundamental coweight of $SO(2n)$}

There are three Weyl orbits of minuscule coweights for $SO(2n)$.  The first is associated to an embedding
\begin{equation} 
	\mu\colon SO(2) \to SO(2n) 
\end{equation}
corresponding to a choice of an orthogonal decomposition $\C^{2n} = \C^2 \oplus \C^{2n-2}$ of the vector representation. There are $2n$ elements in the Weyl orbit of this minuscule coweight. Indeed, if we identify the maximal torus of $SO(2n)$ with $SO(2)^n$ under the obvious embedding, the minuscule coweights in this Weyl orbit are of the form $(0,\dots,0,\pm 1,0,\dots,0)$.  

The  Levi subgroup in this case is
\begin{equation} 
	\mf{l} = \mf{so}(2) \oplus \mf{so}(2n-2) \subset \mf{so}(2n). 
\end{equation}
The subalgebras $\mf{n}_+$, $\mf{n}_-$ are the elements of
$\mf{so}(2n)$ of charge $\pm 1$ under the action of $SO(2)$. Take an
orthonormal basis $x_1$, $x_2$, $y_1$, $\dotsc$, $y_{2n-2}$ of the
vector representation $\C^{2n}$. The adjoint representation is
$\wedge^2 \C^{2n}$. The subspaces $\mf{n}_{\pm}$ are each of dimension
$2n-2$ and are spanned by $x_{\pm}\wedge y_j$, $j = 1$, $\dotsc$,
$2n-2$, where $x_{\pm} = \frac{1}{\sqrt{2}}(x_1 \pm \i x_2)$.

The L-operator for a Wilson line crossing a minuscule 't Hooft line
is, according to our calculations earlier,
\begin{equation} 
	L(z)=	e^{ (x_+\wedge y_j) b_j} z^{\mu} e^{(x_- \wedge y_k) c_k }  ,
\end{equation}
where $\{b_j,c_k\} = \delta_{jk}$, $j = 1$, $\dots$, $2n-2$.

We will calculate this in the vector representation.  First note that
\begin{align}
	e^{(x_+ \wedge y_j) b_j} x_- &= x_- - y_j b_j - \tfrac{1}{2} x_+ b_j b_j, \\
  e^{(x_+ \wedge y_k) b_k} y_j &= y_j  + x_+ b_j 
\end{align}
and similarly with $x_+$ and $x_-$ switched.  From this we have
\begin{align}
  L(z) x_+ &= z x_+ - y_j c_j - x_+ b_j c_j
             - z^{-1} \tfrac{1}{2} x_- c_j c_j
             + z^{-1} \tfrac{1}{2} y_k b_k c_j c_j
             + z^{-1} \tfrac{1}{4} x_+ b_j b_j c_k c_k,
  \\ 
  L(z) x_- &= z^{-1} ( x_- - b_j y_j - \half x_+ b_j b_j),
  \\
  L(z) y_j &= y_j + x_+ b_j +  z^{-1}c_j
             \left( x_- + y_k b_k + \tfrac{1}{2} x_+ b_j b_j \right).
\end{align}
This matrix is the same as the L-operator studied in
\cite{Ferrando:2020vzk}, equation (4.1), up to a change of basis and
multiplying by an overall factor of $z$. They use a basis of the form
$x_{\pm}^1$, $\dotsc$, $x_{\pm}^r$ with $x_{\pm}^1 = x_{\pm}$,
$x_{\pm}^k = y_{2k-1} \pm \i y_{2k}$.

\subsection{Spinor coweights of $SO(2n)$}

The other two Weyl orbits of minuscule coweights for $SO(2n)$ are
associated to the two spinor nodes of the Dynkin diagram.  Choose a
basis $x_1$, $\dotsc$ ,$x_n$, $y^1$, $\dotsc$, $y^n$ of the vector
representation $\C^{2n}$ under which $\ip{x_i, y^j} = \delta_i^j$. As
usual we identify $\mf{so}(2n)$ with $\wedge^2 \C^{2n}$. A basis for
the coweight lattice is given by the expressions $x_1 \wedge y^1$,
$\dotsc$, $x_n \wedge y^n$. This is the standard basis of the maximal
torus associated to the natural embedding $SO(2)^n \subset SO(2n)$,
where each copy of $SO(2)$ rotates one of the planes spanned by $x_i$,
$y_i$.

In this basis, the spinorial minuscule coweights are  given by 
\begin{equation} 
	\sum (-1)^{k_i} \half (x_i \wedge y^i) ,
\end{equation}
where $k_i = 0$, $1$.  These are minuscule, because the basis vectors
$x_i$, $y_j$ have charge $\pm \half$ so that every element in the
adjoint representation $\wedge^2 \C^{2n}$ has charge $0$ or $\pm 1$.

Thus, there are $2^n$ minuscule coweights of this form. An element of
the Weyl group can switch an even number of the signs, so that there
are two Weyl orbits of minuscule coweights, each with $2^{n-1}$
elements.  The two Weyl orbits are related by the outer automorphism
of $SO(2n)$.

We will calculate the L-operator associated to the coweight
$\sum \half x_i \wedge y^i$, with all signs $+1$.  The L-operator for
other coweights in the same Weyl orbit can be determined by
conjugating by an element of the Weyl group, and the L-operator for
elements in the other Weyl orbit can be obtained by conjugating with
the matrix in $O(2n)$ which switches $x_1$ and $y^1$.

For the coweight $\sum \half x_i \wedge y^i$, the elements
$x_i \wedge x_j$ are of charge $+1$, $x_i \wedge y^j$ are of charge
$0$, and $y^i \wedge y^j$ are of charge $-1$. Thus, the Levi $\mf{l}$
is $\mf{gl}(n)$, and $\mf{n}_{\pm}$ are the exterior squares of the
fundamental and antifundamental representations of $\mf{gl}_n$, and
are of dimension $n \choose 2$.  As usual, we introduce oscillators
$b^{rs} \in \mf{n}_+^\vee$, $c_{rs} \in \mf{n}^\vee_-$, which are
antisymmetric in their indices and satisfy
$\{b^{rs}, c_{mn}\} = \delta^m_r \delta^n_s - \delta^m_s \delta^n_r$.

The L-operator is
\begin{equation} 
	L(z) = e^{x_r\wedge x_s b^{rs}} z^{\mu} e^{y^m \wedge y^n c_{mn}},
\end{equation}
where $z^{\mu} x_r = z^{\half} x_r$, $z^{\mu} y^r = z^{-\half} y^r$.

Thus,
\begin{align}
	L(z) y^r &= z^{-\half}(y^r +  2 b^{sr} x_s) ,\\
	L(z) x_r &=  z^{ \half } x_r +  z^{-\half} 2 c_{sr} y^s + 4z^{-\half}x_s b^{sm} c_{mr}  .
\end{align}
Multiplying by $z^{1/2}$ and re-ordering the basis elements we find
the expression computed in \cite{Ferrando:2020vzk}, equation (4.11).

\subsection{A sketch of the L-operator for the minuscule coweight of $E_6$}

So far, we given a presentation of the L-operators associated to
minuscule 't Hooft lines of $SL_n$ and $SO(2n)$, recovering known
expressions for Q-operators.  As a final example in this section, we
will present something a little new, which is the L-operator for the
minuscule coweight of $E_6$.  There is only one Weyl orbit of
minuscule coweight in this case. The Levi factor is
$\mf{l} = \mf{so}(10) \oplus \mf{so}(2)$, and the subspaces
$\mf{n}_{\pm}$ are the two spin representations of $\mf{so}(10)$. We
choose our coweight so that $\mf{n}_{\pm} = S_{\pm}$.  The $27$
dimensional representation of $E_6$ decomposes, under
$\mf{so}(10) \oplus \mf{so}(2)$, as the vector representation of
$\mf{so}(10)$ of charge $2/3$ under $\mf{so}(2)$; the spin
representation $S_+$ of charge $-1/3$; and the trivial representation
of charge $-4/3$.

Let $x_i$ be a basis for for the vector representation of
$\mf{so}(10)$, and $\psi_\alpha$, $\psi^\alpha$ bases for the spin
representations $S_{\pm}$. A basis for the $27$ dimensional
representation of $E_6$ is given by $v$, in the trivial
representation, $x_i$, and $\psi_\alpha$.  A basis for the adjoint
representation of $E_6$ is $x_i \wedge x_j$, $\Psi_\alpha$,
$\Psi^\alpha$.  The action the subspace of $E_6$ spanned by
$\Psi_\alpha$, $\Psi^\beta$ on the $27$ dimensional representation is
given by
\begin{align} 
	\Psi_\alpha \cdot v &= \psi_\alpha , \\
	\Psi_\alpha \cdot \psi_\beta &= \Gamma^i_{\alpha \beta} x_i , \\
	\Psi^\alpha \cdot x_i &= \Gamma_i^{\alpha \beta} \psi_\beta ,\\
	\Psi^\alpha \cdot \psi_\beta &= \delta^{\alpha}_{\beta} v.
\end{align}
Here $\Gamma$ is the ten dimensional $\Gamma$ matrix intertwining the
vector representation with $S_+ \otimes S_-$ (appropriately
normalized).

For the L-operator, there are oscillators $b^\alpha$, $c_\beta$
satisfying $\{b^\alpha,c_\beta\} = \delta^{\alpha}_{\beta}$. The
L-operator takes the form
\begin{equation} 
	L(z) = e^{\Psi_\alpha b^\alpha} z^{\mu} e^{\Psi^\beta c_\beta}, 
\end{equation}
where $z^{\mu} v = z^{-4/3} v$, $z^{\mu} x_i = z^{2/3} x_i$,
$z^{\mu} \psi_\alpha = z^{-4/3} \psi_\alpha$.  We will not attempt to
give a more explicit form for $L$, leaving it to the interested reader
to compute further.

We have given a uniform presentation of the semiclassical L-operator associated to a minuscule 't Hooft line (with opposite charge at $\infty$). In each case, we find that the L-operator lives in the semiclassical limit of an oscillator algebra.  These oscillator representations generalize those studied in type A by \cite{Bazhanov:2010jq} and for type D  in \cite{Ferrando:2020vzk}.

\section{'t Hooft lines and the shifted Yangian}
\label{section_shifted}

It has recently become clear \cite{Frassek:2020lky} that Q-operators
arise from representations of a certain shifted Yangian. One can view
the starting point for this development as the work of Bazhanov et al
\cite{Bazhanov:2010ts}, reviewed above, where it was shown that the
Q-operator arises from an L-operator valued in an oscillator algebra.
These authors viewed this L-operator as providing an unusual
representation of the Yangian.  This, however, is an error in
terminology: the Yangian algebra has an RLL presentation where the
L-operator is required to have leading term, as a series in $1/z$, the
identity.  As we have seen, the L-operators giving rise to Q-operators
have leading term $z^{\mu}$, where $-\mu$ is the charge of the 't
Hooft line at $\infty$.

In \cite{Frassek:2020lky} (see also \cite{Zhang:2018sej}) it was shown
that the correct interpretation of the L-operators of
\cite{Bazhanov:2010ts} is that they provide a representation of the
antidominant shifted Yangian.

Here, we will review their work and explain how the antidominant
shifted Yangian appears naturally in four-dimensional Chern-Simons
theory with an 't Hooft line at $\infty$. We will also propose a
conjectural RLL description of the antidominant shifted Yangian for
all groups except $E_8$, generalizing the description of the shifted
Yangian given in type $A$ in \cite{Frassek:2020lky} and of the
ordinary Yangian in \cite{Costello:2018gyb}.

First, let us make a comment on the terminology. In the literature
\cite{Brundan:2004ca,MR3248988}, there is a shifted Yangian associated
to any coweight $\mu$ of the Lie algebra.  Somewhat confusingly, this
construction is \emph{not} covariant under the action of the Weyl
group.  This means that the shifted Yangian associated to two
coweights in the same Weyl orbit are \emph{not} isomorphic.  This is
in contrast to the gauge-theory constructions of this paper, where
everything is always covariant under the $G$ global symmetry.

Our construction will give an algebra associated to any coweight
$\eta$, which is isomorphic to the antidominant shifted Yangian
associated to the antidominant coweight in the Weyl orbit of $\eta$.

Let us now turn to our description of the shifted Yangian.  Let $G$ be
a simple and simply-connected group, and let $\eta$ be a (possibly
fractional) coweight.  Consider four-dimensional Chern-Simons theory
with an 't Hooft line of charge $\eta$ at $z = \infty$, $y = 0$.
Consider an arbitrary collection of line defects at generic values of
$z$, again at $y = 0$.  To absorb the monopole charge at $\infty$,
these can not be pure Wilson lines, but must be Wilson-'t Hooft lines.

Let us compactify this four-dimensional system to two dimensions along
the $\CP^1$ with coordinate $z$. In the region $y < 0$ and $y > 0$,
the result is the trivial theory.  This is because we are working with
a simple group, so that the only solution to the equations of motion
is the trivial one, and all fields are infinitely massive.  Along the
line $y = 0$ we find an effective quantum mechanical system with some
algebra of operators $A$.

We will show that $A$ acquires a homomorphism from an
infinite-dimensional algebra called the antidominant shifted Yangian.

To do this, we will follow the analysis of \cite{Costello:2018gyb},
and consider a Wilson line along $x = 0$ at some value of $z$ near
$\infty$.  We must specify the representation this Wilson line lives
in.  For the classical groups $SL_n$, $SO_n$, $Sp_n$ we take the
vector representation. For $G_2$ we take the $7$-dimensional
representation, $F_4$ the $\mbf{26}$, $E_6$ the $\mbf{27}$, and $E_7$
the $\mbf{56}$. The Wilson line corresponding to each of these
representations exists at the quantum level. Further, in each case the
group can be realized as the group of invertible matrices acting on
the chosen representation which preserves certain tensors. As
explained in \cite{Costello:2018gyb}, in each case the tensors lift to
junctions of Wilson lines.

Now consider placing an 't Hooft line at $y = 0$ at $z = \infty$, a
Wilson line at $x = 0$ at $z$ near $\infty$, and some arbitrary line
defects at other points in the $z$-plane, along $y = 0$.  As above,
algebra of operators on the horizontal line defects is $A$.  A
vertical line defect, with incoming state $\langle i |$ and outgoing
state $|j \rangle$, gives an operator $L^i_j(z) \in A$ as in Figure
\ref{figure_crossing2}.

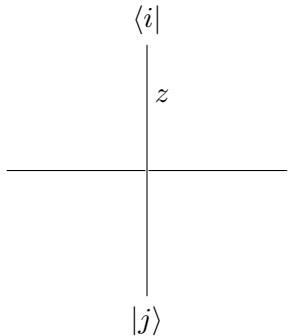
\begin{figure}[htbp]
\begin{center}
\begin{tikzpicture}
\node(In) at (0,4) {$\langle i|$};
\node(Out) at (0,0) {$|j \rangle$};
\node(HLeft) at (-2,2) {};
\node(HRight) at (2,2) {};
\draw (HLeft) to (HRight); 
\draw[very thick, color=white] (In) to (Out);
\draw (In) to (Out);  
\node at (0.2,3) {$z$};
\end{tikzpicture}
	\caption{\small{A vertical Wilson line, equipped with an incoming state $\langle i |$ and an outgoing state $|j \rangle$, gives rise to an operator $L^i_j(z) \in A$ in the quantum algebra of the  horizontal 't Hooft line. }}
\label{figure_crossing2}
\end{center}
\end{figure}

The presence of the 't Hooft line constrains the behaviour of
$L^i_j(z)$ near $z = \infty$.  In the absence of the 't Hooft line, we
have $L^i_j (z) = \delta^i_j 1_A + O( z^{-1})$.  In the presence of
the 't Hooft line at $\infty$, this is changed as follows.  Assume
that we have a basis of the representation $R$ where the Wilson line
lives in which the coweight $\eta$ of the 't Hooft line acts on the
$i$th basis element by $\eta_i$.  We ask that, near $z = \infty$,
there is an expansion of the form
\begin{equation} 
	L^i_j(z) =  \alpha^i_k(z)   \delta^k_l z^{-\eta_k} \beta^l_j(z) ,
\end{equation}
where $\alpha^i_k(z)$, $\beta^l_j(z)$ are series in $1/z$ with entries
in $A$, whose leading term is $\delta^i_j 1_A$.

We can think of this expression as saying that $L^i_j(z)$ looks, near
$\infty$, like the monopole singularity $z^{\mu}$, multiplied on
either side by ``perturbative'' contributions arising from gluon
exchange between the vertical Wilson line and whatever horizontal line
defects we have.  The gauge theory construction implies that, in
perturbation theory, this is the most general possible form of $L(z)$.
Indeed, we are clearly free to multiply on the left and the right by
the monopole singularity $z^{\mu}$, and further any small variation of
the monopole singularity $z^{\mu}$ coming from quantum effects can be
absorbed into a left or right multiplication by $\alpha(z)$,
$\beta(z)$ as above.

For example, with a coweight of charge $(\half,-\half)$ for $SL_2$,
the operator $L^i_j(z)$ will have expansion at $\infty$,
\begin{equation} 
	L(z) = G(z)\begin{pmatrix}
		z + l^1_1[-1] + l^1_1[0] z^{-1} + \dots & l^1_2[0] z^{-1} + \dots \\
	l^{2}_1[0] z^{-1} + \dots & 1 + l^2_2[1] z^{-2} + \dots 
	\end{pmatrix}	 ,
\end{equation}
where the overall normalizing factor $G(z)$ will be determined by the
requirement that the quantum determinant is one, as we will see later.
Thus, the generators of the shifted Yangian algebra we find are
$l^1_1[k]$, $k \ge -1$; $l^2_2[k]$, $k \ge 1$; and $l^1_2[k]$,
$l^2_1[k]$ for $k \ge 0$.

The gauge-theory construction implies that the generating function
$L(z)$ satisfies the RLL relation, as we have already discussed
semiclassically (see Figure \ref{figure_bent}). As explained in this
diagram, this is a consequence of the fact that Wilson lines crossing
above or below the 't Hooft line give the same elements of the algebra
$L$. The algebraic form of the relation is
\begin{equation}
\sum_{r,s} R^{i k}_{r s}(z - z') L^r_j(z') L^s_l(z) = \sum_{r,s} L^i_r(z) L^k_s(z')R^{rs}_{jl}(z -z') .
\end{equation}

It was shown in \cite{Frassek:2020lky} that, with gauge group $GL_n$,
this RLL relation, together with the boundary condition on the
behaviour of $L(z)$ at $\infty$, gives rise to the shifted Yangian for
$\mf{gl}_n$.

\subsection{Extra relations from other groups}
To understand the algebra we get for other groups, we need to
introduce extra relations that come from junctions of Wilson lines
\cite{Costello:2018gyb}. Suppose we have $k$ Wilson lines in the gauge
theory with gauge Lie algebra $\mf{g}$, in representations $V_1$,
$\dotsc$, $V_k$.  Suppose that there is a $\mf{g}$-invariant element
of the tensor product $V_1 \otimes \dots \otimes V_k$.  Classically,
this provides a junction between the Wilson lines, when they are all
placed at the same value of the spectral parameter.  For example, if
we have $n$ copies of the fundamental representation of $\mf{sl}_n$,
then the determinant defines an invariant vector in $V^{\otimes n}$
and so a (classical) junction between Wilson lines.

In \cite{Costello:2018gyb}, we analyzed when we can lift these
junctions to the quantum level.  The general story is involved: some
times you can, and sometimes there is an anomaly.  When the junctions
can be lifted to the quantum level, the spectral parameters of the
various lines have to have certain very special shifts.  For example,
for the determinant junction between $n$ fundamental Wilson lines of
$\mf{sl}_n$, there is a lift to a junction at the quantum level
provided that the spectral parameters of the $n$ lines are
$z$, $z+2\hbar$, $\dotsc$, $z + 2(n-1)\hbar$.

Consider a simple group presented as a matrix group preserving some
additional invariant tensors. For example, we can present $SO(n)$ as
the group of automorphisms of $\C^n$ preserving a symmetric pairing,
or $G_2$ as the group of automorphisms of $\C^7$ preserving an
antisymmetric cubic tensor and a non-degenerate symmetric pairing.
We can attempt to lift this presentation of the group to the the gauge
theory setting, by lifting the invariant tensors defining the group to
junctions between the defining Wilson line.

In \cite{Costello:2018gyb}, we succeeded in doing this for all groups
except $E_8$, with particular shifts.  The results are as follows
(here $\sh^\vee$ is the dual Coxeter number).
\begin{enumerate} 
\item For type $A$, the determinant junction lifts with the spectral
  parameters $z$, $z + 2\hbar$, $\dotsc$, $z + 2(n-1) \hbar$.
  
\item For $\mf{so}_n$, $\mf{sp}_n$ the junction defined by the
  symmetric or antisymmetric pairing lifts to the quantum level with
  spectral parameters $z$, $z + \hbar \sh^\vee$.
  
\item For $G_2$, the group is defined by preserving the quadratic and
  cubic invariant tensors on $\C^7$. The corresponding junctions exist
  at the quantum level with shifts $z$, $z + \hbar \sh^\vee$ for the
  quadratic vertex and $z$, $z+\frac{2}{3} \hbar \sh^\vee$,
  $z + \frac{4}{3} \hbar \sh^\vee$ for the cubic vertex.

\item The group $F_4$ is the automorphisms of $\C^{26}$ preserving a
  symmetric pairing and a cubic tensor. The vertices lift to the
  quantum level with spectral parameters $z$, $z + \hbar \sh^\vee$ for
  the quadratic vertex and $z$, $z+\frac{2}{3} \hbar \sh^\vee$,
  $z + \frac{4}{3} \hbar \sh^\vee$ for the cubic vertex.

\item For $E_6$, the group is the automorphisms of $\C^{27}$ which
  preserve a cubic tensor.  This vertex lifts to the quantum level
  with spectral parameters $z$, $z+\frac{2}{3} \hbar \sh^\vee$,
  $z + \frac{4}{3} \hbar \sh^\vee$ for the cubic vertex.

\item For $E_7$, the group is the automorphisms of $\C^{56}$
  preserving a symplectic pairing and a quartic tensor.  These
  vertices lift to the quantum level with the shifts $z$,
  $z + \hbar \sh^\vee$ for the quadratic vertex and $z$,
  $z+\frac{1}{2} \hbar \sh^\vee$, $z + \hbar \sh^\vee$,
  $z + \frac{3}{2} \hbar \sh^\vee$ for the quartic vertex.
\end{enumerate}
(We do not have a similar construction for $E_8$, because the smallest
representation of $E_8$, which is the adjoint, does not lift to a
quantum Wilson line).

The existence of the vertex between Wilson lines implies extra
relations, beyond the RLL relation, on the universal algebra that acts
on a line defect when we have an 't Hooft line at $\infty$. The extra
relation is sketched in Figure \ref{figure_identity}: we can
contemplate moving an 't Hooft line past a network of Wilson
lines. This will have no effect, by topological invariance of the
theory in the $x$-$y$ plane, leading to a relation among the algebra
elements in the series $L^i_j(z)$.
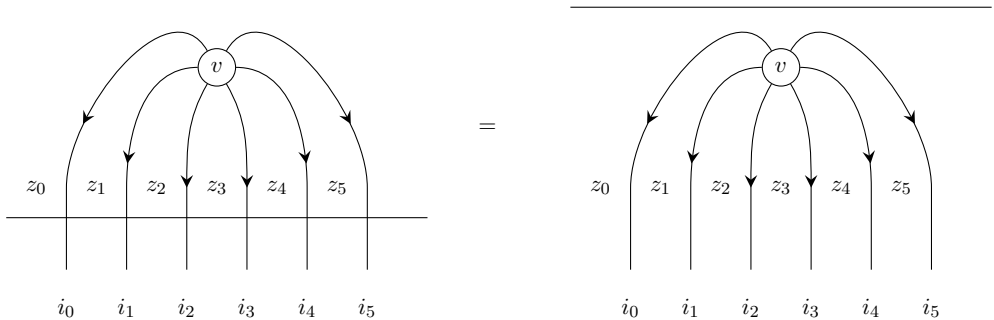
\begin{figure}
\begin{center}
\begin{tikzpicture}
\begin{scope}[scale=0.8, every node/.style={transform shape}]
\node[draw, circle] (central) at (0:0) {$v$};
\node at (-3,-2) {$z_0$};
\node at (-2,-2) {$z_1$};
\node at (-1,-2) {$z_2$};
\node at (0,-2) {$z_3$};
\node at (1,-2) {$z_4$};
\node at (2,-2) {$z_5$};
\node at (-2.5,-4) {$i_0$};
\node at (-1.5,-4) {$i_1$};
\node at (-0.5,-4) {$i_2$};
\node at (0.5,-4) {$i_3$};
\node at (1.5,-4) {$i_4$};
\node at (2.5,-4) {$i_5$};
\node(N1) at (-2.5,-3.5){};
\node(N2) at (-1.5,-3.5){}; 
 \node(N3) at (-0.5,-3.5){}; 
\node(N4) at (0.5,-3.5){};
\node(N5) at (1.5,-3.5){};
\node(N6) at (2.5,-3.5){};
\draw[-<-] (N1) to (-2.5,-2) to [out=90,in=120] (central);
\draw[-<-] (N2) to (-1.5,-2) to [out=90,in=180] (central);
\draw[-<-] (N3) to (-0.5,-2) to [out=90,in=240] (central);
\draw[-<-] (N4) to (0.5,-2) to [out=90,in=300] (central);
\draw[-<-] (N5) to (1.5,-2) to [out=90,in=0] (central);
\draw[-<-] (N6) to (2.5,-2) to [out=90,in=60] (central);
\draw (-3.5, -2.5) to (3.5,-2.5);
\node at (4.5,-1) {$=$};
\end{scope}

\begin{scope}[shift={(7.5,0)},scale=0.8,  every node/.style={transform shape}]
\node[draw, circle] (central) at (0:0) {$v$};
\node at (-3,-2) {$z_0$};
\node at (-2,-2) {$z_1$};
\node at (-1,-2) {$z_2$};
\node at (0,-2) {$z_3$};
\node at (1,-2) {$z_4$};
\node at (2,-2) {$z_5$};
\node at (-2.5,-4) {$i_0$};
\node at (-1.5,-4) {$i_1$};
\node at (-0.5,-4) {$i_2$};
\node at (0.5,-4) {$i_3$};
\node at (1.5,-4) {$i_4$};
\node at (2.5,-4) {$i_5$};
\node(N1) at (-2.5,-3.5){};
\node(N2) at (-1.5,-3.5){}; 
 \node(N3) at (-0.5,-3.5){}; 
\node(N4) at (0.5,-3.5){};
\node(N5) at (1.5,-3.5){};
\node(N6) at (2.5,-3.5){};
\draw[-<-] (N1) to (-2.5,-2) to [out=90,in=120] (central);
\draw[-<-] (N2) to (-1.5,-2) to [out=90,in=180] (central);
\draw[-<-] (N3) to (-0.5,-2) to [out=90,in=240] (central);
\draw[-<-] (N4) to (0.5,-2) to [out=90,in=300] (central);
\draw[-<-] (N5) to (1.5,-2) to [out=90,in=0] (central);
\draw[-<-] (N6) to (2.5,-2) to [out=90,in=60] (central);
\draw (-3.5, 1) to (3.5,1);
\end{scope}
\end{tikzpicture}
\end{center}
\caption{\small{Topological invariance allows us to move the position
    of the horizontal 't Hooft line line defects past the collection
    of Wilson lines attached to the vertex without effecting the
    result\label{figure_identity} }}
\end{figure}

For example, for $\mf{sl}_n$, the relation is the quantum determinant relation 
\begin{equation} 
	 \sum_{k_r} \op{Alt}(k_0,\dots,k_{N-1}) L^{k_0}_{0}(z ) L^{k_1}_{1}(z + 2 \hbar  )\cdots L^{k_{N-1}}_{N-1}(z + 2(N-1) \hbar ) = 1. \label{equation_qdet}  
\end{equation}
For $\mf{so}_n$ or $\mf{sp}_{2n}$,  the extra relation we get is 
\begin{equation} 
	\omega_{kl}	L^k_i(z) L^l_j(z + \hbar \sh^\vee ) = \omega_{ij} ,
\end{equation}
where $\omega_{ij}$ is the invariant pairing on the vector representation (symmetric or antisymmetric as appropriate).

For the other groups, we have a similar presentation. For instance, for $E_7$, we have one extra relation which is 
\begin{equation} 
	\Omega_{ijkl} L^i_m (z) L^j_n(z + \half \hbar \sh^vee) L^k_o (z + \hbar \sh^\vee) L^l_p (z + \tfrac{3}{2} \hbar \sh^\vee ) = \Omega_{mnop}. 
\end{equation}

In each case, the RLL relation,  the extra relations coming from vertices between Wilson lines, and the boundary condition at $z = \infty$, define an associative algebra $Y^{\mu}(\g)$.
\begin{conjecture}
	The algebra $Y^{\mu}(\g)$ is isomorphic to the antidominant shifted Yangian. 
\end{conjecture}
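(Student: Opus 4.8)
The plan is to prove the isomorphism by matching the RLL presentation defining $Y^{\mu}(\g)$ against a Drinfeld-type presentation of the antidominant shifted Yangian, taking the type $A$ case as the base. In type $A$ the identification of the RLL algebra (the RLL relation together with the boundary condition at $z = \infty$) with the shifted Yangian for $\mf{gl}_n$ is already established in \cite{Frassek:2020lky}, and the quantum determinant relation \eqref{equation_qdet} is exactly the vertex relation in that case. For the other classical series one is in a closely analogous situation: the RLL relation with the correct three-term R-matrix built from the quadratic Casimir of $\mf{so}_n$ or $\mf{sp}_n$ is the defining relation of the \emph{extended} Yangian, and the vertex relation $\omega_{kl} L^k_i(z) L^l_j(z + \hbar \sh^\vee) = \omega_{ij}$ is precisely the symmetry relation that cuts the extended Yangian down to $Y(\mf{so}_n)$ or $Y(\mf{sp}_n)$. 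The role of the coweight $\eta$ through the boundary condition $L^i_j(z) = \alpha^i_k(z)\, z^{-\eta_k}\delta^k_l\, \beta^l_j(z)$ is to implement the shift. So for the classical groups the task is to upgrade the known unshifted RTT-versus-Drinfeld dictionary to the shifted setting.

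To make the comparison explicit I would Gauss-decompose the generating matrix $L(z)$ near $z = \infty$, writing it as a product of a lower-unitriangular, a block-diagonal, and an upper-unitriangular factor whose entries are power series in $z^{-1}$ with coefficients in $A$. The diagonal blocks produce the Cartan generators $H_i^{(r)}$ and the off-diagonal blocks the raising and lowering generators $E_i^{(r)}$, $F_i^{(r)}$ of the putative shifted Yangian, with the coweight $\eta$ controlling the lowest allowed $r$ in each family and thereby reproducing the shift pattern. The RLL relation then translates into the Drinfeld relations, while the vertex/symmetry relation removes the extra central currents of the extended Yangian. For $E_6$ and $E_7$ there is no prior matrix-Yangian literature, so here the Gauss decomposition of $L(z)$ in the defining representation ($\mbf{27}$ or $\mbf{56}$) together with the cubic/quartic vertex relations is the genuinely new input, and it must be shown to yield exactly the same Drinfeld generators and relations.

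The main obstacle is flatness. One must show that $Y^{\mu}(\g)$ is neither too large nor too small, i.e.\ that its associated graded for the filtration in which the RLL relation degenerates to the semiclassical Poisson bracket of Section~\ref{section_phase} is exactly the coordinate ring of the classical phase space computed there. By the results cited in that section, this phase space is the Coulomb branch of the relevant quiver gauge theory \cite{Braverman:2016pwk}, whose Poisson structure already matches ours \cite{Kamnitzer2020} and whose natural quantization is the shifted Yangian. Granting this, it suffices to exhibit a single filtered homomorphism in one direction: constructing an $L$-operator inside the quantized Coulomb branch that satisfies RLL, the vertex relations, and the boundary condition gives a map $Y^{\mu}(\g) \to$ shifted Yangian, and any such map that recovers the classical identification on associated graded is automatically an isomorphism by a PBW argument. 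The hard part is thus reduced to (i) verifying that these $L$-operator entries generate the shifted Yangian and (ii) bounding the size of $Y^{\mu}(\g)$ from above directly from its defining relations.

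I expect step (ii) to be the real difficulty, especially for $E_6$ and $E_7$, where there is no classical template and one cannot simply import the spanning argument from the $\mf{gl}_n$ case. A subsidiary obstacle is checking that the particular spectral-parameter shifts $z, z + \tfrac{1}{2}\hbar\sh^\vee, \dots$ entering the vertex relations are exactly those needed for the symmetry relation to define the shifted Yangian rather than a degeneration of it; these shifts are forced by the quantum lift of the invariant tensors in \cite{Costello:2018gyb}, so this should follow once the Gauss decomposition is in hand. The cleanest route, and the one I would pursue first, is to take the Coulomb-branch identification as given and reduce the conjecture to the generation-and-flatness statement; a purely algebraic proof of flatness from the RLL and vertex relations alone, which would make the result independent of the Coulomb-branch machinery, is where I expect the bulk of the remaining work to lie.
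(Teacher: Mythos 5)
This statement is left as a conjecture in the paper, and your proposal does not close it either: it terminates at exactly the obstruction the paper itself identifies. The paper's entire justification is that the isomorphism holds at the classical level by the results of \cite{Kamnitzer2020}, and that ``a sufficiently strong uniqueness theorem for the shifted Yangian as a quantization of its classical limit \dots will prove that our algebra $Y^{\mu}(\g)$ is isomorphic to the shifted Yangian, but such a result seems not to be currently available.'' Your step (ii) --- bounding the size of $Y^{\mu}(\g)$ from above directly from the RLL relation, the vertex relations, and the boundary condition at $z=\infty$, so that the associated graded is no larger than $\Oo$ of the classical phase space --- is precisely that missing uniqueness/flatness statement, merely rephrased in PBW language. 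You correctly flag it as ``where the bulk of the remaining work'' lies, but that means the proposal is a proof strategy, not a proof: everything downstream of the flatness assumption (the one-directional map into the quantized Coulomb branch, the classical match of Poisson structures via \cite{Braverman:2016pwk} and \cite{Kamnitzer2020}, the filtered-isomorphism conclusion) is sound and is essentially the same reduction the paper performs, but the load-bearing step is asserted rather than established.

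Two smaller cautions on the parts you do sketch. First, for the classical series the passage from the RLL-plus-symmetry presentation to a Drinfeld presentation via Gauss decomposition is only established in the unshifted case; in the shifted setting the lowest admissible modes of $E_i^{(r)}$, $F_i^{(r)}$ depend on $\eta$ in a way that must be re-derived, not imported, and the claim that the specific shifts $z, z+\hbar\sh^\vee,\dots$ in the vertex relations are exactly the ones defining (rather than degenerating) the shifted Yangian is itself something to prove. Second, for $E_6$ and $E_7$ there is, as you note, no prior RLL model at all, so even the unshifted base case of your induction is open there; the type-$A$ result of \cite{Frassek:2020lky} gives you a genuine theorem only in one family. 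None of this makes your route wrong --- it is arguably the natural attack, and it matches the paper's own expectations --- but as written it establishes the conjecture in no case beyond type $A$, where it was already known.
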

At the classical level, this holds because of the results of
\cite{Kamnitzer2020}.  In general, a sufficiently strong uniqueness
theorem for the shifted Yangian as a quantization of its classical
limit (compatible with additional structures such as coproducts) will
prove that our algebra $Y^{\mu}(g)$ is isomorphic to the shifted
Yangian, but such a result seems not to be currently available.

\subsection{Coproducts}
It is known \cite{MR3761996} that the shifted Yangian admits left and
right coproducts relating it to the ordinary Yangian:
\begin{equation}
	\begin{split}
		\tr_L\colon Y^{\mu}(\g) &\to Y(\g)\otimes Y^{\mu}(\g), \\
		\tr_R\colon Y^{\mu}(\g) & \to Y^{\mu}(\g) \otimes Y(\g).
	\end{split}
\end{equation}
The left and right coproducts commute with each other and are co-associative, making the category of $Y^{\mu}(\g)$-modules into a bimodule category over the category of $Y(\g)$-modules.  

From the field theory perspective, the category of $Y^{\mu}(\g)$
modules is the category of line defects\footnote{At various points we
  need to distinguish between analytically-continued line defects,
  where the one-dimensional system has an algebra of operators but not
  a Hilbert space; and full line defects, where there is a Hilbert
  space too. Here for simplicity we are discussing full line defects.}
in the bulk with a parallel 't Hooft line of charge $\mu$ at $\infty$.

The coproduct tells us that we can fuse an ordinary line defect coming
from the left or the right with a line defect parallel to the 't Hooft
line at $\infty$.

From our definition of $Y^{\mu}(\g)$ and the presentation of $Y(\g)$
presented in \cite{Costello:2018gyb}, the left and right coproducts
are easy to define.  We present $Y(\g)$ as being generated by the
coefficients of an L-operator $L^0(z)$ which near $\infty$ goes like
$1 + O(1/z)$.  These are subject to the RLL relation, as well as the
extra relations coming from junctions between line defects.

Similarly, $Y^{\mu}(\g)$ is presented as above as generated by the
coefficients of $L^{\mu}(z)$ subject to the RLL relation as well as
the relations arising from junctions of Wilson lines.

To give a coproduct
\begin{equation} 
	\tr_R\colon Y^{\mu}(\g) \to Y^{\mu}(\g) \otimes Y(\g)  
\end{equation}
we need to give an L-operator
$\tr_R^{\mu}(z) \in Y^{\mu}(g)\otimes Y(\g)$ satisfying the relations
defining $Y^{\mu}(\g)$.  Since $Y(\g)$ and $Y^{\mu}(\g)$ are defined
in terms of the coefficients of the L-operators $L^0(z)$, $L^{\mu}(z)$
we must write $\tr_R L^{\mu}(z)$ in terms of $L^0(z)$ and
$L^{\mu}(z)$.

We calculate $\tr_R L^{\mu}(z)$ by considering a vertical Wilson line passing two horizontal line defects representing representations of $Y(\g)$ and $Y^{\mu}(\g)$, as in Figure \ref{figure_coproduct}.

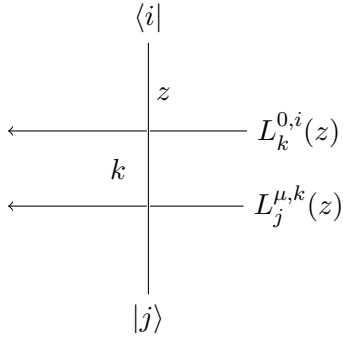
\begin{figure}[htbp]
\begin{center}
\begin{tikzpicture}
\node(In) at (0,4) {$\langle i|$};
\node(Out) at (0,0) {$|j \rangle$};
	\node(mid) at (-0.4,2){$k$};
\node(HLeft) at (-2,2.5) {};
	\node(HRight) at (2,2.5){$L^{0,i}_k(z)$}; 
\node(HLeft2) at (-2,1.5) {};
	\node(HRight2) at (2,1.5) {$L^{\mu,k}_j(z)$};
\draw [<-] (HLeft) to (HRight);
 \draw [<-](HLeft2) to (HRight2); 
\draw[very thick, color=white] (In) to (Out);
\draw (In) to (Out);  
\node at (0.2,3) {$z$};
\end{tikzpicture}
\caption{\small{The configuration of line defects associated to the coproduct. Here $k$ is the intermediate state.}}
\label{figure_coproduct}
\end{center}
\end{figure}

From this diagram, it is clear that the coproduct must be given by
\begin{equation} 
	\tr_R  L^{\mu,i}_j(z)  = L^{\mu,i}_k(z) L^{0,k}_j (z) ,
\end{equation}
where we treat both sides as series in $1/z$ and identify coefficients. 

In order to show that this defines a homomorphism from $Y^{\mu}(\g)$ to $Y^{\mu}(\g) \otimes Y(\g)$, we need to show that $\tr_R L^{\mu}(z)$ as defined above satisfies the relations defining $Y^{\mu}(\g)$. That is, $\tr_R L^{\mu}(z)$ must satisfy the RLL relation, the boundary behaviour at $z = \infty$, and the extra relations coming from vertices between Wilson lines, assuming that $L^0(z)$ and $L^{\mu}(z)$ do. 

This is not hard to check from the field theory picture. For example, in Figure \ref{figure_coproduct_relation} we present the diagram indicating why the coproduct $\tr_R L^{\mu}(z)$ respects the relations coming from vertices between Wilson lines, assuming that $L^0(z)$ and $L^{\mu}(z)$ do. 

\begin{figure}
\begin{center}
\begin{tikzpicture}
\begin{scope}[scale=0.5, every node/.style={transform shape}]
\node[draw, circle] (central) at (0:0) {};
\node(N1) at (-2.5,-3.5){};
\node(N2) at (-1.5,-3.5){}; 
 \node(N3) at (-0.5,-3.5){}; 
\node(N4) at (0.5,-3.5){};
\node(N5) at (1.5,-3.5){};
\node(N6) at (2.5,-3.5){};
\draw  (N1) to (-2.5,-2) to [out=90,in=120] (central);
\draw  (N2) to (-1.5,-2) to [out=90,in=180] (central);
\draw  (N3) to (-0.5,-2) to [out=90,in=240] (central);
\draw  (N4) to (0.5,-2) to [out=90,in=300] (central);
\draw  (N5) to (1.5,-2) to [out=90,in=0] (central);
\draw  (N6) to (2.5,-2) to [out=90,in=60] (central);
\draw (-3.5, -2.5) to (3.5,-2.5);
\draw (-3.5, -2) to (3.5,-2);
\node at (4.5,-1) {$=$};
\end{scope}

\begin{scope}[shift={(5,0)},scale=0.5,  every node/.style={transform shape}]
\node[draw, circle] (central) at (0:0) {};
\node(N1) at (-2.5,-3.5){};
\node(N2) at (-1.5,-3.5){}; 
 \node(N3) at (-0.5,-3.5){}; 
\node(N4) at (0.5,-3.5){};
\node(N5) at (1.5,-3.5){};
\node(N6) at (2.5,-3.5){};
\draw  (N1) to (-2.5,-2) to [out=90,in=120] (central);
\draw  (N2) to (-1.5,-2) to [out=90,in=180] (central);
\draw  (N3) to (-0.5,-2) to [out=90,in=240] (central);
\draw  (N4) to (0.5,-2) to [out=90,in=300] (central);
\draw  (N5) to (1.5,-2) to [out=90,in=0] (central);
\draw  (N6) to (2.5,-2) to [out=90,in=60] (central);
\draw (-3.5, 2) to (3.5,2);
	\draw (-3.5, -2.5) to (3.5,-2.5);
\node at (4.5,-1) {$=$};
\end{scope}

\begin{scope}[shift={(10,0)},scale=0.5,  every node/.style={transform shape}]
	\node[draw, circle] (central) at (0:0) {};
\node(N1) at (-2.5,-3.5){};
\node(N2) at (-1.5,-3.5){}; 
 \node(N3) at (-0.5,-3.5){}; 
\node(N4) at (0.5,-3.5){};
\node(N5) at (1.5,-3.5){};
\node(N6) at (2.5,-3.5){};
\draw  (N1) to (-2.5,-2) to [out=90,in=120] (central);
\draw  (N2) to (-1.5,-2) to [out=90,in=180] (central);
\draw  (N3) to (-0.5,-2) to [out=90,in=240] (central);
\draw  (N4) to (0.5,-2) to [out=90,in=300] (central);
\draw  (N5) to (1.5,-2) to [out=90,in=0] (central);
\draw  (N6) to (2.5,-2) to [out=90,in=60] (central);
\draw (-3.5, 2) to (3.5,2);
	\draw (-3.5, 1) to (3.5,1);
\end{scope}

\end{tikzpicture}
\end{center}
\caption{\small{Here, we see that the relations in the algebra coming from vertices joining Wilson lines are automatically compatible with the coproduct, by topological invariance of the system.  Topological invariance allows us to move the position of the horizontal 't Hooft line  line defects past the collection of Wilson lines attached to the vertex without effecting the result\label{figure_coproduct_relation} }}
\end{figure}
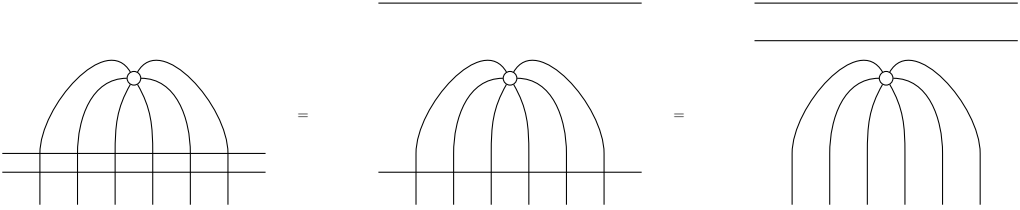

\subsection{More general coproducts}
We can, of course, study more general coproducts, by trying a formula
\begin{equation} 
	\tr L^{\mu+\mu'}(z) \overset{?}{=} L^{\mu}(z) L^{\mu'}(z). 
\end{equation}
It is not at all obvious that something like this will work, because
the expression on the right hand side may not have the correct
behaviour at $z = \infty$. However, we can check that if $\mu$, $\mu'$
both have the property that $\ip{\mu,\alpha} \ge 0$ ,
$\ip{\mu',\alpha } \ge 0$ for all positive roots $\alpha$, then
$\tr L^{\mu + \mu'}(z)$ as defined by this equation has the behaviour
at $z = \infty$ required of an L-operator of charge $\mu + \mu'$.

To see this, we note that we can write
\begin{equation} 
	L^{\mu}(z) L^{\mu'}(z) = A(z) z^{\mu} B(z) z^{\mu'} C(z) ,
\end{equation}
where $A$, $B$, $C$ are matrices acting on the representation where
our Wilson line lives which look like $1 +O(1/z)$. We can write
\begin{equation} 
  B(z) = e^{\sum_{\alpha< 0} b_\alpha(z)} e^{ b_0(z)}  e^{ \sum_{\alpha> 0} b_\alpha(z)}  ,
\end{equation}
where $b_{\alpha}(z)$ is in the $\alpha$ root space, and $b_0(z)$ is
in the Cartan.

Thus, 
\begin{equation} 
  L^{\mu}(z) L^{\mu'}(z) = A(z) e^{\sum_{\alpha < 0} z^{\ip{\mu,\alpha}} b_\alpha(z) } z^{\mu} z^{\mu'} e^{b_0(z)} e^{-\sum_{\alpha > 0} z^{\ip{\mu',\alpha} } b_\alpha(z) }  C(z) .
\end{equation}
We have $\ip{\mu,\alpha} \le 0$ if $\alpha < 0$ and
$\ip{\mu',\alpha} \ge 0$ if $\alpha > 0$.  Therefore
$A(z) e^{\sum_{\alpha < 0} z^{\ip{\mu,\alpha}} b_\alpha(z) }$ and
$e^{b_0(z)} e^{-\sum_{\alpha > 0} z^{\ip{\mu',\alpha} } b_\alpha(z) }
C(z) $ are both of the form $1 + O(1/z)$, so that
$L^{\mu}(z) L^{\mu'}(z)$ is of the form required to define an
L-operator of charge $\mu + \mu'$.

As in the case when $\mu' = 0$, it is easy to see that
$\tr L^{\mu + \mu'}(z)$ satisfies the RLL relation and the relations
coming from vertices between Wilson lines.

One can ask if this coproduct matches the ``standard'' one on
antidominant shifted Yangians \cite{Brundan:2004ca, MR3761996,
  MR3248988}.  For type A, this was shown in \cite{Frassek:2020lky},
but it appears to be unknown in general (although surely true).

\section{'t Hooft operators and Coulomb branches of three-dimensional
  $\mathcal{N}=4$ gauge theories}
\label{sec:coulomb}

In section \ref{sec:phasespace}, we have discussed the classical phase
space for four-dimensional Chern-Simons theory in the presence of an
't Hooft operator of charge $\mu$ at $0$ and $\eta$ at infinity. We
described the phase space as the space of element in $G((z))$ which
satisfy certain properties at $0$ and $\infty$.

The quantum version of such a description would be to describe the
operator algebra $A_\mu^\eta$ for the quantum mechanical system
arising from the quantization of the phase space, as well as an
algebra morphism $Y^\eta \to A_\mu^\eta$ describing the effective
coupling between the system and the internal degrees of freedom of a
generic transverse Wilson line.

In this section, we will propose an answer to this question by a
somewhat circuitous route. It remains to show that this answer
actually coincides with the result of a perturbative calculation in
the presence of the 't Hooft line, but we show that the answer passes
important consistency checks.

The proposal is simple: the phase space coincides with the Coulomb
branch of vacua of certain three-dimensional ${\cal N}=4$ ADE quiver gauge theories,
which admit a natural quantization via $\Omega$-deformation. This
deformation gives a ``quantum Coulomb branch algebra'' which is known
to admit an algebra morphism from $Y^\eta$, which is nicely compatible
with coproducts in a way we will review \cite{Braverman:2016pwk}. It
is our candidate for $A_\mu^\eta$.

It is worth pointing out that the existence of the algebra morphism
from $Y^\eta$ and the compatibility with the coproducts are {\it not}
obvious from the Coulomb branch description of the algebra and appear
rather miraculous. A proper justification of our proposal would thus
explain the appearance of these surprising extra structures.

Recall that four-dimensional Chern-Simons theory can be embedded into
a physical gauge theory in several different ways, related by string
dualities \cite{Costello:2018txb, Ashwinkumar:2018tmm,
  Nekrasov:String-Math-2017}.  For our purposes, it is most useful to
focus on the treatment of \cite{Costello:2018txb}, where the theory is
identified with an $\Omega$-deformation of six-dimensional
${\cal N}=(1,1)$ super Yang-Mills theory with gauge algebra
$\mathfrak{g}$. In that context, 't Hooft lines in the
four-dimensional Chern-Simons theory are naturally the image of
three-dimensional BPS 't Hooft defects in the physical theory. Such
defects preserve a three-dimensional ${\cal N}=4$ supersymmetry
subalgebra, and the $\Omega$-deformation of the ambient theory is
accompanied by a $\Omega$-deformation of the defect theory.

The six-dimensional ${\cal N}=(1,1)$ super Yang-Mills theory is IR
free and requires a UV completion. An 't Hooft defect has a somewhat
hybrid status: although it is defined using the IR free degrees of
freedom, the singularities associated to monopole bubbling may signal
the need for an extra UV input. The situation is somewhat analogous to
the treatment of instanton particles in five-dimensional maximally
supersymmetric super Yang-Mills theory: although much of the phase
space consists of semiclassical solitons described by instanton
solutions of the IR free gauge theory, a proper treatment of the
singularities associated to zero-size instantons require some UV
input.

The analogy becomes even sharper if we add a couple of directions to
space-time, and discuss instanton membranes in seven-dimensional ADE
super Yang-Mills theory. If the gauge theory is realized at an ADE
singularity in M-theory, then small instantons are M2-branes which can
leave the singularity along the ADE direction. This UV process is not
visible in the naive IR gauge theory description, but it is fully
captured by adding the information of the low-energy supersymmetric
quantum field theory which lives on zero-size instantons: a
three-dimensional ${\cal N}=4$ affine ADE quiver gauge theory with no
flavours. This theory has a Coulomb branch which reproduces the smooth
instanton moduli space, as well as a Higgs branch which describes the
motion of M2-branes away from the singularity.

In the case of 't Hooft defects in six-dimensional ${\cal N}=(1,1)$ super
Yang-Mills theory, we can ask for a low energy description of an 't
Hooft defect of charge $\mu$ ``covered'' by bubbling gauge
configurations leading to a charge $\eta$ at infinity. We want to
claim that this will be a three-dimensional ${\cal N}=4$ ADE quiver
gauge theory whose Coulomb branch reproduces the space of
supersymmetric gauge configurations with these charges, i.e.\ the
Bogomolny moduli space which coincides with the four-dimensional
Chern-Simons phase space discussed before. The Higgs branch of the
theory, which is only present when the Coulomb branch has
singularities, will describe dynamical processes which are invisible
in the six-dimensional super Yang-Mills description.

The ADE quiver gauge theory is characterized by two collections of
non-negative integers: the ranks $N_i$ of the $U(N_i)$ gauge groups at
each node and the numbers $M_i$ of fundamental flavours attached to
each node.  The $M_i$ encode the charge $\mu = \sum_i M_i w_i$ of the
't Hooft operator.  The Weyl orbit of the charge at infinity $\eta$ is
encoded by the {\it imbalance} at each node
\begin{equation}
  \delta_i = \sum_j C_{ij} N_j - M_i,
\end{equation}
where $C_{ij}$ is the Cartan matrix of the ADE diagram. The correct
phase space is reproduced only if the imbalance at each node is
non-negative.\footnote{In order for the quantized Coulomb branch to
  admit a trace and/or interesting weight modules, the imbalance
  should not be too positive. We will come back to this later.}

We should also mention that the theory has a collection of mass
deformation parameters, one for each individual flavour. Turning on
the mass parameters corresponds to fragmenting the 't Hooft defect
into a collection of defects of smaller magnetic charge.  By sending a
mass parameter at infinity one can send an 't Hooft line to infinity
as well.

When we turn on the $\Omega$-deformation, the masses are redefined by
amounts proportional to $\hbar$ and the notion of ``mass equal to
$0$'' is ambiguous. In a sense, once we quantize the system the 't
Hooft operators are always fragmented into pieces of minimal charge.

We will not attempt to give a full derivation for this effective
worldvolume theory. There are several equivalent UV completions of six-dimensional
super Yang-Mills theory in string theory. 't Hooft operators for some
basic charges can be engineered with the help of extra branes, but the
details are cumbersome. Obtaining more general charges may require
further manipulations, possibly done after restricting the system to
an intermediate UV completion such as ${\cal N}=(1,1)$ little string
theory \cite{Nekrasov:String-Math-2017}.

Notice that this type of analysis is subtle even for Wilson
lines. Naively, supersymmetric Wilson lines exist in six-dimensional
super Yang-Mills theory for any representation, but we know that only
those that can be lifted to Yangian representations are available in
four-dimensional Chern-Simons theory. This must mean that the ``bad''
lines, when placed in the $\Omega$-deformation, do not admit
counterterms which are required to preserve supersymmetry quantum
mechanically.  In any case, they must be absent in the UV
completion. This is challenging to verify directly.


From this point of view, the fact that the $\Omega$-deformation of
``antidominant'' three-dimensional ${\cal N}=4$ ADE quiver gauge theories always
yields a gauge-invariant line defect in four-dimensional Chern-Simons
theory is rather striking.

\subsection{ADE Quantum Coulomb branches}
An $\Omega$-deformation reduces a three-dimensional ${\cal N}=4$ gauge
theory to a one-dimensional system governed by an operator algebra
denoted as the quantum Coulomb branch algebra. Mathematically, the
quantum Coulomb branch algebra is given by the BFN construction
\cite{Braverman:2016wma}.  The algebra is a quantization of the
algebra of holomorphic functions on the Coulomb branch of the
three-dimensional ${\cal N}=4$ gauge theory.  Besides the
quantization/$\Omega$-deformation parameter $\hbar$, the algebra
depends on certain equivariant parameters (aka masses), which become
the position of 't Hooft lines in the $z$-plane in the
four-dimensional Chern-Simons picture.

The quantum Coulomb branch algebra has a complicated mathematical
definition, which encodes the properties of generic BPS monopole
operators in the three-dimensional gauge theory. Some elements in the algebra have a
particularly simple definition:
\begin{itemize}
\item Some infinite collection of $H^{(n)}_i$ generators built from the vectormultiplet scalars, with no monopole charge.
\item Some infinite collection of $E^{(n)}_i$ generators of minimal positive monopole charge at the $i$-th node.
\item Some infinite collection of $F^{(n)}_i$ generators of minimal negative monopole charge at the $i$-th node.
\end{itemize}
For the ADE quivers relevant to the description of the 't Hooft
defects, these generators are known to satisfy the commutation
relations for the antidominant shifted Yangian $Y(\mathfrak{g})^\eta$
\cite{Braverman:2016pwk} and thus give an algebra morphism from
$Y(\mathfrak{g})^\eta$ to the quantum Coulomb branch algebra.

This morphism is actually surjective, so that the quantum Coulomb
branch algebra is a truncation of the Yangian.  The specific form of
the truncation depends on the mass parameters $m_i$.  Alternatively,
one can present the Yangian as a quantization of the
Beilinson-Drinfeld affine Grassmannian associated to points $m_i$ in
$\mathbb{C}$, and truncate that in a more canonical way.

The corresponding classical statement is that there are holomorphic
functions $H^{(n)}_i$, $E^{(n)}_i$, $F^{(n)}_i$, together with other
functions defined by their Poisson brackets, which can be assembled
together into a formal generating series $g(z)$ which satisfies the
classical limit of the shifted Yangian commutation relations and lies
in a certain submanifold determined by the $N_i$ and $m_i$ parameters.

This is precisely the phase space discussed in section
\ref{section_phase}. In the language we used in that section, $g(z)$
represents the monodromy of a Wilson line passing elementary 't Hooft
operators at positions $z_i = m_i$ and charges $\rho_i$ determined by
the node of the quiver the flavour is attached at.

Quantum mechanically, the analogue of the classical holonomy $g(z)$
evaluated in some representation $r$ is the collection of R-matrices
$g_{R}(z)$ between a generic representation of the Yangian and a
finite-dimensional representation $R$ with spectral parameter $z$.

\subsection{Coproduct}

The coproduct for the Yangian is a quantization of the simple
classical relation
\begin{equation}
  \tr g(z)=g_1(z)\otimes g_2(z) ,
\end{equation}
where we are using the group composition law. As we have seen in
section \ref{section_shifted}, this expression defines the coproduct
for the (shifted or unshifted) Yangian at the quantum level as well,
as long as we treat $g(z)$ as a matrix acting in an appropriate
representation.

In the context of Bogomolny equations, we can consider a situation
where the collection of Dirac singularities is split into two
sub-collections whose charge still lies in the root lattice. If we
separate the two collections by a large amount in the $x$-direction,
we can produce a BPS solution for the composite system by combining
BPS solutions for the subcollections.  This leads precisely to the
above coproduct.

This assertion is known to hold quantum mechanically for the quantized
Coulomb branches: there is a map from the quantized Coulomb branch for
ranks $N_i + N_i'$ and flavours $M_i + M'_i$ to the tensor product of
the quantized Coulomb branches for ranks $N_i$ and $N_i'$ and flavours
$M_i$ and $M'_i$. This map is compatible with the maps from the
Yangian and the Yangian coproduct~\cite{MR3761996}. With some abuse of
notation, we will denote this map between quantized Coulomb branches
as a coproduct as well.

Because the Yangian coproduct arises in four-dimensional Chern-Simons
perturbation theory precisely from the topological fusion of line
defects, this assertion is an implicit test of the duality between the
order and disorder definitions of the defects: it is compatible with
the topological fusion of line defects.

\subsection{$A_1$ examples}

In the case of $\mathfrak{sl}_2$, the quiver has one node and we have
three-dimensional SQCD, with $U(N)$ gauge group and $M$ flavours.
This corresponds to $\mu = M \sigma_3$ and $\eta = (2N-M) \sigma_3$.

\subsubsection{A single 't Hooft operator}

Placing a minimal 't Hooft charge at $\infty$, we have $N=M=1$: SQED
with one flavour. This theory has no Higgs branch, and a Coulomb
branch isomorphic to $\mathbb{C}^2$. The quantum Coulomb branch
algebra is a Weyl algebra. The map from the shifted Yangian is given
by the familiar oscillator representation
\begin{equation} 
	L(z) = \begin{pmatrix}
		z + bc & b \\
		c & 1 
		\end{pmatrix} 	
\end{equation}
as expected.

\subsubsection{Two 't Hooft operators, zero charge at infinity}
Next, we can look at $N=1$, $M=2$: SQED with two flavours. Both Higgs and Coulomb branches of this theory 
take the form of $A_1$ singularities. The quantum Coulomb branch algebra is a central quotient of $U(\mathfrak{sl}_2)$. 
The map from the Yangian is given by the familiar representation 
\begin{equation} 
	\til{L}(z) =  \begin{pmatrix}
		z + \hbar \tfrac{1}{2} \rho(h) &\hbar  \rho(f) \\
		\hbar \rho(e) & z - \hbar \tfrac{1}{2} \rho(h)
	\end{pmatrix}
\end{equation}
appropriately normalized.

 We thus recognize that a ``Verma module'' Wilson line is the same as two 't Hooft lines, separated by a distance $\hbar (2j+1)$.
 This statement is clearly close to the QQ relation, but it is subtly different: the charges at infinity have already been cancelled 
 against each other. 

 \subsubsection{Two 't Hooft operators, non-zero charge at infinity}
Finally, we can look at the case $N=2$, $M=2$. The map from the shifted Yangian is given by the representation 
\begin{equation} 
	L(z) = \begin{pmatrix}
		z^2  + a_1 z + a_0 & b_1 z + b_0\\
		c_1 z + c_0 & d_0
		\end{pmatrix} 	.
\end{equation}
Classically, we want to impose $\det L(z) = z^2$, or $\det L(z) = z(z-m)$ if we turn on a mass deformation. 
Quantum-mechanically, we impose quantum determinant relations. 

This is the first case where we can write a reasonable coproduct:
\begin{equation} 
	L(z) = \begin{pmatrix}
		z + bc & b \\
		c & 1 
		\end{pmatrix} 	
\begin{pmatrix}
		z + b'c' & b' \\
		c' & 1 
		\end{pmatrix} 	.
\end{equation}

\subsection{A choice of traces}
In order to define a closed line defect, we need to equip the
auxiliary one-dimensional system with a trace. This can be done in
multiple ways, as the quantum Coulomb branch algebra has a non-trivial
linear space of traces. More precisely it has a non-trivial space of
traces twisted by elements of the Cartan of the global $\mathfrak{g}$
symmetry, which can be defined simply as (analytic continuation of)
traces on highest weight modules for the algebra. Only certain linear
combinations of the twisted traces remain finite as the twisting is
turned off.

An important observation is that the Coulomb branch may not admit any
trace, twisted or not. By the {\it quantum Hikita conjecture}
\cite{Kamnitzer}, the space of twisted traces is isomorphic to the
quantum cohomology ring of the Higgs branch. A balanced ADE quiver has
a rich Higgs branch and a correspondingly rich space of traces.  As we
make the nodes unbalanced, the expected dimension of the Higgs branch
$\sum_i N_i(M_i - \delta_i)$ diminishes. In order to build transfer
matrices, we thus want our quivers not to be too unbalanced.

As traces are multiplicative under the coproduct, we can produce many
traces starting from simple examples, simply by grouping the
elementary 't Hooft operators in different sub-collections and
splitting the charge at infinity accordingly, as long as each
individual collection admits a trace.

Given a trace $\mathrm{Tr}_a$ on the quantized Coulomb branch algebra,
we can define generalized transfer matrices
\begin{equation}
\mathrm{L}^{\mu, \eta}_{a} = \mathrm{Tr}_{a} g_{R_1}[z_1] \cdots g_{R_L}[z_L] 
\end{equation}
acting on a spin chain with sites in representations $R_i$ and
impurity parameters $z_i$. These transfer matrices will commute with
the transfer matrices built from finite-dimensional representations of
the Yangian, essentially by construction. They depend on the charge
$\mu$ at $0$, the charge $\eta$ at infinity and the choice of trace
$a$.

When $\mu$ is minuscule and $\eta$ is in the Weyl orbit of $\mu$, one
can verify that the quantized Coulomb branch algebras are Weyl
algebras and $g_R$ are the oscillator L-operators. The unique twisted
trace on the Weyl algebra gives generalized transfer matrices which
coincide with the Q-operators.

\subsection{'t Hooft-Wilson lines and Coulomb branches}

An 't Hooft-Wilson line lifts in six dimensions to a Wilson line
sitting on an 't Hooft defect. In the enhanced low energy description,
that would be an half-BPS line defect in the three-dimensional
${\cal N}=4$ theory, in the sense studied by
\cite{Assel:2015oxa,Dimofte:2019zzj}. These line defects give rise to
tri-holomorphic sheaves on the Coulomb branch. In the presence of
$\Omega$-deformation, they support algebras of local operators which
are a sort of matrix generalization of the quantized Coulomb branch
algebra.

Line defects which correspond to well-defined 't Hooft-Wilson lines
must have the property that the corresponding algebras still admit
algebra morphisms from the shifted Yangians, so that they can
represent line defects in four-dimensional Chern-Simons theory.  We
thus predict the existence of a sub-category of line defects with this
property.

Physically, there is a general strategy to produce interesting line
defects: the ``vortex construction''. See \cite{Gaiotto:2012xa} for a
four-dimensional analogue.  The basic idea is to consider a
three-dimensional quiver theory with the same shape and $\eta$, but
with some extra flavours, whose masses are tuned in such a way to
allow for a Higgs branch to open up. If the setting is appropriate,
the Higgs branch vev will trigger a flow to the original
three-dimensional theory. A position-dependent Higgs branch vev will
trigger a ``vortex'' flow which ends producing an extra line defect in
the desired three-dimensional theory.

These operations survive the $\Omega$-deformation. The main difference
is that the tuning of the masses is modified by a multiple of $\hbar$,
which governs which vortex flow we follow. At the level of quantum
Coulomb branches, the adjustment of the mass parameters allows for a
further truncation of the algebra.

There is a simple interpretation of these manipulations: the extra
flavours represent a collection of extra 't Hooft lines and the
adjusted masses tell us how to tune the spectral parameter to allow
the 't Hooft lines to fuse into the original line and be truncated to
a more general 't Hooft-Wilson line. We leave a careful treatment of
this construction to future work.

\section{Building 't Hooft lines using Gukov-Witten type surface operators}\label{sec:surface_defects}
Above we described the 't Hooft line in four-dimensional Chern-Simons theory in the standard way, by specifying a singularity in the gauge field.  From this, we derived the phase space of the theory in the presence of the 't Hooft operator.  While convenient for many purposes, this description has some  disadvantages.  

For one thing, it is rather difficult to perform Feynman diagram computations in the presence of an 't Hooft line, described as a singularity in the gauge field.  In addition, it is rather difficult to use this description to understand the effect of an 't Hooft line on other defects in the theory, such as the surface defects that lead to integrable field theories \cite{Costello:2019tri}. 

A full microscopic understanding of 't Hooft lines will allow us to
build a Q-operator in all of the integrable field theories constructed
using the method of \cite{Costello:2019tri}.  This would be a line
defect in each such theory which commutes with the T-operator and
satisfies the TQ and QQ relations, in the normalized form we have been
using.

In this section we will present a more complete description, where the
't Hooft line lives on the boundary of an explicit Lagrangian surface
defect. (In contrast to the surface defects considered in
\cite{Costello:2019tri}, these surface defects are topological.)  This
description has many advantages, and is quite easy to compute with.
We did not introduce it earlier as it is a description that is rather
unfamiliar to most physicists, and is quite special to the situation
we are considering.  This description also allows us to prove the QQ
relation, expressing the fusion of certain 't Hooft lines in terms of
Wilson lines.

The surface defect we introduce is reminiscent of the Gukov-Witten
defect \cite{Gukov:2006jk} in $N=2$ supersymmetric gauge
theories. This defect has the feature that it can be removed by a
singular gauge transformation.  It thus can be thought of as a Dirac
string.  By making this surface defect end, we will find the 't Hooft
line.

 All our computations will be local, and apply equally well to the
 rational, trigonometric, and elliptic setups, and indeed to the
 higher-genus integrable field theories considered in
 \cite{Costello:2019tri}.  Because of this, we will work on
 $\R^2 \times \C$, and consider a surface operator at $0 \in \C$.

\subsection{Defining the surface defect}
Let $X$ be a complex manifold with a $G$-action.  We consider the
following $\sigma$-model with target $X$. The fields of the theory are
a map $\sigma\colon \R^2 \to X$, and a one-form
\begin{equation} 
	\eta \in \Omega^1(\R^2, \sigma^\ast T^\ast X). 
\end{equation}
Here and below, all tensors on $X$ are defining in the holomorphic sense, only using $(1,0)$ vectors and covectors.

The Lagrangian is
\begin{equation} 
	\int_{\R^2} \eta \d \sigma. 
\end{equation}
The field $\eta$ has a gauge transformation generated by
\begin{equation} 
	\chi \in \Omega^0(\R^2,\sigma^\ast T^\ast X). 
\end{equation}
The gauge transformation sends $\eta \mapsto \eta + \d \chi$.

This theory arises in two ways:
\begin{enumerate} 
	\item It is the $B$-twist of the $\sigma$-model with $(2,2)$ supersymmetry.  
	\item It is the analytically-continued version of the Poisson $\sigma$-model with target $X$ \cite{Cattaneo:1999fm}, with zero Poisson tensor.   That is, if we wrote the same formulae for the Lagrangian but where $X$ was a real manifold, we would find the Poisson $\sigma$-model.  Here, we are simply complexifying the fields. 
\end{enumerate}
Since this is an analytically-continued theory, we need to use an
appropriate contour. To describe the algebra of operators of the
theory, which is our main concern, the contour doesn't matter, so we
will not spend much time on this point.

The natural contour from the point of view of the Poisson
$\sigma$-model is to choose a real slice of $X$.  This is also the
natural thing to do in our context: we choose a contour for
four-dimensional Chern-Simons theory associated to a real form of $G$,
and we choose a real slice of $X$ which is acted on by the real form
of $G$.

We will couple this surface defect to four-dimensional Chern-Simons theory. The surface defect is placed at the plane $z = 0$.  The Lagrangian for the coupled theory is simply
\begin{equation} 
	\int_{\R^2 \times  0} \eta \d_A \sigma. 
\end{equation}
We can write this more explicitly. Let $a$ be a Lie algebra index and $i$ an index for local coordinates on $X$. Let $V^a_i$ be the holomorphic vector fields on $X$ giving the action of $\g$.  Then the Lagrangian is
\begin{equation} 
  \int_{\R^2 \times 0}
  \left(\eta \d \sigma + \eta^i (\sigma^\ast V^a_i) A_a\right). 
\end{equation}

The gauge transformation for $\eta$ is now given by the covariant derivative:
\begin{equation} 
	\delta \eta = \eta + \d_A \chi. 
\end{equation}
The bulk gauge transformations act on the fields $\eta,\sigma$ in the obvious way. The Lagrangian is invariant under bulk gauge transformations, but \emph{not} under the gauge transformation generated by $\chi$. By integration by parts, the variation under this gauge transformation is given by
\begin{equation} 
	- \int_{\R^2 \times 0} \chi^i F(A)_a \sigma^\ast (V^a_i). \label{gauge_failure} 
\end{equation}
To correct for this, we will need to vary the four-dimensional gauge-field by the gauge transformation $\chi$. The required variation is
\begin{equation} 
	\delta A_a = \delta_{z = 0} \chi^i \sigma^\ast (V^b_i) g_{ba} ,
\end{equation}
where $g_{ab}$ is the inverse of the Killing form on $\g$.  

The variation of the Chern-Simons action is
\begin{equation} 
  \delta \ChS(A) = \delta_{z = 0} \chi^i \sigma^\ast (V^a_i) F(A)_a ,
\end{equation}
which cancels the variation \eqref{gauge_failure}.  Note that only the $\zbar$ component of the four-dimensional gauge field acquires this extra variation.

\subsection{The equations of motion in the presence of the surface defect}
Let us assume that $G$ acts transitively on $X$, and that further, the stabilizer of a point in $X$ is a parabolic subgroup $P \subset G$.  We will describe the solutions to the equations of motion in this context.

By a gauge symmetry, we can assume that $\sigma$ is constant with value $x$.  The tangent space to $X$ at $x$ is the quotient $\mf{g} / \mf{p}$, where $\mf{p}$ is the Lie algebra of $P$. We let $\mf{n}^- \subset \mf{p}$ be the nilpotent subalgebra. Choose a nilpotent subalgebra $\mf{n}^+$ complementary to $\mf{p}$, and a Levi factor $\mf{l} \subset \mf{p}$. Thus, we have a triangular decomposition
\begin{equation} 
\mf{g} = \mf{n}^- \oplus \mf{l} \oplus \mf{n}^+. 
 \end{equation}
We will choose a basis of $\g$ with indices $t_i^{-}$ for elements of $\mf{n}^-$, $t_i^+$ for elements of $\mf{n}^+$, and $t_{l}^0$ for elements of the Levi factor $\mf{l}$. The Killing form is $\ip{t_i^-, t_j^+}
= \delta_{ij}$, $\ip{t_l^0, t_m^0} = \delta_{lm}$.  We will denote the
corresponding components of the gauge field by $A^-$, $A^+$, $A^0$. 

There is a canonical isomorphism
\begin{equation} 
	T_x X = \mf{n}^+. 
\end{equation}
Hence, we can view the field $\eta$ as a one-form valued in $\mf{n}^-$. 

In this gauge, the Lagrangian takes the form
\begin{equation} 
  \int_{\R^2 \times 0} \eta_i  (x) A^+_i  + \int_{\R^2 \times \C} \d z \, \ChS(A). 
\end{equation}
Varying $\eta$ tells us that, at $z = 0$, $A^+$ vanishes so that $A$
is the parabolic subalgebra $\mf{p} \subset \mf{g}$.

Next, let us vary $A$.  We find
\begin{equation} 
\eta_i \delta_{z = 0} + \d z F(A)^-_i = 0. 
 \end{equation}
This tells us that $A^-$ can have poles at $z = 0$. 

To understand this better, we work in an axial gauge in which $A_{\zbar = 0}$. In the absence of the surface defect, the equations of motion in this gauge tell us that $\dbar_z A_x = 0$ and $\dbar_z A_y = 0$. 

In the presence of the surface defect, these equations are modified to
saying that
\begin{align} 
  \dbar_z A_x^- + \eta_x &= 0 , \\ 
  \dbar_z A_y^- + \eta_y &= 0 .
\end{align}
This tells us that $A^-$ can have a first-order pole at $z = 0$, whose
residue is given by $\eta$.

The result of this analysis is that \emph{all} of the fields on the surface defect have been absorbed into a modification of the gauge field: $A^+$ has a zero at $z = 0$, and $A^-$ has a pole.  

In a similar way, the bulk gauge transformation $\c^+$ must have a zero at $z = 0$. The gauge transformation $\c^-$ can have a pole, whose residue is given by the defect gauge transformation $\chi$. 

For a concrete example, consider the case when $G = PSL_2(\C)$ and $X = \CP^1$. In this case, the surface defect is equivalent to allowing the component $A^1_2$ of the gauge field has a pole at $z = 0$, and requiring that $A^2_1$ has a corresponding zero. 

\subsection{Removing the surface defects corresponding to minuscule coweights}

Now let us analyze when we can remove the surface defect. As above,
our surface defect is built using a parabolic subgroup $P \subset G$,
and we continue to use the notation $\mf{n}^{\pm}$, $\mf{l}$, where
$\mf{p} = \mf{n}^- \oplus \mf{l}$.

The condition we need in order to be able to remove the surface defect
is that there is a cocharacter $\rho\colon \C^\times \to G$ such that
the $\mf{l}$ is the zero eigenspace of $\rho$ in the adjoint
representation, and $\mf{n}^{\pm}$ are the $\pm 1$ eigenspaces.  This
condition means that $\rho$ is a minuscule coweight.

In this situation, we can perform a gauge transformation by the
singular gauge field $z^{\rho}$ (where as before $z$ is the coordinate
in the holomorphic direction of the four-dimensional space-time).
Conjugating the gauge field by $\rho(z)$ will give the field $A^+$ a
first-order zero at $ z= 0$, and $A^-$ a first-order pole. This is
because $A^{\pm}$ are the $\pm 1$ eigenspaces of the cocharacter
$\rho$.

In the example where $G = PSL_2(\C)$ and $X = \CP^1$, we have
$z^{\rho} = \op{Diag}(z,1)$.  Conjugating by $\rho(z)$ gives $A^1_2$ a
pole at $z = 0$ and $A^2_1$ a zero, which is equivalent to introducing
the surface defect.

\subsection{The surface defect as a monodromy defect}

Suppose we have a simply-connected gauge group $G$, and we have a
surface defect as above associated to a minuscule coweight $\rho$ of
the adjoint form $G_{ad}$ of $G$.  The Weyl group orbits of minuscule
coweights are in bijection with the non-identity elements of the
center of $G$.  We will show that in this situation, the surface
defect associated to a minuscule coweight is equivalent to a
``monodromy defect'', where the gauge field has monodromy around the
location of the surface given the corresponding element of the center
of $G$.

Strictly speaking, the phrase monodromy defect is not quite correct
when the surface wraps the topological plane, because we only have a
partial gauge field in the holomorphic plane. In this case, the
``monodromy'' means that the bundle is a $G_{ad}$-bundle with lifts to
a $G$-bundle away from the location of the surface defect but there is
an obstruction to such a lift near the defect.

Let us explain these obstructions.  Let $\mc{G}$ be the sheaf on $\C$
of smooth maps to $G$. For any smooth manifold, $H^1(M, \mc{G})$ is
the set of isomorphism classes of smooth $G$-bundles on $M$.  There is
a short exact sequence
\begin{equation} 
	1 \to Z(G) \to \mc{G} \to \mc{G}_{ad} \to 1 ,
\end{equation}
where $Z(G)$ is the constant sheaf with value the center of $G$. This
leads to an exact sequence of cohomology groups
\begin{equation} 
	H^1(U,\mc{G}) \to H^1(U,\mc{G}_{ad}) \to H^2(U,Z(G)) 
\end{equation}
for any open $U$ in $\C$.  Thus, any smooth $G_{ad}$ bundle has a
characteristic class living in $H^2$ with coefficients in $Z(G)$, and
this characteristic class is the obstruction to lifting to a
$G$-bundle.

In the case $G = SL_n$, this characteristic class is the first Chern
class of a $PSL_n$ bundle, taken modulo $n$.

The surface defect we introduce, when we take the gauge group to be
the simply connected form $G$, is equivalent to working with the
adjoint form $G_{ad}$ but where we specify the value of the
characteristic class in $H^2_c(U, Z(G))$ (where $U$ is a neighbourhood
of the location of the surface defect and the subscript $c$ indicates
compact support). Suppose the surface defect is at $z = 0$, and is
associated to a Weyl orbit of minuscule coweights corresponding to an
element $\rho \in Z(G)$. Then we will show that the characteristic
class is
\begin{equation} 
	\delta_{z = 0} \rho \in H^2_c(U, Z(G)) . 
\end{equation}

To see this, we recall that we can remove the surface defect using the
gauge transformation $\rho(z) \in G_{ad}$.  This gives us a \v{C}ech
description of the principle $G_{ad}$ bundle in the presence of the
surface defect, as follows. We let $U$ be a neighbourhood of $z =
0$. The transition function gluing a bundle near $z = 0$ to a bundle
away from $z = 0$ is given by a smooth map
\begin{equation} 
	U \setminus \{z =0\} \to G_{ad}.  
\end{equation}
The $G_{ad}$ bundle sourced by the surface defect has transition
function given by $\rho(z)$.  This will define a $G$-bundle if and
only $\rho(z)$ lifts to a continuous map to $G$.  The obstruction to
doing so is the homotopy class of the map
$\rho(z)\colon S^1 \to G_{ad}$ in $\pi_1(G_{ad}) = Z(G)$.

This tells us that the obstruction to lifting the bundle sourced by
the surface defect is the class $\delta_{z = 0} \rho \in H^2(U,Z(G))$.

This description of the surface defect makes it clear that it is
entirely topological, and can be placed on any two-cycle $\Sigma$ on
the four-dimensional spacetime of four-dimensional Chern-Simons
theory.  In the presence of the surface defect, we take our gauge
group to be the adjoint form $G_{ad}$ but only include topological
types of bundles with characteristic class given by
$\delta_{\Sigma} \rho \in H^2(X,Z(G))$.

\subsection{Monodromy defects for $SL_n(\C)$}
In the case $G = SL_n(\C)$ and we use the surface defect associated to
$\CP^{N-1}$, there is a convenient way to rephrase this. In
perturbation theory, our surface defect has the effect of moving us
from the trivial bundle $\Oo^n$ on the holomorphic plane to the bundle
$\Oo(1) \oplus \Oo^{n-1}$.  This bundle has determinant $\Oo(1)$, and
so is not an $SL_n(\C)$ bundle.  More generally, if we start by
perturbing around some bundle $E$ with trivial determinant, and
introduce the surface defect, we will find the configuration is
equivalent to working with a bundle $E'$ with $\op{det} E' = \Oo(1)$.

The introduction of the surface defect means that instead of
considering bundles with trivial determinant, we are working with
bundles with determinant $\Oo(1)$.  Such bundles are principle bundles
for a twisted form of $SL_n(\C)$, which forms a non-trivial bundle of
groups on the holomorphic plane $\C$. This is the group of
automorphisms of the bundle $\Oo(1) \oplus \Oo^{n-1}$ which act as the
identity on the determinant bundle $\Oo(1)$.  Introducing the surface
defect means we are working with this twisted form of $SL_n(\C)$.

Asking that a bundle has determinant $\Oo(1)$ is the same as asking
that its first Chern class is $\delta_{z = 0}$. The obstruction to a
$PSL_n(\C)$ bundle lifting to an $SL_n(\C)$ bundle is the first Chern
class, modulo $n$.  Working with bundles with fixed determinant
$\Oo(1)$ is then equivalent to working $PSL_n(\C)$ bundles whose
modulo $n$ Chern class\footnote{A $PSL_n(\C)$ bundle is a complex
  vector bundle taken up to tensoring with a line bundle. Tensoring
  with a line bundle adds a multiple of $n$ to the first Chern class,
  so the first Chern class of a $PSL_n(\C)$ bundle is well-defined
  modulo $n$.} is $\delta_{z = 0}$.

\subsection{Surface defects and the dynamical Yang-Baxter equation}

In \cite{Costello:2017dso}, it was shown that if we consider
four-dimensional Chern-Simons theory with holomorphic curve $\Sigma$,
and we choose boundary conditions so that the holomorphic bundles we
consider on $\Sigma$ have no moduli, then the theory leads to
solutions of the Yang-Baxter equation with no dynamical parameter. In
\cite{Costello:2018gyb}, we discussed the case when the holomorphic
bundles on $\Sigma$ have moduli, in which case the solution to the
Yang-Baxter equation has a dynamical parameters given by the moduli.

Let us consider the case that $\Sigma$ is an elliptic curve $E$, with
gauge group $SL_n(\C)$.  The moduli of semistable holomorphic
$SL_n(\C)$ bundles on the elliptic curve is isomorphic to
$\CP^{n-1}$. Thus, we find solutions to the Yang-Baxter equation with
dynamical parameter.

Introducing surface defects changes the moduli.  In some cases, a
surface defect adds moduli; in other cases, moduli are removed.

For example, let us introduce a single $\CP^{n-1}$ surface defect in
the case when we work on an elliptic curve with gauge group $E$.  In
this case, we are no longer considering the moduli of $SL_n(\C)$
bundles on $E$, but the moduli of $PSL_n(\C)$ bundles whose first
Chern class modulo $n$ is $1$.  This moduli space is a point
\cite{MR1212625} (at least when we focus on stable bundles, as is
physically reasonable).  The unique stable bundle is the rigid
$PSL_n(\C)$ bundle studied in \cite{Costello:2018gyb}.  In this
example, we find that introducing the surface defect has moved us from
a situation with zero modes to one without zero modes.

In \cite{Costello:2018gyb}, it was shown that the path integral
defined in perturbation theory around this rigid $PSL_n(\C)$ bundle
yields Belavin's elliptic R-matrix, which does not have a dynamical
parameter.  This analysis was only perturbative, and neglected other
possible $PSL_n(\C)$ bundles.  Non-perturbatively, it would seem that
four-dimensional Chern-Simons theory for the group $PSL_n(\C)$ does
\emph{not} lead to Belavin's elliptic R-matrix, because of the
presence of bundles of other degree.

The correct, non-perturbative statement is that the elliptic R-matrix
without dynamical parameter arises from four-dimensional Chern-Simons
theory for the group $SL(n,\C)$ with a single surface defect.

\subsection{The dynamical Yang-Baxter equation in the rational case}

Now let us consider four-dimensional Chern-Simons theory on
$\R^2 \times \C$, with simply-connected gauge group $G$. The boundary
condition is that the gauge field tends to zero at $z = \infty$, so
that the bundle extends to a bundle on $\R^2 \times \CP^1$. This
bundle is trivialized at $\R^2 \times \infty$.

Introducing a surface defect corresponding to an element of the center of $G$ means that we use $G_{ad}$ bundles on $\R^2 \times \CP^1$, which have non-trivial topology on $\CP^1$.We expect that having a surface defect built from the $\sigma$-model on $G / P$, for a parabolic $P \in G$ corresponding to a minuscule coweight, we should find dynamical parameters living in $G / P$.    After all, the only extra degrees of freedom we have introduced live in $G / P$.  

We can also see this from the geometry of the moduli space of
$G$-bundles on $\CP^1$.  Connected components of the moduli stack of
holomorphic $G_{ad}$ bundles on $\CP^1$ are labelled by the center of
$G$, or equivalently, by Weyl orbits of minuscule coweights.  If we
ask that such bundles are trivialized at $\infty$, then each connected
component has a $G$-action.  In each connected component, there is an
open $G$-orbit which is of the form $G/P$, with parabolic $P$
corresponding to the minuscule coweight labelling the component.

We expect this setup to lead to rational R-matrices with a dynamical
parameter living in $G/P$.

\section{Making the surface defect end, and 't Hooft lines}

We would like to have an interface between gauge-theory configurations
with this surface defect at $z = 0$, and configurations without the
surface defect.  This turns out to be very simple: all we have to do
is impose a boundary condition for our surface defect.

We coordinatize the topological plane by $x,y$ and consider the
surface defect on the region where $y \ge 0$.  The most natural
boundary condition for the Poisson $\sigma$-model with target $X$ is
to set $\eta = 0$ on the line $y = 0$.

More generally, we can introduce a boundary condition where we set
$\eta = 0$ and at the boundary place a holomorphic vector bundle on
$X$. In order to be able to couple to the gauge field, we will need
this holomorphic vector bundle to be $G$-equivariant.

In the case that the surface defect is of the form $G / P$ for a
parabolic associated to a minuscule coweight, we have seen that we can
remove the surface defect by a gauge transformation $\rho(z)$.  When
the surface defect has boundary at $y = 0$ we should use a gauge
transformation which is $\rho(z)$ for $y < -\eps$ and the identity for
$y > 0$. When we do this, we are left with a line defect.

This line defect is an 't Hooft line (or an 't Hooft-Wilson line if we
include a non-trivial vector bundle on $X$ into the boundary
conditions). To see this, we note that classically, the field sourced
by this line defect is obtained from the trivial field configuration
by applying a gauge transformation which is $\rho(z)$ for $y \ll 0$
and the identity for $y \gg 0$.  The effect of such a gauge field on a
Wilson line in the $y$-direction is $\rho(z)$, which is precisely what
we expect from an 't Hooft line.

There is a subtlety here, in that the precise matrix $\rho(z)$ that
arises depends on which point on the manifold $X = G/P$ we use. (The
conjugacy class of the matrix does not).  To specify the field sourced
by the 't Hooft line defect, we need to say what the line defect does
at $x = \pm \infty$.  In this analysis we are giving our surface
defect Neumann boundary conditions $\sigma = \sigma_0$ for a point
$\sigma_0 \in G/P$ at $x = \pm \infty$.  When we use a gauge
transformation to write the surface defect as a line defect, the
choice of boundary condition for the surface defect specifies states
at the end of the 't Hooft line.

\subsection{Equivalence between the surface defect and affine Grassmannian definitions of the 't Hooft line}

The definition of the 't Hooft line using surface defects is
equivalent to that using the affine Grassmannian.  To see this, we
need to describe the moduli space of solutions to the equations of
motion of the theory when we have a surface defect that ends.

We will take $G = G_{ad}$ to be of the adjoint form, so that there are
no Dirac strings.  We will place the surface defect given by a $G/P$
$\sigma$-model at $z = 0$, $y \le 0$. The boundary condition at
$y = 0$ is $\eta = 0$.

We will consider the solutions to the equations of motion in a
neighbourhood of $z = 0$. When $y > 0$, we can trivialize the bundle.
The group $G[[z]]$ acts on the space of solutions by change of the
trivialization on the region where $y > 0$.

Away from $y = 0$, $z = 0$, the moduli space of solutions is the same
as in the absence of the surface defect, because the surface defect
can be removed by a gauge transformation. Thus, away from $y = 0$,
$z = 0$, the moduli space of solutions is described by the affine
Grassmannian $G((z)) / G[[z]]$.

Including the locus when $y = 0$, $z = 0$, we find that the moduli
space of solutions to the equations of motion is given by some
$G[[z]]$-orbit in the affine Grassmannian.  This orbit contains the
point in affine Grassmannian given by the minuscule coweight defining
the surface defect.  This proves the equivalence between the
surface-defect picture and the affine Grassmannian picture.

\subsection{'t Hooft lines as interfaces}

We have defined a fractional 't Hooft line for the adjoint form of a
group $G$ as the line operator living at the end of the surface
defect. Tautologically, the 't Hooft line is an interface between the
theory in the presence of the surface defect and the theory without a
surface defect.

One can ask why an 't Hooft line would be expected to behave in this
way.  Let us consider the rational case, so the holomorphic plane is
$\C$, and take the gauge group to be $SL_2(\C)$.  A fractional 't
Hooft line at $z = 0$, $y = 0$ gives rise to a Hecke transformation
which will turn the trivial bundle at $y < 0$ into a bundle isomorphic
to $\Oo(1)\oplus \Oo$ on $y > 0$.  A gauge field for the bundle
$\Oo(1)\oplus \Oo$ has non-trivial zero modes, leading to a dynamical
parameter.

From this we see that an 't Hooft operator provides an interface for
four-dimensional Chern-Simons theory on $\R^2 \times \C$ where there
is no dynamical parameter, to a setting where there is a dynamical
parameter.

There is an important difference between $PSL_2(\C)$ and $SL_2(\C)$,
or more generally between the adjoint and simply connected form of a
group. For $PSL_2(\C)$, the non-perturbative path integral for
four-dimensional Chern-Simons theory should sum over the different
topological types of gauge field on $\C$ (trivialized at
$\infty$). For $PSL_2(\C)$, there are two components, and the 't Hooft
line associated to the minuscule coweight moves us from one component
to the other.

For $SL_2(\C)$, the two connected components of the moduli of
$PSL_2(\C)$ bundles have a different interpretation.  The component
with trivial $w_2$ gives us $SL_2(\C)$ bundles, and this appears in
the theory without a surface defect. The component with non-trivial
$w_2$ occurs when we have a surface defect.

The fact that minuscule 't Hooft lines always move us from a setting
with no dynamical parameter to one with a dynamical parameter will
cause us difficulties when we define Baxter's Q-operator in terms of
't Hooft lines.  We will find that to define Baxter's Q-operator we
will need to have a line defect at $z = \infty$ as well as an 't Hooft
line at $z = 0$.

\section{'t Hooft lines at $\infty$ and Q-operators}
\label{sec:thooft_infinity}

In this section we will re-derive the phase space of minuscule 't
Hooft lines from the surface defect picture.  As we have seen, the
Q-operator arises when we have an 't Hooft line in the bulk but also
modify the boundary condition at $z = \infty$ along a line parallel to
the 't Hooft line.  Here, we will describe this modification of the
boundary condition at $z = \infty$ as introducing a surface defect at
$\infty$.

Thus, fix a minuscule coweight $\rho$. Let us decompose $\g$ as
$\g = \g_{-1} \oplus \g_{0} \oplus \g_{1}$ according to the
eigenvalues of $\rho$. We define a parabolic $P$ by saying that its
Lie algebra is $\g_{0} \oplus \g_{1}$.  We define unipotent subgroups
of $G$ by $G^{\pm 1} = \exp(\g_{\pm 1})$.  We let the corresponding
components of $A$ be $A^{-1}$, $A^0$, $A^1$.

The defect at $z = \infty$ is described by coupling the gauge theory
to the analytically-continued Poisson $\sigma$-model on the vector
space $\g^{1}$.  The fields of this theory are
\begin{align} 
	\til{\sigma}\colon \R^2 &\to \g^{1} , \\
	\til{\eta} & \in \Omega^1(\R^2,  \g^{-1} ).
\end{align}
The Lagrangian is $\int \til{\eta} \d \til{\sigma}$. As before,
$\til{\eta}$ has a gauge symmetry,
$\til{\eta} \mapsto \til{\eta} + \d \til{\chi}$.

We would like to couple this to the gauge theory. The boundary
conditions for the gauge theory require our gauge field $A$ is
divisible by $1/z$ near $\infty$. To couple in a non-trivial way to
the $\sigma$-model on $\g^{1}$, we need to involve the derivative of
$A$ in $u = 1/z$ at $z = \infty$.

We couple the topological $\sigma$-model with target $\g^{1}$ to the
gauge field $\partial_u A^{1}$ acting by translation, so that the
Lagrangian is
\begin{equation} 
  \int (\til{\eta} \d \til{\sigma} + \til{\eta} \partial_u A^{1}). 
\end{equation}
(Recall $u = 1/z$.)  The other components of the gauge field $A$ do
not couple to the surface defect.

Only the transformations corresponding to the $\partial_u \c^{1}$ ghosts will act in a non-trivial way. These act by translation.

As before,  the gauge transformation of $\til{\eta}$ is now 
\begin{equation} 
	\til{\eta} \mapsto \til{\eta} + \d \til\chi + \partial_u A^{1}. 
\end{equation}
For the coupled system to be gauge invariant, we need to make $A^{-1}$
vary under the $\chi$ gauge transformation:
\begin{equation} 
	A^{-1} \mapsto A^{-1} + \delta_{u = \eps} \chi ,
\end{equation}
where $\eps$ is a small parameter.  With this term, the variation of
the term $\int A^{-1} \d A^{1} u^{-2} \d u$ under $\chi$ gives us the
term $\int_{u = 0} \partial_u \d A^{1} \chi$, which cancels the
variation of $\int_{u = 0} \til{\eta} \partial_u A^{1}$ under $\chi$.

A repeat of the argument we gave before shows that, if we solve the equations of motion in the presence of the defect at $z = \infty$, then $A^{1}$ is forced to have a second-order zero, whereas $A^{-1}$ is allowed to be regular at $u = 0$.  The value of $A^{1}$ at $u = 0$ is given by $\til{\eta}$, whereas $\til{\sigma}$ can be set to zero by a gauge transformation.

Before we introduced the defect, the boundary condition stated that $A$ vanishes at $u = 0$.  Conjugating by $z^{-\rho} = u^{\rho}$ introduces an extra factor of $u$ to $A^{1}$, so that it has a second order zero at $u = 0$; and introduces an extra factor of $u^{-1}$ to $A^{-1}$, so that it can be regular. Thus, applying the gauge transformation $z^{-\rho}$  moves us from the setting without the defect at $z = \infty$ to one with the defect.

Next, suppose we introduced a defect both at $0$ and at $\infty$.  At $z = 0$, we will introduce the topological $\sigma$-model on the partial flag manifold $G/P$, and work in perturbation theory around the field configuration given by the image of the identity in $G$.  The stabilizer of this point is $P$, so that, following our earlier analysis, the defect is equivalent to asking that $A^{-1}$ has a zero at $z = 0$ and $A^1$ has a first-order pole.  Applying $z^{-\rho}$ moves us from the trivial defect to this defect.

We conclude that the gauge transformation $z^{-\rho}$ removes the defects at $0,\infty$ simultaneously. 

We can make the defect at $z = \infty$ end at $y = 0$, by using the boundary condition where $\eta = 0$.   If we remove the defect at $y < 0$ by a gauge transformation, we are left with a line defect at $\infty$, as we considered in section \ref{sec:q}. 

If we make both the bulk and boundary surface defects end at $y = 0$, we are left with the configuration that we found earlier gives the Q-operator.  We will re-derive the phase space of the Q-operator from this perspective shortly.   

At a first pass, however, consider a Wilson line at a point $z$ along $x = 0$.  Let us consider the effect of the two surface defects, ending at $y = 0$, on this Wilson line.  To remove the surface defects on $y < 0$, we apply the gauge transformation $z^{-\rho}$.   As we have seen, this removes both surface defects at $0$ and $\infty$, leaving us with line defects at $y = 0$, $z = \infty$.  

To leading order in $\hbar$, the state in the Wilson line at $x = 0$ transforms by $z^{-\rho}$ when it passes the line defects at $y = 0$.  This is because we have applied the gauge transformation $z^{-\rho}$ at one side and not the other.  This is precisely the behaviour that characterizes an 't Hooft line.  The quantum corrections to this expression can in principle be calculated by the exchange of gluons between the surface defects and the Wilson line, giving an operator which is $z^{-\rho}$ times a series in $\hbar z^{-1}$.  In practice, calculating these quantum corrections is quite non-trivial.

\subsection{The phase space in the presence of surface defects}

Let us consider the situation above, with surface defects at $0$ and
$\infty$ both ending at $y = 0$.  In this section we will show that we
can recover the phase space of minuscule 't Hooft lines that we
discussed in section \ref{section_phase}.

Let us consider the effective two-dimensional theory obtained by
compactifying on the $z$-plane $\CP^1$, with defects at $0$ and
$\infty$.  Before the introduction of the defects, this is the trivial
theory.  The defects make the effective theory the topological
$\sigma$-model with target $G/P \times \mf{g}_1$.  This is simply the
product of the topological $\sigma$-models we have inserted at
$z = 0$, $z = \infty$.

The exchange of a single gluon between the defects at $z = 0$ and
$z = \infty$ couples the topological $\sigma$-model on $G/P$ to that
on $\mf{g}_1$. As before, we use the notation $\eta$, $\sigma$ for the
fields of the topological $\sigma$-model on $G/P$, and $\til{\eta}$,
$\til{\sigma}$ for the $\sigma$-model on $\mf{g}_1$. The defect at
$z = \infty$ is only coupled to the components $A^1$ of the gauge
field in $\mf{g}_1$.

We will compute the coupling between the two defects using the
technique of \cite{Costello:2019tri}.  Let us explain the general
method. Choose a basis $t_a$ of $\mf{g}$, and let $J_a$, $\til{J}_a$
be the currents which couple the defects at $0$ and $\infty$ to the
gauge field.  If $c^{ab}$ denotes the quadratic Casimir, then the
coupling between defects at $z,z'$ is
\begin{equation} 
	c^{ab} J_a \til{J}_b \frac{1}{z - z'}.  
\end{equation}

In our case, the defect at $\infty$ is only coupled to elements of
$\mf{g}_1$. Because of the form of the quadratic Casimir, the coupling
can only involve the components of the current in $\mf{g}_{-1}$ at
$z = 0$. We can take a basis $X_i$ of $\mf{g}_1$ and the dual basis
$Y^i$ of $\mf{g}_{-1}$.  Let $V^i$ denote the vector fields on $G/P$
giving the action of $\mf{g}_{-1}$.  Then,
\begin{equation}
		J^i = \ip{\eta,V^i}. 
\end{equation}
The derivative $-z^2 \partial_z$ of $A_1$ couples to $\til{\eta}_i$ at $z = \infty$. 

From this, we see that the exchange of a gluon couples the two surface
defects by
\begin{equation} 
  -\lim_{z \to \infty} \int_{\R^2} \ip{\eta,V^i} \til{\eta}_i
  z^2 \partial_z \frac{1}{z}
  =  \int_{\R^2} \til{\eta}_i \ip{\eta, V^i}.
\end{equation}

This coupling also modifies the gauge transformations, so that
$\sigma$, $\til{\sigma}$ also transform by the gauge transformations
$\chi$, $\til{\chi}$:
\begin{equation}
	\begin{split}
	\til{\sigma}_i &\mapsto \til{\sigma}_i - \ip{\chi, V_i}, \\
	\sigma & \mapsto \sigma + \til{\chi}^i V_i.
	\end{split}
\end{equation}

This describes the effective two-dimensional model obtained by
integrating out the four-dimensional gauge fields, at the classical
level.  This model is the analytically-continued Poisson
$\sigma$-model \cite{Cattaneo:1999fm} on $G/P \times \mf{g}_{1}$. This
manifold has a holomorphic Poisson tensor, coming from the map
$\mf{g}_{-1} \to \op{Vect}(G/P)$ and the component of the quadratic
Casimir which lies in $\mf{g}_1 \otimes \mf{g}_{-1}$.

For example, if $G = SL_2(\C)$, then the holomorphic Poisson manifold
is $\CP^1 \times \C$, with coordinates $u$, $v$.  The Poisson tensor
is $u^2 \partial_u \partial_v$.

The boundary condition at $y = 0$ that gives rise to the 't Hooft
defects is that where $\eta= 0$, $\til{\eta} = 0$.

We can choose a reality condition for the analytically-continued
Poisson $\sigma$-model for which the fields only see the open subset
where the Poisson tensor is non-degenerate.

To describe this, note that there is an open orbit of $\exp(\g_{-1})$
in $G/P$ containing the image of the identity in $G/P$.  This open
orbit has no stabilizer, and is isomorphic to $\g_{-1}$.  If we assume
that $\sigma$ is in this orbit, then topological $\sigma$-model on
$G/P$ is replaced by that on $\g_{-1}$. In this orbit, the action is
simply
\begin{equation} 
  \int (\til{\eta}^i \eta_i + \eta_i \d \sigma^i
  + \til{\eta}^i \d \til{\sigma}_i)
\end{equation}
and the gauge transformations are $\delta \sigma^i = \til{\chi}^i$,
$\delta \til{\sigma}_i = \chi_i$, $\delta \eta_i = \d \chi_i$,
$\delta \til{\eta}^i = \d \til{\chi}^i$.

This is the Poisson $\sigma$-model with target the symplectic manifold $\mf{g}_{-1} \oplus \mf{g}_1$.  As is well-known, the bulk degrees of freedom of the Poisson $\sigma$-model with target a symplectic manifold are entirely massive, and the model localizes onto the boundary.

Concretely, we can see this as follows.    Under the field redefinition $\gamma_i = \eta_i + \d \til{\sigma}_i$, $\til{\gamma}^i = \til{\eta}^i - \d \sigma^i$, the action becomes simply
\begin{equation} 
  \int_{\R \times \R_{\ge 0}} \til{\gamma}^i \gamma_i
  - \int_{\R} \til{\sigma}^i \d \sigma_i .
\end{equation}
The second term is the Lagrangian for topological quantum mechanics
with values in the symplectic manifold $\mf{g}_1 \oplus \mf{g}_{-1}$,
and so describes the phase space of the 't Hooft line.

The bulk system is entirely massive.  Indeed, the fields $\gamma_i$,
$\til{\gamma}^i$ are invariant under the $\chi$, $\til{\chi}$ gauge
transformation, and $\sigma^i$, $\til{\sigma}_i$ transform by a
shift. We can choose a gauge where $\sigma_i(x,y)$ is independent of
$y$ (where the boundary is at $y = 0$).  In this gauge, the fields
$\sigma_i(x,y)$ are only boundary fields, and the fields $\gamma_i$,
$\til{\gamma}^i$ are massive with no kinetic term.

We have shown that the effective theory obtained by compactifying
four-dimensional Chern-Simons theory to $\R^2$, in the presence of two
surface defects which end, gives the trivial theory on $\R^2$ together
with a line defect describing topological quantum mechanics with
target $\mf{g}_1 \oplus \mf{g}_{-1}$.  This manifold is therefore the
phase space.  Thus, we have re-derived from this perspective the phase
space of the theory in the presence of an 't Hooft line at $0$ and at
$\infty$.

\section{Quantization of the 't Hooft line}
\label{sec:quantizing_tHooft}

Classical Wilson lines in four-dimensional Chern-Simons theory
sometimes have an anomaly to quantization. This was studied by an
explicit Feynman diagram computation in \cite{Costello:2018gyb}.  One
can ask if the same holds for 't Hooft lines.

If we describe the 't Hooft line as a singularity in the gauge field,
this question is very difficult to answer.  An advantage of the
surface-defect description is that we can answer this question
directly by using standard cohomological techniques.

For any topological line defect in a partially topological theory,
there is a differential-graded associative algebra $A$ of local
operators on the defect. (At the classical level, $A$ will be
commutative.) By descent, possible deformations of the defect are
given by elements of $A$ of ghost number $1$.  Thus, to compute the
possible first-order deformations of the defect, we need to compute
$H^1(A)$.  Similarly, anomalies to quantizing the system will be
described by $H^2(A)$.  This perspective was taken in
\cite{Costello:2018gyb} in the analysis of Wilson lines.

Here we will prove the following result.
\begin{theorem} 
  Let $A$ be the dg algebra of local operators on an 't Hooft line
  associated to a minuscule coweight of any simple group.  then
  $H^2(A) = 0$ and $H^1(A)$ is one dimensional. This implies that
  there are no anomalies to constructing the 't Hooft line at the
  quantum level, and that there is only one possible counterterm,
  which corresponds to moving the spectral parameter.
\end{theorem}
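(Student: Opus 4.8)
The plan is to lean on the surface-defect description, whose main virtue is that it makes $A$ completely explicit and turns the question into a standard cohomological computation. By the analysis just given, the minuscule 't Hooft line (with charge $\mu$ at $0$ and $-\mu$ at $\infty$) is topological quantum mechanics with target the symplectic vector space $V=\mf{g}_1\oplus\mf{g}_{-1}$, coupled equivariantly to the residual gauge symmetry. That residual symmetry is the Levi $L$, the common stabilizer of the minuscule coweight at $0$ and of its opposite at $\infty$, and it is \emph{reductive} --- this is the feature that will do all the work. I would model $A$ as a BRST/Chevalley--Eilenberg complex: the underlying graded algebra is the Weyl algebra $W(V)$ (classically $\Sym(V^\vee)$) tensored with the ghosts of $\mf{l}$, with differential the sum of the $\mf{l}$-equivariance differential and the Koszul differential for the Hamiltonian moment map that couples the mechanics to the bulk gauge field. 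Minusculeness is what keeps $V$ a genuinely \emph{linear} symplectic representation of $\mf{l}$, since $[\mf{n}^\pm,\mf{n}^\pm]=0$, with $\mf{g}_1\cong\mf{g}_{-1}^\vee$ under the Killing form.

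The vanishing $H^2(A)=0$ should be the robust half. Filter $A$ so that the associated graded differential is just the equivariance differential for the linear $\mf{l}$-action; then the $E_1$ page of the resulting spectral sequence is the Lie-algebra cohomology $H^\bullet(\mf{l},W(V))$. Decompose $\mf{l}=\mf{s}\oplus\C\mu$ into its semisimple part and the one-dimensional centre spanned by the grading coweight $\mu$. Whitehead's first and second lemmas give $H^1(\mf{s},M)=H^2(\mf{s},M)=0$ for any locally finite coefficient module, applied degreewise to $M=W(V)$, while the abelian factor contributes only $H^\bullet(\C\mu)=\C\oplus\C[-1]$, so that $H^{\ge 2}(\C\mu)=0$. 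Running the Lie-algebra Hochschild--Serre sequence for the ideal $\mf{s}\subset\mf{l}$, every bidegree contributing to total degree two dies --- $(2,0)$ because $H^2(\C\mu)=0$, and $(1,1)$ and $(0,2)$ by Whitehead. With the $E_1$ page already vanishing in total degree two, $H^2(A)=0$ follows regardless of the coefficients; this is precisely the absence of an anomaly to quantizing the line.

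The identification $H^1(A)=\C$ is the subtle half and is where I expect the real obstacle. On the associated graded the same spectral sequence computes $H^1$ as the cokernel of $\mu$ acting on the $\mf{s}$-invariants of $W(V)$, and because $\mu$ acts by the grading this cokernel is $W(V)^{\mf{l}}$. The evident nonzero class is the image of the moment map itself --- for $SL_2$ this is $bc$ --- which is exactly the operator implementing the spectral-parameter shift, consistent with the Witten-effect relation $\mbf{H}_{(e,\frac12)}(z)=\mbf{H}_{\frac12}(z+e\hbar)$ derived earlier. The trouble is that $W(V)^{\mf{l}}$ is a priori far too big (for $SL_2$ it is all of $\C[bc]$), so the Koszul/moment-map part of the full differential, which is invisible on the associated graded, must be shown to make the higher invariants exact; equivalently, the Hamiltonian reduction of $V$ by $\mf{l}$ must collapse $W(V)^{\mf{l}}$ to the constants so that a single class survives. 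I would settle this first in the rank-one case $G=SL_2$, where $V=\C^2$ and $\mf{l}=\C\mu$ reduce the problem to one abelian generator, and then run the identical reduction for every minuscule coweight using the uniform oscillator form of $L(z)$ established above, which presents the moment map and the $\mf{l}$-module $V$ in the same shape in all cases. A last check is that passing from $\Sym(V^\vee)$ to the quantum algebra $W(V)$ does not disturb any of this, which holds because the reductivity inputs are insensitive to the $\hbar$-deformation of the product.
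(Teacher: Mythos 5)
There is a genuine gap, and it starts with the identification of the complex you are computing. The theorem concerns the dg algebra of local operators on the 't~Hooft line viewed as a boundary of the $G/P$ surface defect \emph{coupled to the bulk four-dimensional gauge theory}: its $H^2$ measures anomalies to that coupling and its $H^1$ measures counterterms. In the paper this algebra is worked out explicitly: the boundary condition $\eta=0$ leaves only $\sigma$ as a matter field, the coupling produces $Q\sigma^i=\c^i$ so that $\sigma$ cancels against the level-zero ghosts valued in $\mf{g}_1$, and what survives is the Chevalley--Eilenberg complex of the parabolic Iwahori algebra $z\,\mf{g}[[z]]\oplus\mf{g}_0\oplus\mf{g}_{-1}$ --- pure ghosts, including the infinite tower $\partial_z^k\c^a$ from the bulk, and \emph{no} Weyl algebra in ghost number zero. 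Your model --- the Weyl algebra $W(\mf{g}_1\oplus\mf{g}_{-1})$ with ghosts only for the Levi $\mf{l}$ and a moment-map Koszul differential --- is instead a BRST model of the quantized phase space of the line with charges at $0$ and $\infty$ (the content of the oscillator construction and of section~\ref{sec:thooft_infinity}). That is a different object, and the difference is not cosmetic: the potential anomalies the paper actually has to rule out are (i) $\mf{l}$-invariant $2$-cochains on the current algebra $\mf{l}[[z]]$, which are killed using $H^{1}(\mf{l}[[z]])=H^{2}(\mf{l}[[z]])=0$ --- a statement about the loop algebra that does not follow from Whitehead's lemmas for $\mf{l}$ alone --- and (ii) the bilinears $\partial_z^k\c^i\,\partial_z^l\c_i$ built from the $\mf{g}_{\pm1}$ ghosts at higher levels, which are disposed of by an explicit count showing the unique closed combination at each $z$-weight is exact. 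Neither class of cochains exists in your complex, so your $H^2=0$ argument, while internally consistent, does not establish the absence of anomalies for the system the theorem is about.

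The $H^1$ half is also not closed even on your own terms: you correctly observe that the associated graded leaves you with all of $W(V)^{\mf{l}}$ (e.g.\ $\C[bc]$ for $SL_2$) and that the remaining differential must collapse this to one dimension, but you only announce a plan to verify this case by case. In the paper's complex this step is immediate: the only $\mf{g}_0$-invariant ghost-number-one cochains are $\partial_z^k\c^0$ for the central direction $\C\cdot\mu\subset\mf{g}_0$, the classes with $k>0$ are visibly not closed, and $\c^0$ is closed and not exact; its descendant is $\int A^0$, which by the Witten effect is the spectral-parameter shift. If you want to salvage your approach you would need to prove that your finite-dimensional BRST complex is quasi-isomorphic to the paper's Iwahori cochain complex in degrees $\le 2$, which is essentially the content you are missing.
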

\begin{proof}
  Let us describe the algebra $A$ that appears in the analysis of a
  minuscule 't Hooft line.  (In the language of
  \cite{Costello:2016vjw}, this will be the factorization algebra of
  operators in the coupled system.)  The surface defect is the
  topological $\sigma$-model with target $G/P$.  On the boundary, we
  ask that $\eta = 0$, and the gauge symmetries of the system also
  vanish. Thus, the boundary operators are just built from the field
  $\sigma$. Further, the equations of motion show that
  $\d \sigma = 0$, so derivatives of $\sigma$ do not appear.
  Therefore the algebra of operators on the boundary of the defect is
  simply $\Oo(G/P)$, the algebra of polynomial functions on
  $G/P$. Since we are doing a perturbative analysis, we can work in a
  neighbourhood of a point in $G/P$, which is $\mf{g}_1$.  The algebra
  of operators is then the algebra of functions on $\mf{g}_1$, which
  is the symmetric algebra on $ \mf{g}_{1}^\vee$.

  This is before coupling to the bulk gauge theory.  The algebra of
  operators of the bulk system is generated by the ghost $\c$ and its
  $z$-derivatives. The generators $\partial_z^k \c$ are in ghost
  number $1$, anticommute with each other, and have usual BRST
  operator. In mathematical terms, this algebra is the Lie algebra
  cochain complex of $\mf{g}[[z]]$.

  When we couple the two systems, we have the operators $\sigma$ in
  $\mf{g}_{1}^\vee$ in ghost number $0$, and
  $\partial_z^k \c \in g^\vee$ in ghost number $1$.  Because the field
  in $\mf{g}_{1}^\vee$ transforms by a shift under the action of gauge
  symmetry, we find a term in the BRST operator whereby
\begin{equation} 
	Q \sigma^i = \c^i. 
\end{equation}
(Here $i$ runs over a basis of $\mf{g}_1^\vee$.) 

This makes it clear that the cohomology of the coupled system is
generated in ghost number $1$ by $\partial_z^k \c^a$, for $k > 0$ and
all values of $a$, and by $\c^a$ for $a$ corresponding to elements of
$\mf{g}_0 \oplus \mf{g}_{-1}$ In mathematical terms, the algebra of
operators is the Lie algebra cochains of the algebra
\begin{equation} 
	z \mf{g}[[z]] \oplus \mf{g}_0 \oplus \mf{g}_{-1}. 
\end{equation}
Algebras of this type are called (parabolic) Iwahori algebras in the mathematics literature.  It is the Lie algebra of the group of maps from the formal disc to $G$, which at the origin land in the parabolic subalgebra $P$ which exponentiates $\mf{g}_0 \oplus \mf{g}_{-1}$. 

Our task is to compute the cohomology of this Lie algebra in degrees $1$ and $2$.  The task is simplified by noting that all cohomology classes must live in the trivial representation of the subalgebra $\mf{g}_0$, which acts semi-simply on the Lie algebra cochain complex.  Let us use $\c^i$ for the ghosts in $\mf{g}_{-1}$ and $\c_i$ in $\mf{g}_1$.    We can decompose $\mf{g}_0$ into an Abelian algebra, spanned by the minuscule coweight $\mu$, and a semi-simple algebra $\mf{l}$.  We let $\c^0$ indicate the component of the ghost corresponding to the Abelian algebra, and $\c^{\alpha}$ the components corresponding to $\mf{l}$.   The spaces $\mf{g}_{\pm 1}$ are both irreducible representations of $\mf{g}_0$.

This discussion shows that $\mf{g}_0$ invariant cochains of ghost number $\le 2$ are of the form:
\begin{enumerate} 
	\item $\partial_z^k \c^i \partial_z^l \c_i$ for $l > 0$. Cochains of this nature are in ghost number $2$. 
	\item Cochains which only involve $\mf{g}_0[[z]] = \mf{l}[[z]] + \C \dot \mu [[z]]$.
\end{enumerate}
	Let us first consider the cochains in $\mf{l}[[z]]$.  These must be $\mf{l}$ invariant.   Since $\mf{l}$ is a sum of simple Lie algebras, there are no $\mf{l}$--invariant cochains in degree $1$, as there are no $\mf{l}$-invariant elements in $\mf{l}$.  There are, however, $\mf{l}$-invariant cochains in degree $2$, and we need to show that these are never closed.

	The results of \cite{MR2415401} compute the cohomology of $\mf{l}[[z]]$ and show that it is the same as the cohomology of $\mf{l}$.  This vanishes in degrees $1$ and $2$. Therefore, no $\mf{l}$-invariant cochain in degree $2$ can be closed (because there is no cohomology and also no exact cochains).

Next, we need to consider whether linear combinations of the expressions $\partial_z^k \c^i \partial_z^l \c^i$ are BRST closed.  The charge under rotating $z$ is a symmetry of the problem, so we can fix $k+l$.  

One of the terms in the BRST variation of $\c^i$ is $\c^0 \c^i$, and
one of the terms in the BRST variation of $\c_i$ is $-\c^0 \c_i$.  Let
us compute the coefficient of
$(\partial_z^k \c^0) \c^i \partial_z^l \c_i$ in the BRST variation of
$\partial_z^k \c^i \partial_z^l \c_i$, assuming $l > 0$.  We find
\begin{equation}
  \begin{split}
    Q \partial_z^k \c^i \partial_z^l \c^i
    &= \partial_z^k (\c^0 \c^i) \partial_z^l \c_i - \partial_z^k \c^i \partial_z^l (\c^0 \c_i) + \dotsb \\
    &=   (\partial_z^k  \c^0) \c^i \partial_z^l \c_i  + \dotsb
  \end{split}
\end{equation}
assuming that $l > 0$.   The only other expression whose BRST variation can produce a term like $(\partial_z^k \c^0) \c^i \partial_z^l \c_i$ is $\c^i \partial_z^{k+l} \c_i$.  

We conclude that in any linear combination
\begin{equation} 
	\sum A_{k,n-k} \partial_z^k \c^i \partial_z^{n-k} \c_i 
\end{equation}
	for scalars $A_{k,n-k}$  which is BRST invariant, the coefficient $A_{k,n-k}$ is determined by $A_{0,n}$, so that there is at most one BRST invariant term for each value of $n$.

However, there is also one BRST exact term for each value of $n > 0$, namely 
\begin{equation} 
	Q \partial_z^n \c^0 = \partial_z^n ( \sum \c^i \c_i). 
\end{equation}
This is not zero if $n > 1$.  This BRST exact term must therefore be the same as the unique BRST closed term. 

	This completes the proof that there is no cohomology in degree $2$.  Therefore, the 't Hooft line exists at the quantum level.

	Let us now finish by computing the degree $1$ cohomology. The only $\mf{g}_0$-invariant cochains in degree $1$ are $\partial_z^k \c^0$ for $k \ge 0$. We have seen that $\partial_z^k \c^0$ is not closed if $k > 0$.  However,  $\c^0$ is  BRST closed but not BRST exact. This implies that the defect has at most one deformation, given by the descendent of $\c^{0}$. This, in turn, is given by the internal $\int A^{0}$ of the component of the gauge field proportional to $\mu$.  

	As we have already seen, the Witten effect (section \ref{sec:Witten}) tells us that adding this expression to the boundary has the same effect as shifting the spectral parameter.  This tells us that, as desired, the only modification the minuscule 't Hooft line has is that we can shift its spectral parameter.  
\end{proof}

\section{The Koszul dual of the algebra of operators on the 't Hooft
  line is the dominant shifted Yangian}

Our analysis above showed that, classically, the algebra of operators
on the 't Hooft line is the Lie algebra cochains of
$\mf{p} \oplus z \g[[z]]$.  This is in contrast to the algebra of
operators of the bulk system on its own, which is the Lie algebra
cochains of $\g[[z]]$.

We would like to ask what the Koszul dual of the algebra of operators
on the 't Hooft line is.

We do not have space here to go into detail on the role of Koszul
duality in quantum field theory: see for instance
\cite{Costello:2020jbh}.  Denote by $A_{\mbf{H}}$ the dg algebra of
operators on the 't Hooft line, and the Koszul dual algebra by
$A_{\mbf{H}}^!$. Then, for any analytically-continued quantum
mechanical system with algebra of operators $B$, coupling $B$ to the
't Hooft line is the same as giving an algebra homomorphism
$A_{\mbf{H}}^! \to B$.

In \cite{Costello:2013zra} it was proven by an abstract argument that the Koszul dual of the algebra of operators of the bulk system is the Yangian, $Y(\g)$.  Classically, this is easy to see, as the Koszul dual of the Lie algebra cochains of $\g[[z]]$ is the universal enveloping algebra of $\g[[z]]$, which is the classical limit of the Yangian.  The argument at the quantum level presented in \cite{Costello:2013zra} relied on a uniqueness theorem of Drinfeld.  

This was made more explicit in \cite{Costello:2018gyb}, where we studied explicitly the Feynman diagrams which contribute to anomalies to coupling a line defect to four-dimensional Chern-Simons theory.  We found that these anomalies cancel exactly when the algebra $B$ of operators on the line defect has a homomorphism from the Yangian algebra $Y(\g)$.

Here we will prove the following result.
\begin{proposition}
	The Koszul dual of the algebra $A_{\mbf{H}}$ of local operators on a minuscule 't Hooft line is the dominant shifted Yangian $Y_{\mu}(\g)$, associated to the dominant coweight in the same Weyl orbit as the coweight defining the 't Hooft line. 
\end{proposition}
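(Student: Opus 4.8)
The plan is to reduce the statement to a question about universal enveloping algebras via the general Koszul-duality dictionary, dispatch the classical identification by a PBW comparison, and then lift to the quantum level either by a Drinfeld-type uniqueness argument or by matching RLL presentations. First I would invoke the standard fact, already used in the bulk case in \cite{Costello:2013zra}, that the Koszul dual of the Chevalley--Eilenberg cochain algebra $C^\bullet(\mf{a})$ of a (pro-nilpotent, graded) Lie algebra is its universal enveloping algebra $U(\mf{a})$. By the theorem proved above, the dg algebra of operators on the minuscule 't Hooft line is $A_{\mbf{H}} = C^\bullet(\mf{I}_\mu)$, where $\mf{I}_\mu = \mf{p} \oplus z\g[[z]]$ is the parabolic Iwahori algebra with $\mf{p} = \mf{g}_0 \oplus \mf{g}_{-1}$. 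Hence classically $A_{\mbf{H}}^! = U(\mf{I}_\mu)$, exactly as the bulk computation gives $C^\bullet(\g[[z]])^! = U(\g[[z]])$, the classical Yangian.

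Next I would carry out the classical identification, matching $U(\mf{I}_\mu)$ with the classical limit of the dominant shifted Yangian $Y_\mu(\g)$. Using the PBW theorem for the shifted Yangian, its associated graded is the enveloping algebra of a graded sub-current-algebra of $\g[[z]]$ cut out by the shift data $\mu$: a root vector in $\g_\alpha$ is allowed at $z$-degree $r$ precisely when $r$ exceeds the shift prescribed by $\ip{\mu,\alpha}$. For a minuscule $\mu$ the only constraint is that the charge $+1$ part $\mf{n}^+ = \mf{g}_1$ is pushed up to start at $z^1$, while $\mf{g}_0 \oplus \mf{g}_{-1}$ survive at $z^0$ --- this is exactly $\mf{I}_\mu$. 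Since the 't Hooft line, and hence $A_{\mbf{H}}$, depends only on the conjugacy class of $\mf{p}$, i.e.\ on the Weyl orbit of $\mu$, whereas the shifted Yangian is not Weyl-covariant, the matching singles out the dominant representative; this accounts for the appearance of the \emph{dominant} shifted Yangian in the statement. Here I would check carefully that the sign conventions implicit in $\mf{p} = \mf{g}_0 \oplus \mf{g}_{-1}$ indeed produce the dominant (rather than antidominant) shift.

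Finally, the quantum lift. Following \cite{Costello:2018gyb}, I would realize $A_{\mbf{H}}^!$ explicitly through the coupling of the boundary $\sigma$-model to the bulk gauge field, reading off generators and relations; these should be precisely the RLL relations together with the boundary behaviour near $z=\infty$ defining the algebra $Y^\mu(\g)$ of the earlier section, supplemented by the vertex relations from junctions of Wilson lines. Alternatively, and more abstractly, one uses that $A_{\mbf{H}}^!$ carries the $Y(\g)$-bimodule coproduct structure $\tr_L,\tr_R$ produced by fusing bulk line defects onto the 't Hooft line, and that the dominant shifted Yangian is the (conjecturally unique) flat quantization of $U(\mf{I}_\mu)$ compatible with these coproducts --- the shifted analogue of the Drinfeld uniqueness theorem invoked for $Y(\g)$ in \cite{Costello:2013zra}.

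I expect this last step to be the main obstacle. As already acknowledged in the discussion of the Conjecture above, a sufficiently strong uniqueness theorem for the shifted Yangian as a quantization of its classical limit (compatible with the additional coproduct structure) is not currently available in general. In its absence the identification is either conditional on such a uniqueness result or must be established by a direct diagrammatic comparison of relations, which is known to succeed in type $A$ \cite{Frassek:2020lky} but is not yet available uniformly across the groups for which the minuscule 't Hooft line exists.
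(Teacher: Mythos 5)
Your classical step agrees with the paper: the Koszul dual of the Chevalley--Eilenberg cochains of the parabolic Iwahori algebra $\mf{p}\oplus z\g[[z]]$ is its universal enveloping algebra, and this is the classical limit of the dominant shifted Yangian. The problem is your quantum lift, where you conclude the identification is conditional on a uniqueness theorem that is not available, or must be done case by case via RLL presentations. That is not how the paper proceeds, and you have conflated this Proposition with the earlier Conjecture about the RLL-presented algebra $Y^{\mu}(\g)$ (the \emph{antidominant} shifted Yangian acting on line defects in the presence of an 't Hooft line at $\infty$), which is indeed left conditional for lack of such a theorem. The Proposition about the Koszul dual of $A_{\mbf{H}}$ is a different statement, and the paper gives an unconditional proof of it.

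The idea you are missing is to exploit the restriction map from bulk operators to operators on the defect, $A_{Bulk}\to A_{\mbf{H}}$, which Koszul-dualizes to an algebra map $A^!_{\mbf{H}}\to A^!_{Bulk}=Y(\g)$. This map is injective (it suffices to check this classically, where it is the inclusion $U(\mf{p}\oplus z\g[[z]])\into U(\g[[z]])$), so $A^!_{\mbf{H}}$ is realized as a \emph{subalgebra} of the quantum Yangian with prescribed classical limit --- and by \cite{MR3248988} the dominant shifted Yangian $Y_{\mu}(\g)$ is exactly such a subalgebra. The remaining uniqueness is then not an appeal to a general theorem but an explicit computation: the $\C^\times$ symmetry scaling $z$ and $\hbar$ simultaneously forces any quantum correction to live only in the level-one generators, of the form $\t_i[1]+\hbar f_i$ with $f_i$ a word in the level-zero generators; requiring that these still generate a copy of $\mf{g}_{-1}$ as a $\mf{p}$-representation imposes $[\t^j,[\t^k,[\t^l,f_i]]]=0$, which rules out all $f_i$ that are quadratic or higher in the level-zero generators and reduces the linear ones to a shift of the spectral parameter. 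Without this embedding-plus-explicit-uniqueness argument your proof does not close, and the fallback routes you offer (the coproduct-based uniqueness, or a uniform diagrammatic RLL comparison) are, as you yourself note, not available.
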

\begin{proof}	 
To see this, let us first look at the classical level. The classical Koszul dual is the universal enveloping algebra of $\mc{p} \oplus z \g[[z]]$.  Physically, this is reasonable: it says that to couple a line defect to the 't Hooft line, we need to be able to couple the components of the gauge field $A$ which are in $\mc{p}$, and all components of the derivatives $\partial_z^k A$ for $k > 0$.  

What happens at the quantum level?  In principle, one could try to reproduce the Feynman diagram analysis of \cite{Costello:2018gyb} in the presence of the 't Hooft line. This, however, could be quite challenging.   Instead, we can use the relationship between the bulk and boundary algebras.

Let us denote the algebra of bulk operators by $A_{Bulk}$. We give
this an ($A_\infty$) structure using the operator product in the
direction wrapped by the 't Hooft line.

By taking an operator in the bulk to one on the defect, we get a homomorphism
\begin{equation} 
	A_{Bulk} \to A_{\mbf{H}}. 
\end{equation}
This leads to a homomorphism of Koszul dual algebras in the other direction:
\begin{equation} 
	A^!_{\mbf{H}} \to A^!_{Bulk} = Y(\g). 
\end{equation}
This homomorphism has a simple explanation in gauge theory terms. If
we have an analytically-continued quantum mechanical system with
algebra of operators $B$, then coupling this system to the bulk gauge
theory is the same as giving a homomorphism $Y(\g) \to B$.  If we do
this, then we can bring the line defect to the 't Hooft line, giving
us a way of coupling the quantum mechanical system to the 't Hooft
line.  This gives us a homomorphism $A^!_{\mbf{H}} \to B$.  Since this
holds for any $B$, we can take $B = Y(\g)$ and so get a natural map
$A^!_{\mbf{H}} \to Y(\g)$.

This map is injective. To see this, it is enough to check it classically. At the classical level,  $A^!_{\mbf{H}}$ is the universal enveloping algebra of $\mf{p} \oplus z \g[[z]]$, and $Y(\g)$ becomes the universal enveloping algebra of $\g[[z]]$.  The homomorphism just comes from the embedding $\mf{p} \oplus z \g[[z]] \into \g[[z]]$, so it is injective.  

It is clear that the fact that $A^!_{\mbf{H}}$ is a subalgebra of $Y(\g)$ constrains it greatly.  As we will see, this suffices to show that the Koszul dual algebra $A^!_{\mbf{H}}$ is the \emph{dominant} shifted Yangian:
\begin{equation} 
	A^!_{\mbf{H}} = Y_{\mu}(\g). 
\end{equation}
Indeed, in the work of \cite{MR3248988}, the dominant shifted Yangian $Y_{\mu}(\g)$ is shown to be subalgebra of $Y(\g)$ with the same classical limit as that of $A^!_{\mbf{H}}$.

We need to show that this is enough to constrain the algebra uniquely.  This seems very plausible: after all, it is hard to lift a subalgebra of $U(\g[[z]])$ to one of the Yangian, and it seems unlikely that there is more than one way to do so.  

Let us explain the simple algebraic argument for this uniqueness. At
the classical level, the subalgebra $A^!_{\mbf{H}}$ contains all the
level one generators $\t^a[1]$ of the Yangian, and those level zero
generators $\t^i[0]$, $\t_\alpha[0]$ corresponding to
$\g_{1} \oplus \g_0$.  The algebra is generated by $\t^i[0]$,
$\t_\alpha[0]$ and $\t_i[1]$, because commutators with $\t^i[0]$
applied to $\t_j[1]$ give the rest of the level $1$ generators.
Commutators of level $1$ generators give level $2$ and higher
generators, so this description completely characterizes
$A^!_{\mbf{H}}$ as a subalgebra of $U(\g[[z]])$, at the classical
level.  This description also holds also for the dominant shifted
Yangian.

	We need to show that there is (up to a shift in the spectral parameter) only one way to lift this to a subalgebra of the quantum Yangian.  At the quantum level, these generators can be modified.  There is a symmetry which rotates $z$ and $\hbar$ simultaneously.  The level $0$ generators are uncharged under this symmetry, and the level $1$ generators have the same charge as $\hbar$.  This means that only level $1$ generators can be modified, and only by adding on $\hbar$ times a polynomial in the level $0$ generators. If we perform such a modification, then $A^!_{\mbf{H}}$ will be generated by the level $0$ generators in $\g_1 \oplus \g_0$, and some modification 
\begin{equation} 
	\t_i[1] + \hbar f_i 
\end{equation}
of the level $1$ generators in $\g_{-1}$. Here $f_i$ is some word in the level zero generators.  Since the algebra has $\mf{p} = \g_0 \oplus \g_1$ symmetry, $f_i$ must transform in $\mf{g}_{-1}$ viewed as a $\mf{g}_0$ representation.

If $f_i$ is simply $\t_i$, we can absorb it into a shift in the spectral parameter, which shifts the level $1$ generators by level $0$ generators.   Therefore  $f_i$ is an expression which is at least quadratic in the level $0$ generators.

Because the $f_i$ must generate a copy of $\mf{g}$ as a $\mf{p}$-representation, we must have 
\begin{equation} 
	[\t^j, [\t^k, [\t^l, f_i]]] = 0.  \label{eqn:nilpotent} 
\end{equation}
One of the terms in the operation of bracketing with $[t^j,-]$ is to
remove $\t_l$ and replace it with $\delta^j_l \t_0$, where $\t_0$
represents the generator corresponding to the coweight $\mu$.  This
implies that $f_i$ can contain at most two generators from $\g_{-1}$,
because if it contained more than two, then \eqref{eqn:nilpotent} can
not hold.

In the case that $f_i$ contains two generators from $\g_{-1}$, then it
can be written as
\begin{equation} 
  f_i = \t^j C^{kl}_{ij} \t_k \t_l ,
\end{equation}
where $C^{kl}_{ij}$ is in the universal enveloping algebra of $\g_0$.
Let us use equation \eqref{eqn:nilpotent} to show that this can not
arise.  First, we note that the only element in
$U(\g_0) \subset U(\g)$ which commutes with all elements of $\g_1$ is
the identity.  From this, it is easy to check that
\eqref{eqn:nilpotent} can only hold if the tensor $C^{kl}_{ij}$ is the
identity, in which case we see, by asking that $f_i$ transforms in the
representation $\g_{-1}$ of $\g_0$, that
\begin{equation} 
	f_i = \t^j \t_j \t_i.  
\end{equation}
An explicit calculation shows that this expression can not satisfy
\eqref{eqn:nilpotent}.

Similarly, if $f_i$ has one generator in $\g_{-1}$, then it is of the form
\begin{equation} 
	f_i = C^j_{i} \t_j ,
\end{equation}
where again $C^j_{i} \in U(\g_0)$.  If $C^{j}_i$ is not the identity,
then, since it does not commute with all elements in $\g_{1}$, we can
exclude this possibility using \eqref{eqn:nilpotent}. If $C^{j}_i$ is
the identity, then $f_i$ is a linear expression and so can be absorbed
by a shift in the spectral parameter.

This completes the proof that the Koszul dual of the algebra of operators on the 't Hooft line is the dominant shifted Yangian.
\end{proof}

\section{Q-operators in integrable field theories}
In this section we will analyze what happens if we study the surface defects of the type considered in \cite{Costello:2019tri} together with an 't Hooft line.  

The main result is a construction of a Q-operator in a class of integrable field theories.  This Q-operator is an interface instead of a line defect, and it satisfies satisfies the TQ relation, to leading order in $\hbar$.  It would be very interesting to pursue this analysis to all orders in $\hbar$, including quantum effects in the integrable field theory, but that is beyond the scope of this work.  

Holomorphic and antiholomorphic surface defects as in
\cite{Costello:2019tri} give rise, when we compactify to on the
$z$-plane, to a two-dimensional integrable field theory.  The
expectation value of the $x$ and $y$ components at $z \in \C$ is the
Lax matrix:
\begin{equation} 
	\mc{L}(z) = \ip{A(z)}. 
\end{equation}
We will focus on the class of defects called \emph{order} defects in
\cite{Costello:2019tri}, where we couple chiral or antichiral degrees
of freedom at various values of $z$. Typically, these auxiliary
systems are given by some system of fermions, or some $\beta$-$\gamma$
systems.  The details of the systems will not matter for our analysis,
however.

We will let $J_{(i)}$ denote the currents in the $i$'th chiral or
antichiral system, placed at $z_i$. The Lax matrix is then
\begin{equation} 
	\mc{L}^a(z) = \sum c^{ab} \frac{1}{z - z_i} J_{b,(i)}, 
\end{equation}
where $c^{ab}$ is the inverse of the quadratic Casimir. 

Each order defect has a Lagrangian $S_{(i)}$, typically that of a free
chiral or antichiral theory. Integrating out the gauge fields of
four-dimensional Chern-Simons theory couples these order defects
giving a Lagrangian of the form
\begin{equation} 
	\frac{1}{\hbar} \sum_{i} S_{(i)} + \frac{1}{\hbar} \sum_{i,j} \frac{1}{z_i - z_j} J_{(i)} \wedge J_{(j)}  ,
\end{equation}
where we treat the current in a chiral theory as a $(1,0)$ form and an
antichiral theory as a $(0,1)$-form, so that the
$J_{(i)} \wedge J_{(j)}$ term vanishes unless one defect is chiral and
the other antichiral.

The simplest models of this form have only two defects, one a system
of chiral fermions and one a system of antichiral fermions. In this
case, the model is the Gross-Neveu or Thirring model, or some similar
model with a $\psi \psi \br{\psi} \br{\psi}$ interaction.

\subsection{Coupling an integrable field theory to the topological $\sigma$-model}
View an 't Hooft line at $z$ as living at the boundary of the
topological surface defect wrapping the region $y \le 0$.  Then, the
analysis of \cite{Costello:2019tri} immediately implies that, in the
effective two-dimensional theory, the topological surface defect is
coupled to the integrable field theory using the Lax operator
$\mc{L}(z)$.  Concretely, if $V_a$ are the vector fields on $G/P$
generating the $\mf{g}$ action, then the coupling is by
\begin{equation} 
	\frac{1}{\hbar} \int_{x, y \ge 0} \ip{\eta, V_a} \wedge \mc{L}^a(z). 
\end{equation}
(The gauge transformations of the topological $\sigma$-model are also
modified, as we will see explicitly below in the case $G = SL_2$.)

This modification of the integrable field theory plays the role of the
Q-operator.  It is an interface between the original integrable field
theory and the theory coupled to the topological $\sigma$-model on
$G/P$.  The main goal of this section is to justify this, by deriving
the TQ relation to leading order in $\hbar$.

Take $G = SL_2$, and take the basis $e$, $f$, $h$ of $SL_2$ given
$[e,f] = h$, $[h,e] = 2 e$, $[h,f] = -2 f$.  Explicitly,
\begin{equation} 
  e = \begin{pmatrix} 0 & 1\\ 0 &0   \end{pmatrix},
  \qquad
  f = \begin{pmatrix} 0 & 0\\ 1 &0   \end{pmatrix},
  \qquad h = \begin{pmatrix} 1 & 0\\ 0 & -1   \end{pmatrix} .
\end{equation}
The flag variety $G/P$ in this case is $\CP^1$ and we will work in a
neighbourhood of the origin. The topological $\sigma$-model has two
fields $\eta$, $\sigma$ and the Lagrangian is
\begin{equation} 
	\frac{1}{\hbar}\int_{x,y \ge 0} \left(\eta \wedge \d \sigma   +  \mc{L}^e(z) \wedge \eta   - 2 \mc{L}^h(z)\wedge  \eta \sigma - \mc{L}^f(z) \wedge \eta \sigma^2 \right). \label{eqn_ift_lagrangian} 
\end{equation}
In our basis, we have
\begin{equation} 
  \mc{L}^e(z) = \sum_i \frac{1}{z - z_i} J_{f,(i)},
  \quad
  \mc{L}^h(z) = \tfrac{1}{2}\sum_i \frac{1}{z - z_i} J_{h,(i)},
  \quad
  \mc{L}^f(z) = \tfrac{1}{2}\sum_i \frac{1}{z - z_i} J_{e,(i)} .
\end{equation}

If we denote the parameter for gauge transformations of $\eta$ as
$\chi$, then the gauge transformations of $\eta$ are modified by
\begin{equation} 
	\delta \eta = \d \chi - 2 \chi \mc{L}^h(z) - 2 \chi \mc{L}^f(z) \sigma. 
\end{equation}

Further, the fields of the integrable field theory also transform under the gauge transformation.  Consider the system at $z_i$.  The fields of this system couple to the $1$-form $\eta$ by the current 
\begin{equation} 
	\frac{1}{z - z_i}  \left(   J_{f,(i)}    -  J_{h,(i)} \sigma - J_{e,(i)}  \sigma^2 \right). 
\end{equation}
Since $\eta$ is the gauge field associated to the gauge transformation $\chi$, then the fields in the system at $z_i$ transform under $\chi$ according to this current.

We note that the Lagrangian is gauge invariant.  To see this, we note
our conventions are such that
\begin{equation} 
  \d \mc{L}^e = -2 \mc{L}^h \wedge \mc{L}^e,
  \qquad
  \d \mc{L}^h = - \mc{L}^e \wedge \mc{L}^f,
  \qquad
  \d \mc{L}^f = 2 \mc{L}^h \wedge \mc{L}^f. 
\end{equation}
Then gauge variation has two terms: variation of the field $\eta$, and
variation of the fields in the chiral or antichiral theories we are
coupling to.  Each term vanishes independently.  The variation of
$\eta$ gives (after integrating by parts and using the fact that the
gauge parameter vanishes on the boundary)
\begin{multline}
    (- 2 \chi \mc{L}^h -2 \chi \sigma \mc{L}^f) \d \sigma  
    + \chi  \d \mc{L}^e - 2\chi \mc{L}^e (\mc{L}^h + \sigma \mc{L}^f ) \\ 
    - 2 \chi \d(\sigma \mc{L}^h) + 4 \chi \sigma^2 \mc{L}^h  \mc{L}^f 
    - \chi \d (\mc{L}^f \sigma^2) + 2 \chi \sigma^2 \mc{L}^f  \mc{L}^h.
\end{multline}
This expression simplifies to 
\begin{equation} 
  \chi \d \mc{L}^e - 2 \chi \mc{L}^e \wedge \mc{L}^h -2 \chi \sigma \d \mc{L}^h - 2 \chi \sigma \mc{L}^e \wedge \mc{L}^f - \chi \sigma^2 \d \mc{L}^f + 2 \chi \sigma^2 \mc{L}^h \wedge \mc{L}^f   ,
\end{equation}
which vanishes by the Lax equation.   

The variation of the field in the defect at $z_i$ gives
\begin{equation} 
  \frac{1}{(z-z_i)^2} \eta \chi
  [J_{f,(i)}  - \sigma J_{h,(i)} - \sigma^2 J_{f,(i)},
  J_{f,(i)}  - \sigma J_{h,(i)} - \sigma^2 J_{f,(i)} ]  ,
\end{equation}
which vanishes. 

\subsection{The interface is conserved}

So far, we have explained how to build a gauge-invariant coupling
between the integrable field theory and the new topological degrees of
freedom.  We will make the topological system end at $y = 0$, with
boundary condition $\eta = 0$ (we also require that the gauge
parameter $\chi$ vanishes on the boundary).  This gives an interface
between the original integrable field theory, and that with the
topological fields added.

We would like this interface to play the role of the Q-operator.  For
this to happen, we need to first verify that this interface is
conserved. This amounts to checking that we get the same system if,
instead of placing the topological degrees of freedom on the region
$x,y$ where $y \ge 0$, we place them on the region $x,y$ where
$y \ge \eps$.

We can work in an axial gauge for the gauge field of the Poisson
$\sigma$-model. This sets the $y$-component $\eta_y$ to zero. The
boundary condition is that $\eta_x = 0$.  If we modify the location of
the boundary, then to first order in $\eps$ the boundary condition
becomes $\eta_x + \eps \partial_y \eta_x = 0$.

If we modify the location of the boundary condition, we add on the term 
\begin{equation}
\int_x	\int_{y = 0}^{\eps} \left(\eta \wedge \d \sigma   +  \mc{L}^e(z) \wedge \eta   - 2 \mc{L}^h(z)\wedge  \eta \sigma - \mc{L}^f(z) \wedge \eta \sigma^2 \right).	
\end{equation}
To leading order in $\eps$, this is 
\begin{equation} 
	\eps \int_{y = 0} \iota_{\partial_y}  \left(\eta \wedge \d \sigma   +  \mc{L}^e(z) \wedge \eta   - 2 \mc{L}^h(z)\wedge  \eta \sigma - \mc{L}^f(z) \wedge \eta \sigma^2 \right)	,
\end{equation}
where $\iota_{\partial_y}$ indicates contraction with this vector
field.  This vanishes to leading order in $\eps$, because our gauge
choice is $\eta_y = 0$ so that we need only consider terms that only
depend on $\eta_x$; by $\eta_x$ is proportional to $\eps$ by the
boundary condition $\eta_x + \eps \partial_y \eta_x = 0$.

To finish the analysis, we need to check that the modification of the boundary condition to $\eta_x + \eps \partial_y \eta_x = 0$ has no effect. The equations of motion for $\eta_x$ are (using the gauge condition that $\eta_y = 0$)
\begin{equation} 
	\partial_y \eta_x = \eta_x (\mc{L}^e_y - 2 \sigma \mc{L}^h_y - \sigma^2 \mc{L}^f_y) . 
\end{equation}
Since this is divisible by $\eta_x$, we see that the condition $\eta_x + \eps \partial_y \eta_x = 0$ is equivalent to the condition that $\eta_x = 0$.

We have shown that the interface is conserved, meaning that we can
move its location freely.  In that sense, it behaves like the
T-operator which we will now discuss.

\subsection{The TQ relation}

The T-operator is the path ordered exponential of the Lax matrix:
\begin{equation} 
	\mbf{T}(z) = \op{PExp} \int_{y=0} \begin{pmatrix}
		\mc{L}^h(z) & \mc{L}^e(z) \\
		\mc{L}^f(z) & - \mc{L}^h(z) 
	\end{pmatrix}.
\end{equation} 
The T-operator is also conserved, because of the flatness of the Lax
connection. (Note that there is no factor of $\hbar$ here, unlike in
the Lagrangian of the topological surface defect or the integrable
field theory.)

Let us consider placing the T-operator at the same location as the
Q-operator, that is, at the boundary of the topological surface
defect.

Varying the field $\eta$ in \eqref{eqn_ift_lagrangian} tells us that
\begin{equation} 
	\mc{L}^e(z) =  \d \sigma + 2 \mc{L}^h(z) \sigma + \mc{L}^f(z) \sigma^2.  
\end{equation}
When the transfer matrix is placed in the same location as the Q-operator, we can $\mc{L}^e(z)$ in the transfer matrix by this expression. This gives us 
\begin{equation} 
	\mbf{T}(z)\mbf{H}(z) = \op{PExp} \int_{y=0} \begin{pmatrix}
		\mc{L}^h(z) &   \d \sigma + 2 \mc{L}^h(z) \sigma + \mc{L}^f(z) \sigma^2   \\
		\mc{L}^f(z) & - \mc{L}^h(z) 
	\end{pmatrix}.
\end{equation}
Here, $\mbf{H}(z)$ is the interface given by the boundary of the
surface defect with fields $\sigma, \eta$.

The T-operator is the parallel transport of the connection given by the matrix in the equation, and as such it is invariant under gauge transformations.   There is a gauge transformation  
\begin{equation}
  \begin{split}
    & \begin{pmatrix}
      \mc{L}^h(z) &   \d \sigma + 2 \mc{L}^h(z) \sigma + \mc{L}^f(z) \sigma^2   \\
      \mc{L}^f(z) & - \mc{L}^h(z)  
    \end{pmatrix} 
    \\&	= \d \begin{pmatrix} 1 &  \sigma \\ 0 & 1 \end{pmatrix}  +  \begin{pmatrix} 1 & - \sigma \\ 0 & 1 \end{pmatrix} \begin{pmatrix} \mc{L}^h(z) + \sigma \mc{L}^f(z)  &  0  \\ \mc{L}^f(z) & -\mc{L}^h(z) - \sigma \mc{L}^f(z)  \end{pmatrix}    \begin{pmatrix} 1 &  \sigma \\ 0 & 1 \end{pmatrix}   .
  \end{split}
\end{equation}
Using this gauge transformation, the T-operator (in the presence of
the 't Hooft line) becomes
\begin{equation} 
	\mbf{T}(z) \mbf{H}(z) = \mbf{H}(z)  \op{PExp} \begin{pmatrix} \mc{L}^h(z) + \sigma \mc{L}^f(z)  &  0  \\ \mc{L}^f(z) & -\mc{L}^h(z) - \sigma \mc{L}^f(z)  \end{pmatrix}   .
\end{equation}
(On the right hand side we include the factor $\mbf{H}(z)$ to
emphasize that the path-ordered exponential is taking place at the
boundary of the topological surface defect.)

If we make the $x$-direction periodic, and take the trace of the
transfer matrix, we can rewrite this as
\begin{equation} 
	\mbf{T}(z) \mbf{H}(z) = \mbf{H}(z)  \exp\left( \int   \mc{L}^h(z) + \sigma \mc{L}^f(z)\right) +  \mbf{H}(z) \exp\left(  \int  - \mc{L}^h(z) - \sigma \mc{L}^f(z)\right) . 
\end{equation}
On the right hand side, we have two copies of the 't Hooft line,
dressed by the operators $\pm( \mc{L}^h(z) + \sigma \mc{L}^f(z) )$.

This is very close to the TQ relation, with the normalization we are using.   The operators $\pm(\mc{L}^h(z) + \sigma \mc{L}^f(z))$ are problematic, however.  The final step is to show that we can trade these operators at the boundary for a variation in the parameter $z$. 

\subsection{Varying $z$}

Let us consider the effect of varying the position $z$ of the
topological surface defect.  We will see that this can be compensated
for by a field redefinition.

Varying $z$ modifies the coupling between the topological theory and
the integrable field theory by adding the term
\begin{equation} 
  \int \eta \left( \partial_z \mc{L}^e(z) - 2 \sigma \partial_z \mc{L}^h(z) - \sigma^2 \partial_z \mc{L}^f(z) \right).   
\end{equation}
(The gauge variation of the chiral and antichiral fields is modified
in a similar way.) We will show that this can be compensated for by a
field redefinition.

On the chiral or antichiral defect at $z_i$, we can perform a field
redefinition given by the global $\mf{sl}_2$ symmetry.  We will label
the field redefinition by the associated current $J_{e,(i)}$,
$J_{f,(i)}$, $J_{h,(i)}$.  We are interested in these field
redefinitions as first-order perturbations of the identity.

For instance, if we have a set of complex chiral fermions at $z_i$ in
the fundamental representation of $\mf{sl}_2$, then
$1 + \eps J^e_{(i)}$ sends $\psi_1 \to \psi_1 + \eps \psi_2$,
$\psi^2 \to \psi^2 - \eps\psi^1$.

We can also make these field redefinitions depend on the field
$\sigma$ of the topological surface defect. For example, the field
redefinition associated to $1 + \eps \sigma J^e_{(i)}$ sends
$\psi_1 \to \psi_1 + \eps \sigma \psi_2$,
$\psi^2 \mapsto \psi^2 - \eps \sigma \psi^1$.

Varying $z$ to $z + \eps$ can be compensated for by the field
redefinition
\begin{equation} 
  1 +  \eps \left( \sum \frac{1}{z - z_i} (\tfrac{1}{2} J_{h,(i)} + \sigma J_{e,(i)})    \right).
\end{equation}
Under this field redefinition, the components of the Lax matrix vary as:
\begin{equation}
	\begin{split}
		\delta \mc{L}^e(z) &= \delta \sum \frac{1}{z - z_i} \delta J_{f,(i)} = \sum \eps \frac{1}{(z - z_i)^2} \left( -J_{f,(i)} +  \sigma J_{h,(i)}   \right)  = \eps \partial_z \mc{L}^e(z) - 2 \eps \sigma \partial_z \mc{L}^h(z), \\
		\delta \mc{L}^h(z) &= \tfrac{1}{2} \delta \sum \frac{1}{z - z_i} \delta J_{h,(i)} =  - \sum \eps \frac{1}{(z - z_i)^2}  \sigma J_{e,(i)} =  \eps \sigma \partial_z \mc{L}^f(z),  \\
		\delta \mc{L}^f(z) &= \delta \sum \frac{1}{z - z_i} \delta J_{e,(i)} =   \sum \eps \frac{1}{(z - z_i)^2}  J_{e,(i)} = -\eps \partial_z \mc{L}^f(z)  .
	\end{split}
\end{equation}
Thus,  this field redefinition sends
\begin{equation} 
	\mc{L}^e(z) - 2\sigma \mc{L}^h(z) - \sigma^2 \mc{L}^f(z) \mapsto  \mc{L}^e(z+\eps) - 2\sigma \mc{L}^h(z+\eps ) - \sigma^2 \mc{L}^f(z+\eps)  . 
\end{equation}
That is, it has the same effect as varying $z$.  

This shows us that the topological surface defect, when coupled to the integrable field theory, does not change when we vary $z$.    

What happens when the surface defect has a boundary?  In that case, in
the bulk of the surface defect, we can compensate for the variation of
$z$ by the field redefinition described above.  However, this
introduces an extra term in on the boundary.

Let us see this concretely in the case when we perform the required
field redefinition to a system of chiral fermions $\psi_1$, $\psi_2$,
$\psi^1$, $\psi^2$.  The transformation corresponding to the element
$e \in \mf{sl}_2$ sends $\psi_1 \to \psi_2$, $\psi^2 \to -\psi^1$.
This, of course, preserves the Lagrangian
$\frac{1}{\hbar} \int \psi_i \dbar \psi^i$.  When the surface defect
has a boundary, we only need to perform the field redefinition on the
region $y \ge 0$.  This sends $\psi_1 \to \delta_{y \ge 0} \psi_2$,
$\psi^2 \to - \delta_{y \ge 0} \psi^1$.  This does not preserve the
Lagrangian: we pick up a term
\begin{equation} 
	-\frac{1}{\hbar} \int \psi_2 (\dbar \delta_{y \ge 0}) \psi^1 = \frac{1}{\hbar} \int_{y = 0} \psi_2 \psi^1 = \frac{1}{\hbar} \int_{y = 0} J_{e}.  
\end{equation}

In the same way, if we perform the field redefinition associated to
any current $J$ in the region $y \ge 0$ in the chiral or antichiral
theories that we place at $z_i$, we pick up a boundary term given by
$\hbar^{-1} \int_{y = 0} J$.

This tells us that if, in the bulk of the surface defect, we vary $z$
to $z + \hbar$, we can compensate for this variation by the field
redefinition
$\hbar \left( \sum \frac{1}{z - z_i} (\tfrac{1}{2} J_{h,(i)} + \sigma
  J_{e,(i)}) \right)$, but that this also picks up a boundary term of
the form
\begin{equation} 
  \int_{y = 0}   \left( \sum \frac{1}{z - z_i} (\tfrac{1}{2} J_{h,(i)} + \sigma J_{e,(i)})    \right)
  = \int_{y = 0} \left(\mc{L}^h(z) + \sigma \mc{L}^f(z)\right).
\end{equation}
Note that the factors of $\hbar$ have cancelled out, and we have shown that
\begin{equation}
  \mbf{H}(z + \hbar) = \mbf{H}(z) \exp \left(  \int_{y = 0} \mc{L}^h(z) + \sigma \mc{L}^f(z) \right). 
\end{equation}
(Our analysis here is valid to leading order in $\hbar$.)

Finally, this allows us to prove the TQ relation:
\begin{equation} 
\begin{split}
	\mbf{T}(z) \mbf{H}(z) &= \mbf{H}(z)  \exp\left( \int   \mc{L}^h(z) + \sigma \mc{L}^f(z)\right) +  \mbf{H}(z) \exp\left(  \int  - \mc{L}^h(z) - \sigma \mc{L}^f(z)\right) \\ 
	&= \mbf{H}(z + \hbar) + \mbf{H}(z - \hbar)
\end{split}
\end{equation}
to leading order in $\hbar$.  

We have derived this for integrable field theories obtained as order
defects in four-dimensional Chern-Simons theory.  The analysis is
valid, with some small variations, in the trigonometric and elliptic
cases also.  However, a more involved analysis will be necessary to
study the Q-operators when we have disorder defects.

\section{Composing 't Hooft lines, QQ relations and  and parabolic Verma modules} \label{sec:QQ}
The most fundamental relation among the T and Q operators is the QQ relation, which expresses the product of two Q operators in terms of the T-operator associated to a Verma module.   Here we will proof the 't Hooft line version of this relation.  If $\mu$ is a minuscule coweight, we let $\mbf{H}_{\mu}(z)$ be the corresponding 't Hooft line, defined as before in terms of a surface defect for $G/P$ with a boundary condition.

In this section we will relate the product of two 't Hooft lines with a certain analytically-continued quantum mechanics.  The quantum mechanics lives on $T^\ast G/P$ viewed as a holomorphic symplectic manifold.   In analytically-continued quantum mechanics,  we have access to the algebra of operators of the quantum mechanical system, but we have not specified which representation of this algebra will be the Hilbert space.   The algebra of operators is what mathematicians would call $\op{Diff}(G/P, K^{1/2})$, the algebra of holomorphic differential operators on $G/P$ twisted by $K^{1/2}$. The factor of $K_X^{1/2}$ is included to ensure that the system is symmetric under time-reversal.   We include a factor of $\hbar$ in our definition of $\op{Diff}(G/P, K^{1/2})$, viewing it as a deformation quantization of $T^\ast G/P$.  The factor of $\hbar$ can be removed by redefinition of the generators. 

The system admits a deformation, where the algebra of operators is $\op{Diff}(G/P, K^{c})$.  This deformation comes from a holomorphic symplectic deformation of the target manifold $T^\ast G/P$.  We will describe this algebra explicitly momentarily. 

We will let $\mbf{T}^{\mu}_j (z)$ be the analytically-continued line defect in four-dimensional Chern-Simons theory obtained by coupling analytically-continued quantum mechanics, with algebra of operators $\op{Diff}(G/P, K^{c})$, at the point $z$ in the spectral parameter plane. This algebra acts on a parabolic Verma module for $\mf{g}$ with highest weight $j = c \op{dim} \g_1$; it is more convenient to parametrize in terms of $j$ rather than $c$.  

Here we will show the following proposition.
\begin{proposition}
For any minuscule coweight $\mu$ of any simple group, There is an isomorphism of analytically-continued line defects
	\begin{equation} 
		\mbf{H}_{-\mu}(-a \hbar) \mbf{H}_{\mu}(a \hbar) \iso \mbf{T}^{\mu}_{\op{dim} \g_1\left(-\tfrac{1}{2} + \tfrac{2 a}{ \sh^\vee} \right)   }(0) ,
	\end{equation}
where $\sh^\vee$ is the dual Coxeter number.
\end{proposition}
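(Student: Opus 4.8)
The plan is to use the surface-defect description of Section~\ref{sec:surface_defects}, in which the composition of two parallel 't Hooft lines becomes transparent. I would represent $\mbf{H}_{\mu}(a\hbar)$ as the boundary of the topological $\sigma$-model with target $G/P$ sitting at $z=a\hbar$, and $\mbf{H}_{-\mu}(-a\hbar)$ as the boundary of the $\sigma$-model with target $G/P^-$ (the opposite parabolic) at $z=-a\hbar$, both ending at $y=0$. The key structural point is that the charges $-\mu$ and $+\mu$ carried at $\infty$ by the two Q-operator lines cancel, so the fused line sees the standard boundary condition at $z=\infty$ and couples to the ordinary Yangian $Y(\g)$ rather than a shifted one; this is precisely what distinguishes the present configuration, which will produce a parabolic Verma-module Wilson line, from the single Q-operator of Section~\ref{sec:q}, which retains a net charge at infinity.

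Next I would compactify on the $z$-plane $\CP^1$ and extract the effective one-dimensional system on the fused line, following the localization argument of Section~\ref{sec:thooft_infinity}. The two $\sigma$-models couple through the exchange of a single bulk gluon, with strength fixed by the quadratic Casimir and by the factor $1/(z-z')=1/(2a\hbar)$ from the separation of the defects. Working perturbatively around a base point of $G/P$, the first defect contributes the cell $\exp(\g_{-1})\cdot\mathrm{pt}\cong\g_{-1}$ and the second the dual directions $\g_1\cong\g_{-1}^\vee$, paired by the $\g_1\otimes\g_{-1}$ component of the Casimir; the system therefore localizes to topological quantum mechanics on the symplectic vector space $\g_{-1}\oplus\g_1$, the open cell of $T^\ast(G/P)$. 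Globalizing over $G/P$, I expect the full phase space to be $T^\ast(G/P)$, in agreement with the $N=1$, $M=2$ Coulomb-branch computation of Section~\ref{sec:coulomb}, which produced the $A_1$ singularity (resolved by $T^\ast\CP^1$) for $SL_2$.

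I would then quantize. The quantization of $T^\ast(G/P)$ with the half-density normalization required by time-reversal symmetry is $\op{Diff}(G/P,K^{1/2})$, and turning on the separation $2a\hbar$ deforms the twist to $\op{Diff}(G/P,K^{c})$. The $G$-action furnishes a map $U(\g)\to\op{Diff}(G/P,K^{c})$ under which the space of sections is a parabolic Verma module of highest weight $j=c\,\op{dim}\g_1$, which is by construction the line defect $\mbf{T}^{\mu}_{j}(0)$. To upgrade this equality of algebras to an isomorphism of line defects I would match the coupling to a transverse Wilson line, i.e.\ verify that the $L$-operator of the fused configuration is the product $L_{-\mu}(z+a\hbar)\,L_{\mu}(z-a\hbar)$ of the two oscillator $L$-operators of Section~\ref{sec:q}, and that this product is the Verma-module $L$-operator $\til{L}(z)$; for $SL_2$ this is the explicit matrix product recorded in Section~\ref{sec:coulomb}, and in general it is forced by the uniqueness of an $L$-operator satisfying the RLL relation with the prescribed behaviour at infinity.

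The crux, and the step I expect to be hardest, is pinning down $c=-\tfrac{1}{2}+\tfrac{2a}{\sh^\vee}$. The constant $-\tfrac{1}{2}$ is the half-density shift of the $K^{1/2}$ twist, while the linear term should arise from normal-ordering the two fused moment maps: the contraction of the nilpotent generators of $\g_{\pm1}$ produces the quadratic Casimir of $\g$ in the adjoint, and hence the dual Coxeter number $\sh^\vee$. Equivalently one can fix it from the $SL_2$ statement of Section~\ref{sec:coulomb} that a weight-$j$ Verma Wilson line is two 't Hooft lines separated by $\hbar(2j+1)$, so that matching $2a\hbar=\hbar\,\sh^\vee(c+\tfrac{1}{2})$ reproduces the claimed formula. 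Carefully tracking the Killing-form normalization, the swap $\g_1\leftrightarrow\g_{-1}$ relating $G/P$ and $G/P^-$, and the normal-ordering constant uniformly across all the minuscule cases is the main technical labor, and is where a stray factor of two or sign is most likely to hide; the $2i\op{sin}(\phi/2)$ prefactor of the trace-level relation~\eqref{eqn_qq} does not appear here, since we compare line defects and not their twisted traces.
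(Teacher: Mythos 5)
Your classical picture is essentially the paper's: both arguments represent the two 't~Hooft lines as boundaries of $G/P$ surface defects and localize the fused configuration onto topological quantum mechanics on $T^\ast(G/P)$. But the paper gets there by a cleaner route --- the two half-plane defects at the \emph{same} $z$ are gauge-equivalent to a single defect on a strip $-\eps\le y\le\eps$ with Neumann conditions at both ends, and the Poisson $\sigma$-model on such a strip collapses exactly to quantum mechanics on the cotangent bundle --- whereas you import the single-gluon-exchange coupling of section \ref{sec:thooft_infinity}, which was designed for one defect at $0$ and one at $\infty$ (where the $z^2\partial_z$ coupling renders the exchange finite). For two bulk defects separated by $2a\hbar$ the would-be $1/(2a\hbar)$ coupling is singular as the lines collide and is not the mechanism at work; this step of your argument would not go through as written, even though the conclusion it aims at is correct.

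The genuine gaps are in the quantum part. First, the paper's identification $\mbf{H}_{-\mu}(-\hbar a)\mbf{H}_{\mu}(\hbar a)=\mbf{T}^{\mu}_{j}(\lambda)$ at the quantum level rests entirely on the rigidity theorem of appendix \ref{appendix:parabolicverma}: the $T^\ast(G/P)$ Wilson line is unobstructed and admits exactly two deformations (the $z$-position and the twist $K^{c}$), so the fused line \emph{must} be some $\mbf{T}^{\mu}_{j}(\lambda)$ and the whole problem reduces to fixing two constants. Your substitute --- uniqueness of an L-operator satisfying RLL with prescribed behaviour at infinity --- is asserted rather than proved, and verifying that the product of the two oscillator L-operators equals the parabolic Verma L-operator beyond $SL_2$ is itself a nontrivial computation you have not carried out. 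Second, and most importantly, you do not have a uniform derivation of the slope $2\op{dim}\g_1/\sh^\vee$. Your ``normal-ordering produces the quadratic Casimir, hence $\sh^\vee$'' is a heuristic, and matching against the $SL_2$ statement of section \ref{sec:coulomb} only fixes the coefficient for $\mf{sl}_2$. The paper's actual argument is the framing anomaly: the dual of any line defect is its reflection shifted by $\sh^\vee/2$ in $z$, so $\mbf{H}_{-\mu}(-\hbar\sh^\vee/4)\mbf{H}_{\mu}(\hbar\sh^\vee/4)$ admits an interface with the trivial line, which forces the highest weight to vanish at $a=\sh^\vee/4$ and, combined with the linearity and oddness of $j(a)+\tfrac{1}{2}\op{dim}\g_1$ in $a$ (which follow from the $z$--$\hbar$ scaling and reflection symmetries), pins down the coefficient for every minuscule coweight of every group at once. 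That is the idea your proposal is missing.
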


\subsection{Partial flag varieties and oscillator algebras}

To make the proposition more concrete, let us describe in more detail
the algebra of twisted differential operators on $G/P$, and how it
relates to oscillator algebras.  As usual, write
$\g = \g_1 \oplus \g_0 \oplus \g_{-1}$ where
$\mf{p}= \g_0 \oplus \g_{1}$. Let $U\subset G/P$ be the open subset
which is the orbit of the base point under $\exp(\g_{-1})$.  If $q_i$,
$p^i$ are bases of $\g_{1}$, $\g_{-1}$, the algebra $\op{Diff}(U)$ is
generated by variables $q_i$, $p^i$ with canonical commutation
relations
\begin{equation} 
	[p^i, q_j] = \hbar \delta^i_j.	
\end{equation}
For every complex number $c$, there is an injective homomorphism
\begin{equation} 
	\op{Diff}(G/P, K^{c}) \to \op{Diff}(U) 
\end{equation}
defined, geometrically, by restricting a global differential operator
to $U$ and using a trivialization of the bundle $K$ on $U$.

A natural Fock representation of $\op{Diff}(U)$ on $\C[p^i]$, where
$q_i$ acts as $-\hbar \partial_{p^i}$, gives rise to a representation
of $\op{Diff}(G/P, K^{c})$.  This representation is a parabolic Verma
module.  It has a highest weight vector annihilated by $\mf{g}_{1}$ in
a rank one representation of $\mf{g}_0$.  It will be convenient to use
the highest weight of the parabolic Verma module, rather than the
twist parameter $c$, as our parameter.

To describe $\op{Diff}(G/P, K^{c})$ as a subalgebra of the oscillator
algebra, we will choose a basis of $\g$ compatible with the
decomposition $\g = \g_1 \oplus \g_0 \oplus \g_{-1}$.  Our basis
elements will be $\t_i$, $\t^i$ for $\g_{-1}$,$\g_1$, and $\t_\alpha$
for the orthogonal complement of $\mu$ in $\mf{g}_0$, and $\t_{0}$
corresponding to the coweight $\mu$.  The commutation relations are
\begin{equation} 
	\begin{split}
		[\t^i, \t_j] &= -\frac{1}{\ip{\mu,\mu}} \delta^i_j \t_0 + f^{i\alpha}_{j} \t_\alpha, \\
		[\t^0, \t^i] &= -\t^i, \\
		[\t^0, \t_i] &=  \t_i, \\
		[\t_\alpha, \t^i] &= c_{\alpha \beta} f^{i\beta}_{j} \t^j, \\
		[\t_\alpha, \t_i] &= - \c_{\alpha \beta} f^{j \beta}_{i} \t_j.  
	\end{split}
\end{equation}
Here $c_{\alpha \beta}$ is invariant pairing on $\mf{g}$.  

These act on symplectic manifold with coordinates $q_i$, $p^i$ by the Hamiltonians 
\begin{equation} 
	\begin{split}
		\t^i & \mapsto   p^i, \\
		\t_{0} & \mapsto   -q_i p^{i} + \hbar a, \\
		\t_\alpha & \mapsto - c_{\alpha \beta} f_{i}^{\beta j} p^i q_{j}, \\
		\t_i & \mapsto  -\frac{1}{\ip{\mu,\mu}} \hbar a  q_i -  C^{jk}_{il}  q_j q_k p^l 
	\end{split} \label{eqn:parabolic_coupling}
\end{equation}
for some tensor $C$ whose precise form doesn't matter. Here $a$ is the highest weight of the parabolic Verma module.

We can write the Lagrangian for the analytically-continued Wilson line as
	\begin{equation} 
		\frac{1}{\hbar} \int_{x} p^i \d q_i + \frac{1}{\hbar}   \int_{x} A_x^a H_a(p,q) ,
	\end{equation}
where $H_a$ are the Hamiltonians written above.

If we multiply all the  Lie algebra generators by $1/\hbar$, this gives a homomorphism from $\mf{g}$ to the oscillator algebra. The algebra of twisted differential operators on $G/P$ is the subalgebra of the oscillator algebra generated by $\mf{g}$, where the parameter $a$ encodes the twist.

The only parameter in this expression is the constant $a$. We can
transform the system by sending $p \to -p$ and reversing $x \to -x$,
which reverses the order of the line defect. This sends
$a \mapsto -a-\op{dim} \g_1$.  To see this, we note that sending
$x \to -x$ reverses the ordering of the operators, and so sends
\begin{equation} 
  -q_i p^i \mapsto p^i q_i = q_i p^i + \hbar(\op{dim} \mf{g}_1). 
\end{equation}
(Similarly for the Hamiltonian associated to $t_i$ which also depends
on $a$.)  We also pick up a sign from reversing the order of the
integral along $x$, leading to the transformation
\begin{equation} 
  -q_i p^i + \hbar a \mapsto  -q_i p^i - \hbar (  \op{dim} \mf{g}_1 +a ). 
\end{equation}

From this, we see that Wilson line is invariant under sending
$x \to -x$ has $a = -\tfrac{1}{2} \op{dim} \mf{g}_1 $.  When
describing the algebra as twisted differential operators on $G/P$,
this corresponds to the twist by $K^{1/2}$, which is self-dual.

\subsection{Colliding 't Hooft lines}
Now let us turn to the proof of the proposition.  As before, at $z=z_0$, we couple to the $\sigma$-model at $G/P$ wrapping $y \le 0$, giving the 't Hooft line $\mbf{H}_{\mu}(z_0)$. 

	We would like to compare this to the same surface wrapping $y \ge 0$. This gives the 't Hooft line of charge $-\mu$ at $z_0$. To see this, we recall that the presence of the surface defect is equivalent to allowing certain singularities in the gauge field. Having the surface defect at $y \le 0$ gives us a Hecke transformation removing the singularities, and having it at $y \ge 0$ creates them. 

Consider the configuration where we have the $G/P$ surface defect at $z_0$, in the range $-\eps \le y \le \eps$, with the same boundary conditions as before at $y = \pm \eps$. This configuration is given by the composition of line defects 
\begin{equation} 
	\mbf{H}_{-\mu}(z_0)  \mbf{H}_{\mu}(z_0). 
\end{equation}
By applying a gauge transformation, we can map this to a configuration
where the surface defects are at $y \le -\eps$, $y \ge \eps$.
Therefore, to understand the composition of 't Hooft lines, we have to
understand the simpler question of the effective quantum mechanical
system obtained from our topological $\sigma$-model on an interval of
small width.

In general, consider the Poisson $\sigma$-model with target $X$ on the
strip $[-\eps,\eps] \times \R$, with Neumann boundary conditions at
both ends, and with zero Poisson tensor.  As above, coordinates are
$y$, $x$ with $-\eps \le y \le \eps$. Neumann boundary condition means
that the field $\eta$, and its gauge transformation $\chi$, both
vanish at the boundary.

It is not difficult to check that the resulting model is equivalent to
topological quantum mechanics on $T^\ast X$.  Recall that the fields
are $\sigma\colon [-\eps,\eps] \times \R \to X$ and
$\eta \in \Omega^1([-\eps,\eps] \times \R, \sigma^\ast T^\ast X)$. We
note that varying $\eta_x$ tells us that $\partial_y \sigma = 0$, so
that $\sigma$ can be viewed as a map $\sigma\colon \R \to X$. If we
vary $\sigma$ by adding on a term that only depends on $y$, we find
that $\eta_x = 0$.  By a gauge transformation, we can set $\eta_y$ to
be $\eta^0 \delta_{y = 0}$, for some
$\eta^0 \in \Omega^0(\R, \sigma^\ast T^\ast X)$.  The resulting action
is $\int_{\R} \eta^0 \d \sigma$, as desired.

Applying this to our situation, we find that $\mbf{H}_{-\mu}(z_0) \mbf{H}_{\mu}(z_0)$ is equivalent to a Wilson line where we couple topological quantum mechanics on $T^\ast G/P$.

This analysis is valid at the classical level. To lift it to the quantum level, we need to know what possible deformations this Wilson line has.  In appendix \ref{appendix:parabolicverma}  we prove the following. 
\begin{theorem}
	The analytically-continued Wilson line given by topological quantum mechanics on $T^\ast G/P$ exists at the quantum level. Furthermore, this line defect has only two deformations, one given by shifting the position in the $z$-plane, and one by replace differential operators by differential operators twisted by $K^{c}$.	
\end{theorem}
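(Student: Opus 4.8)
The plan is to repeat, for the collided line defect, the cohomological strategy of the 't Hooft line theorem of Section~\ref{sec:quantizing_tHooft}. First I would identify the differential-graded algebra $A$ of local operators on the Wilson line obtained by coupling topological quantum mechanics on $T^\ast G/P$ to the bulk. By the analysis just given, this quantum mechanics has operator algebra the oscillator (Weyl) algebra $B = \op{Diff}(U)$ of \eqref{eqn:parabolic_coupling}, into which $\g$ maps by the parabolic comoment $\phi\colon \g \to B$; moreover every twist $\op{Diff}(G/P,K^c)$ embeds into this fixed $B$, so that changing the twist is realized as deforming the homomorphism $\phi$ rather than the target. Since the current coupling only involves the value of the gauge field at $z=0$, the ideal $z\g[[z]]$ acts trivially on $B$ while $\g = \g[[z]]/z\g[[z]]$ acts adjointly through $\phi$. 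Consequently, classically $A = C^\bullet(\g[[z]], B)$ with $B$ the adjoint module, and $H^1(A)$, $H^2(A)$ compute the first-order deformations and the anomalies. As in the earlier proofs, the symmetry rotating $z$ and $\hbar$ together would be used to argue the quantum answer coincides with this classical computation.

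Next I would run the Hochschild--Serre spectral sequence for the ideal $z\g[[z]] \triangleleft \g[[z]]$,
\[
  E_2^{p,q} = H^p\bigl(\g,\, H^q(z\g[[z]],\C) \otimes B\bigr) \;\Rightarrow\; H^{p+q}(\g[[z]], B),
\]
using $H^0(z\g[[z]],\C) = \C$ and $H^1(z\g[[z]],\C) \iso \g^\vee$ (the adjoint representation, since $[z\g[[z]], z\g[[z]]] = z^2\g[[z]]$), together with the $\g[[z]]$-cohomology results of \cite{MR2415401}. The two expected deformations then appear in total degree one. The term $E_2^{1,0} = H^1(\g, B)$ carries the twist class, namely the derivative in $c$ of the family $\phi_c\colon \g \to B$ --- an explicit cocycle supported on $\g_0 \oplus \g_{-1}$ and read off from \eqref{eqn:parabolic_coupling} --- which is exactly the deformation replacing $K^{1/2}$ by $K^c$. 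The term $E_2^{0,1} = (\g^\vee \otimes B)^\g = \Hom_\g(\g, B)$ carries the spectral class, the canonical equivariant map given by $\phi$ itself, whose descendant shifts $z_0$. Along the way I would check $H^0(\g, B) = B^\g = \C$ and that each of these two classes is one-dimensional.

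The crux, and the step I expect to be the main obstacle, is the vanishing in total degree two: I must show that $E_2^{2,0} = H^2(\g, B)$, $E_2^{1,1} = H^1(\g, \g^\vee \otimes B)$, and $E_2^{0,2} = H^0\bigl(\g, H^2(z\g[[z]],\C) \otimes B\bigr)$ all vanish (or are killed by $E_2$-differentials), giving $H^2(A) = 0$ and hence an anomaly-free line defect. This demands genuine control of $B$ as a $\g$-module; I would pass to the associated graded $\op{gr} B = \Oo(\g_1 \oplus \g_{-1})$, where the action is the moment-map Poisson action of $\g$, and compute $H^\bullet(\g, \Oo(\g_1 \oplus \g_{-1}))$ by decomposing the polynomial ring into $\g$-isotypic pieces and exploiting $\g_0$-invariance exactly as in the 't Hooft line proof, before lifting back to $B$ by a spectral-sequence comparison.

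As a consistency check that fixes the two classes, I would note that the result matches the geometry of the colliding pair: each factor $\mbf{H}_{\pm\mu}$ has a one-dimensional deformation space, its spectral parameter, by the theorem of Section~\ref{sec:quantizing_tHooft}. Their center-of-mass combination is the overall shift in the $z$-plane, while their relative separation $a$ becomes the twist --- in precise agreement with the dependence on $a$ of the proposition's image $\mbf{T}^{\mu}_{\op{dim}\g_1(-\tfrac{1}{2} + 2a/\sh^\vee)}(0)$. This confirms both that exactly these two deformations survive and that no operator localized at the collision contributes a third.
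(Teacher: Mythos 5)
Your overall strategy --- compute $H^1$ and $H^2$ of $C^\ast(\g[[z]],B)$ where $B$ is the operator algebra of the quantum mechanics, with $H^1$ giving deformations and $H^2$ anomalies --- is the same as the paper's, but two things go wrong in the execution. First, you take $B$ to be the full Weyl algebra $\op{Diff}(U)$ on the big cell and propose to control it through $\op{gr}B=\Oo(\g_1\oplus\g_{-1})$. The line defect in the theorem has operator algebra $\op{Diff}(G/P,K^{c})$, whose classical limit is $\Oo(T^\ast G/P)$, a proper $\g$-submodule of $\Oo(\g_1\oplus\g_{-1})$ (already for $G=SL_2$: $\Oo(T^\ast\CP^1)\iso\oplus_j V_{2j}$ versus $\Oo(\C^2)\iso\oplus_n V_n$). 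The distinction is not cosmetic: $\op{Diff}(U)$ is not locally finite as a $\g$-module (the element $q_i$ generates a Verma-type module under the cubic Hamiltonians of \eqref{eqn:parabolic_coupling}), so the semisimplicity that makes the cohomology computable is lost, and the multiplicity counts of the adjoint and of images of $\wedge^2\g$ --- which are the actual content of the proof --- would be done in the wrong module. Second, your placement of the twist deformation at $E_2^{1,0}=H^1(\g,B)$ fails for the correct $B$: since $\op{Diff}(G/P,K^{c})$ decomposes into finite-dimensional irreducibles, $H^1(\g,B)=0$ by Whitehead's lemma, so the derivative of the family $\phi_c$ is a coboundary there and contributes nothing. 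The twist is not a deformation of the coupling at all; it is a deformation of the quantization of the target, classified by $H^2(G/P)=\C$, and the paper counts it separately before running the Lie-algebra-cohomology argument. (Your cocycle only looks non-trivial because in $\op{Diff}(U)$ the trivializing element would have to be a logarithm of $p$, which is not polynomial --- another symptom of having taken the wrong $B$.)

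Beyond this, the step you flag as the main obstacle is exactly where the paper's work lies, and you leave it unexecuted. The paper invokes the result of \cite{Costello:2017dso}, appendix C, that $H^2(\g[[z]],B)$ vanishes unless there is a $G$-invariant map $\wedge^2\g\to B$ not factoring through the adjoint, and that $\dim H^1(\g[[z]],B)$ equals the multiplicity of the adjoint representation in $B$; this replaces your Hochschild--Serre analysis wholesale. The remaining content is finite-dimensional invariant theory: writing $\Oo(T^\ast G/P)=\oplus_V V\otimes(V^\ast\otimes\Sym^\ast\g_1)^P$ by Peter--Weyl, reducing to counting $P$-invariants in $\g\otimes\Sym^\ast\g_1$ and $\wedge^2\g\otimes\Sym^\ast\g_1$ by successively imposing $\rho$-, $\mf{l}$- and $\g_1$-invariance, and checking case by case that the adjoint of $\mf{l}$ occurs exactly once in $\g_1\otimes\g_{-1}$ for every minuscule coweight of every classical type and of $E_6$, $E_7$. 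Without that computation, carried out in the correct module, neither the vanishing of the anomaly nor the one-dimensionality of the space of coupling deformations is established.
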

This is proved by simply computing the cohomology groups which describe obstructions to quantizing the Wilson line and  deformations of the Wilson line. 

From this, we conclude that 
\begin{equation} 
	\mbf{H}_{-\mu}(0) \mbf{H}_{\mu}(0) = \mbf{T}^{\mu}_{j}(\lambda) 
\end{equation}
for some values of $\lambda,j$.   More generally, we have
\begin{equation} 
  \label{eqn_composite_param} 
  \mbf{H}_{-\mu}(-\hbar a) \mbf{H}_{\mu}(\hbar a)= \mbf{T}^{\mu}_{j}(\lambda),
\end{equation}
where $j$, $\lambda$ depend on $a$.

We can fix the parameter $\lambda$  by symmetries.  Recall that all the line defects are at $y = 0$. Consider the symmetry of the system which sends $y \to -y$ and $z \to -z$.   This sends 
\begin{equation} 
  \mbf{H}_{-\mu}(-\hbar a) \mbf{H}_{\mu}(\hbar a) \to \mbf{H}_{\mu}(-\hbar a) \mbf{H}_{-\mu}(\hbar a)
\end{equation}
and
\begin{equation} 
	\mbf{T}^{\mu}_{j}(\lambda) \to \mbf{T}^{\mu}_{j}(-\lambda). 
\end{equation}
We conclude that 
\begin{equation} 
		\mbf{H}_{\mu}(-\hbar a) \mbf{H}_{-\mu}(\hbar a) = \mbf{T}^{\mu}_{j}(-\lambda) . 
\end{equation}
In the case when $\mu$ and $-\mu$ are in the same Weyl orbit, $\mbf{H}_{\mu} = \mbf{H}_{-\mu}$ by a gauge transformation.  Therefore,
\begin{equation} 
	\mbf{H}_{-\mu}(-\hbar a) \mbf{H}_{\mu}(\hbar a) = \mbf{T}^{\mu}_{j}(-\lambda) = \mbf{T}^{\mu}_{j}(\lambda)  
\end{equation}
by \eqref{eqn_composite_param}.  This tells us that $\lambda = 0$.

In certain cases, $\mu$ and $-\mu$ are not in the same Weyl orbit.  This happens for the spinor coweights of $\mf{so}(4n)$ and the minuscule coweights of $\mf{sl}(n)$ with $n > 2$.  In this case, there is an outer automorphism which switches $\mu$ and $-\mu$.  Further, the analytically-continued line defects $\mbf{T}^{\mu}$ and $\mbf{T}^{-\mu}$ are actually isomorphic in these cases. In these cases, we can also conclude that $\lambda = 0$.

Next, we need to fix the value of the constant $j$ in
\eqref{eqn_composite_param}. Let us first take $a = 0$.  The surface
defect wrapping $-\eps \le y \le \eps$ at $z = 0$ is invariant under
the symmetry which sends $x \to -x$ and which sends
$\eta \mapsto -\eta$ where, as above, $\eta$ is the one-form field on
the surface defect.  This symmetry preserves the boundary conditions
at $y = \pm \eps$, and also extends to a symmetry of the coupled 2d-4d
system, where we also send $z \mapsto -z$.

When we send $\eps \to 0$ to turn the surface defect into a line
defect, the line defect is invariant under the symmetry sending
$x \to -x$ and $z \mapsto - z$. This switches the orientation of the
line, and so sends the algebra of operators to its opposite.  This
fixes $j =- 1/2 \op{dim} \g_1$, as this is the only value of $j$ for
which the line defect is isomorphic to its opposite.

We have proved, as desired, that
\begin{equation} 
	\mbf{H}_{-\mu}(0) \mbf{H}_{\mu}(0) = \mbf{T}_{-1/2 \op{dim} \g_1}(0).  
\end{equation}

If we include the shifts in the 't Hooft line, we must have
\begin{equation} 
	\mbf{H}_{-\mu}(-\hbar a) \mbf{H}_{\mu}(\hbar a) = \mbf{T}^{\mu}_{-\tfrac{1}{2}\op{dim} \g_{1} + F(a)} (0) 
\end{equation}
for some function of $a$.  The transformation $z \mapsto -z$,
$x \mapsto -x$ leads to the equality
\begin{equation} 
	\mbf{H}_{-\mu}(\hbar a) \mbf{H}_{\mu}(-\hbar a) = \mbf{T}^{\mu}_{-\tfrac{\op{dim} \g_1}{2} - F(a)} (0) 
\end{equation}
so that $F$ is an odd function of $a$.  

Further, the system with 't Hooft lines has a symmetry which scales $z$ and $\hbar$ simultaneously.  For this symmetry to extend to the line defect $\mbf{T}^{\mu}_{-\tfrac{1}{2}\op{\dim} \g_1 + F(a)} (0)$, the function $F(a)$ must be linear in $a$, leading to the conclusion. 

To fix the coefficient, we use the framing anomaly. The 't Hooft line
is invariant under the reflection $x \to -x$ which reverses the
ordering along the line.  As shown in \cite{Costello:2017dso}, for any
line defect, to construct the dual line, we first reflect and then
shift in the $z$ plane by $\sh^\vee/2$.  There is an interface linking
the trivial line with the fusion of the original line and its dual.

From this we see that
$\mbf{H}_{-\mu}(-\hbar \sh^\vee/4) \mbf{H}_{\mu}(\hbar \sh^\vee/4)$
has an interface with the trivial line.  Since
\begin{equation} 
	\mbf{H}(-\hbar \sh^\vee/4) \mbf{H}(\hbar \sh^\vee/4) = \mbf{T}^{\mu}_{-\tfrac{\op{dim} \g_1}{2} - F(\sh^\vee/4)} ,
\end{equation}
the right hand side must have an interface with the trivial line. This only happens when the highest weight is zero,  so  $F(\sh^\vee/4) = - \tfrac{\op{dim} \g_1}{2}$, so that in general
\begin{equation} 
	\mbf{H}_{-\mu}(-\hbar a) \mbf{H}_{\mu}(\hbar a) = \mbf{T}^{\mu}_{-\tfrac{\op{dim} \g_1}{2} + \tfrac{2 a \op{dim} \g_1} { \sh^\vee}   } (0). 
\end{equation}

In particular, for $\mf{g} = \mf{sl}_2$, we have $\sh^\vee = 2$ and $\op{dim} \g_1 = 1$, so that
\begin{equation} 
	\mbf{H}_{-\tfrac{1}{2}}(-\hbar a) \mbf{H}_{\tfrac{1}{2}}(\hbar a) = \mbf{T}^{+}_{a - \tfrac{1}{2} } (0) ,
\end{equation}
where on the right hand side we have the analytically-continued Verma module of spin $j = a - \tfrac{1}{2}$. 

Evidently, this is reminiscent of the standard QQ relation \cite{Bazhanov:2010ts}
\begin{equation} 
	 2 i \op{sin} (\hbar \phi/2 )	\mbf{T}^+_{j - \frac{1}{2}} (z,\phi) {=} \mbf{Q}_{+}(z + \hbar j, \phi) \mbf{Q}_{-}(z  - \hbar j,\phi)
\end{equation}
except for the dependence on the twist parameter $\phi$.  We will explain shortly how the dependence on the parameter arises from placing 't Hooft lines at $\infty$, and also explain how the normalizations work out. 

\subsection{The QQ relation for a minuscule coweight}
In this section we will derive the QQ relation for the 't Hooft line of an arbitrary minuscule coweight of an arbitrary group.  Recall that the Q-operator, appropriately normalized, comes from 't Hooft operator placed at $0$ and at $\infty$.  The 't Hooft operator at $\infty$ is given by a surface operator which is the $\sigma$-model whose target $\mf{g}_{1}$ and which couples only to the components of the gauge field in $\mf{g}_{1}$.  

By the analysis above, colliding the two 't Hooft lines at $\infty$ gives us an analytically-continued line defect whose algebra of operators is the oscillator algebra with generators $\til{p}_i$, $\til{q}^i$ with commutation relations 
\begin{equation} 
	[\til{p}_i, \til{q}^i] = \hbar. 
\end{equation}
This line defect is only coupled to the components $A^i$ of the gauge
field corresponding to $\mf{g}_{1}$, via
\begin{equation} 
  \frac{1}{\hbar} \int_{z = \infty} z^2 \partial_z A^i \til{p}_i .
\end{equation}
We can compute the L-operator from a probe Wilson line at $z$ crossing this line defect at $\infty$, by computing the exchange of a single gluon.  The exchange of a single gluon is
\begin{equation} 
	\t^i \til{p}_i    \lim_{z \to \infty} z^2 \partial_z \frac{1}{z} = -\t^i \til{p}_i
\end{equation}
(where $\t^i$ acts in the representation associated to the probe Wilson line).  Note that this expression is independent of $\hbar$, because the coupling with the line defect at $\infty$ has a factor of $1/\hbar$, and the propagator has a factor of $\hbar$.  

This tells us that, to leading order in $\hbar$, we have \begin{equation} 
	L(z) =   \exp \left( - \t^i \til{p}_i \right).
\end{equation}
Note that the L-operator is independent of $z$.  

For example, for $\mf{g} = \mf{sl}_2$ with a Wilson line in the fundamental representation, we find
\begin{equation} 
	L(z) = \begin{pmatrix}
		1 & - \til{p} \\
		0 & 1
	\end{pmatrix} \label{eqn:triangular_oscillator}
\end{equation}
to leading order in $\hbar$.  

We will denote this line defect at $\infty$ by
$\mbf{T}^{\mu}_{\infty}$, where $\mu$ is the minuscule coweight.  In
the case of $\mf{sl}_2$, we will write it as $\mbf{T}^{+}_{\infty}$,
$\mbf{T}^-_{\infty}$ where $+$, $-$ are the minuscule coweights
$\pm \half$.

In \cite{Bazhanov:2010ts}, they show that this L-operator appears in
the QQ relation.  They find the composition of two Q-operators gives
rise to two oscillator line defects. One is the Verma module for
$\mf{sl}_2$, and one the oscillator representation
\eqref{eqn:triangular_oscillator}. In symbols, before taking the
trace, their relation is
\begin{equation} 
  \label{eqn:QQ}
  \mbf{T}^{-}_{\infty}	\mbf{\til{T}}^+_{j - \frac{1}{2}} (z) {=} \mbf{Q}_{+}(z + \hbar j, \phi) \mbf{Q}_{-}(z  - \hbar j,\phi).
\end{equation}
Here, $\mbf{\til{T}}$ is the unnormalized T-operator.

This is exactly what we find in our analysis.  The Q-operator, up to
normalization, is given by an 't Hooft line at $z$ and an 't Hooft
line at $\infty$.  As before, we will abuse notation and let
$\mbf{H}_{\pm}(z)$ be the 't Hooft line of charge $\pm \tfrac{1}{2}$
at $z$ and $\mp \tfrac{1}{2}$ at $\infty$. Our relation from colliding
't Hooft lines is
\begin{equation} 
  \mbf{T}^{-}_{\infty} 	\mbf{T}^+_{j - \frac{1}{2}} (z) {=} \mbf{H}_{+}(z + \hbar j) \mbf{H}_{-}(z  - \hbar j),
\end{equation}
which is identical to \eqref{eqn:QQ}.  In our prescription, the extra
oscillator representation comes from the collision of 't Hooft lines
at $\infty$.

The only difference between what we found and what is found in
\cite{Bazhanov:2010ts} is in the normalization. Recall that the
Q-operators are normalized with respect to the 't Hooft line by
\begin{equation} 
	\mbf{Q}_{\pm}(z) = G(z)^L \mbf{H}_{\pm}(z)  ,
\end{equation}
where $L$ is the number of sites on the spin chain and
\begin{equation} 
	G(z) =\frac{1}{(2 \hbar)^{1/2}}    \frac{ \Gamma\left(\frac{1}{2 \hbar} (z+\tfrac{\hbar}{2}) \right)} 
	{\Gamma\left(\frac{1}{2 \hbar} (z+\tfrac{3\hbar}{2}) \right)},
\end{equation}
and the T-operators are normalized so that
\begin{equation} 
	\mbf{\til{T}}_j(z) = F(z,j)^L \mbf{T}_j(z),
\end{equation}
where 
\begin{equation} 
	F(z,j)  
	= \frac{1}{2  \hbar} \frac{ \Gamma\left(\frac{1}{2 \hbar} (z+\hbar(j+1) \right) \Gamma\left(\frac{1}{2 \hbar} (z -  \hbar j )  \right)} 
	{\Gamma\left(\frac{1}{2 \hbar} (z+\hbar(j+2) \right)  \Gamma\left(\frac{1}{2 \hbar} (z-\hbar j+ \hbar) \right)} 
\end{equation}
is the normalizing function we used earlier. 

We have
\begin{equation} 
	G(z+\hbar j) G(z - \hbar j) = F(z,j-\tfrac{1}{2}) 
\end{equation}
so that all the normalizing factors work out to match our relation
between 't Hooft lines with the QQ relation \eqref{eqn:QQ} derived in
\cite{Bazhanov:2010ts}.

\subsection{The QQ relation after taking the trace}

The line defect $\mbf{T}^-_{\infty}$ at $\infty$ couples to a Wilson
line only by an upper-triangular matrix.  Therefore, at first sight,
it seems that it disappears after taking the trace.

This is not quite correct, however, because the line defect has
algebra of operators the oscillator algebra. A trace for the
oscillator algebra only exists once we introduce a twist parameter.
The twist parameter is obtained by modifying the boundary condition so
that our gauge field does not vanish at $z = \infty$, but instead
takes constant value $\phi$ (which we can take to be in the Cartan).
The background field $\phi$ couples non-trivially to the quantum
mechanical system at $z = \infty$.  The coupling is quadratic. If we
take a basis $\phi_r$ of the Cartan, and a basis $\til{p}_i$ of
$\mf{g}_{1}$ with weights $w^r_i$, then the coupling is
\begin{equation} 
	-\frac{1}{\hbar}\sum \phi_r w^r_i \tfrac{1}{2} \left(  \til{p}_i \til{q}^i + \til{q}^i \til{p}_i \right) .
\end{equation}
This expression is time-reversal invariant.

This coupling gives a non-trivial Hamiltonian to the topological
quantum mechanics. We can absorb the factor of $\hbar$ into the
normalization of $\til{p}_i$. If we take the $x$-direction, which the
defect wraps, to be of period $1$, then we can compute the partition
function of the system by taking the trace on the Fock module
$\C[\til{p}_1,\dots \til{p}_k]$.

Since the Hilbert space is a tensor product of the spaces
$\C[\til{p}_i]$, we see the partition function is a product.  Each
factor in the tensor product gives us
\begin{equation} 
  e^{\tfrac{1}{2} \phi_r w^r_i} + e^{\tfrac{3}{2} \phi_r w_r^i} + \dots = \frac{1  }{e^{-\tfrac{1}{2} \phi_r w_{r_i}} - e^{\tfrac{1}{2} \phi_r w^r_i } } = -\frac{1}{2 \sinh (\tfrac{1}{2}\phi_r w^r_i ) } .
\end{equation}
Therefore the trace of the line defect at infinity is
\begin{equation} 
	\prod_{i} \frac{-1}{ 2 \sinh (\tfrac{1}{2}\phi_r w^r_i )} = \prod_{\alpha, \ip{\mu,\alpha} < 0}  \frac{-1}{2 \sinh (\tfrac{1}{2} \phi_{\alpha} ) }.
\end{equation}
On the right hand side, we have written the expression as a product
over all roots $\alpha$ with $\ip{\mu,\alpha} < 0$.

Including this factor from the collision of 't Hooft lines at
$\infty$, we find the general form of the QQ relation, valid for any
minuscule coweight, is
\begin{equation} 
	\prod_{\alpha, \ip{\mu,\alpha} < 0}  (-2) \sinh (\tfrac{1}{2} \phi_{\alpha} ) \mbf{T}^{\mu}_{-\tfrac{\op{dim} \g_1}{2} + a   } (z) =  	\mbf{H}_{-\mu}\left(z-\hbar  \frac{a \sh^\vee}{ \op{dim} \g_1 } \right) \mbf{H}_{\mu}\left(z+\hbar \frac{a \sh^\vee}{\op{dim} \g_1} \right) . 
\end{equation}

We can compare this with the relation derived in
\cite{Bazhanov:2010ts} in the case of $\mf{sl}_2$, which is
\begin{equation} 
	2 i \op{sin} ( \phi/2 )	\mbf{\til{T}}^+_{j - \frac{1}{2}} (z) {=} \mbf{Q}_{+}(z + \hbar j) \mbf{Q}_{-}(z  - \hbar j).
\end{equation}
The relations are the same after rotating the basis of the Cartan by
$-\i$ and including the normalizing factors relating the Q-operators
with the 't Hooft lines.

This completes our derivation of the QQ relation from the collision of
't Hooft lines.  It is worth pointing out that the last step, of
taking the trace, requires choosing a contour for the path integral of
the 't Hooft line.  This corresponds to replacing the
analytically-continued line defects by an actual line defect, by
choosing a module for the algebra of operators.

Although we do not give the detailed analysis here, we note that the
method applies equally well in the trigonometric case. In the
trigonometric case, the 't Hooft line at $\infty$ can be placed at
either $z = 0$ or $z = \infty$, leading to extra possibilities.

In the elliptic case, things are a little more complicated as there is
no way to place an 't Hooft line at $\infty$. A single 't Hooft line
must be viewed as an interface between integrable models with
dynamical parameter corresponding to different components of the
moduli of $G$-bundles on an elliptic curve.

\subsection*{Acknowledgements}

We would like to thank Edward Witten and Masahito Yamazaki for helpful
discussions.  This research was supported in part by a grant from the
Krembil Foundation. K.C. and D.G. are supported by the NSERC Discovery
Grant program and by the Perimeter Institute for Theoretical
Physics. J.Y. is supported by the National Key Research and
Development Program of China (No.\ 2020YFA0713000).  Research at
Perimeter Institute is supported in part by the Government of Canada
through the Department of Innovation, Science and Economic Development
Canada and by the Province of Ontario through the Ministry of Colleges
and Universities. Any opinions, findings, and conclusions or
recommendations expressed in this material are those of the authors
and do not necessarily reflect the views of the funding agencies.

\appendix

\section{The prefactor in the L-operator}
\label{sec:appendix_prefactor}

Consider an arbitrary Wilson line associated to a representation of
$\mf{sl}_2$ in which the quadratic Casimir
$\tfrac{1}{4} h^2 + \tfrac{1}{2}( ef + fe) $ acts by a constant $C$,
which in a highest weight representation with highest weight vector of
spin $j$, is $j(j+1)$. (Our conventions are such that $h$ acts by
$2j$.)  Let $\rho(e)$, $\rho(f)$, $\rho(h)$ be the matrices defining
this representation, where as usual $[e,f] = h$, $[h,e] = 2 e$,
$[h,f] = 2 f$.  Then the L-operator is
\begin{equation} 
	L(z) = F(z,j) \begin{pmatrix}
		z + \hbar \tfrac{1}{2} \rho(h) &\hbar  \rho(f) \\
		\hbar \rho(e) & z - \hbar \tfrac{1}{2} \rho(h)
	\end{pmatrix} .
\end{equation}
The quantum determinant relation states that 
\begin{equation} 
	L_{11}(z) L_{22}(z + \hbar) - L_{12}(z) L_{21}(z + \hbar) = 1. 
\end{equation}
(It can be tricky to track down the conventions which determine the
precise shift in $z$. However, there is only one possible shift for
which a term in the quantum determinant proportional to $h$ drops out,
and it is proportional to the identity operator.)

This becomes
\begin{equation} 
	(z + \hbar \tfrac{1}{2} \rho(h) )(z + \hbar -\hbar \tfrac{1}{2}\rho(h) ) -   \hbar^2 \rho(e) \rho(f)  = F(z,j)^{-1} F(z+\hbar,j)^{-1} .
\end{equation}
Expanding the left hand side we find
\begin{multline} 
	z^2 + \hbar z - \hbar^2 \tfrac{1}{4} \rho(h)^2 + \hbar^2 \tfrac{1}{2} \rho(h) - \hbar^2 \rho(e) \rho(f) \\
	=z^2 + \hbar z - \hbar^2 \tfrac{1}{4} \rho(h)^2 + \hbar^2 \tfrac{1}{2}  \rho(h) - \hbar^2 \tfrac{1}{2} \left( \rho(e) \rho(f) + \rho(f) \rho(e)  -  [\rho(f), \rho(e)] \right). 
\end{multline}
The terms $\rho(h)$ and $[\rho(f), \rho(e)]$ cancel, and $\tfrac{1}{4} \rho(h)^2 + \tfrac{1}{2}(\rho(e)\rho(f) + \rho(f) \rho(e))$ is $C= j (j+1)$.  Therefore the equation becomes
\begin{equation} 
	z^2 + \hbar z  -  \hbar^2 j(j+1) = F(z,j)^{-1} F(z+\hbar,j)^{-1}. 
\end{equation}
Note 
\begin{equation} 
	z^2 + \hbar z - \hbar^2 j(j+1) = (z + \hbar (j+1) ) (z - \hbar j). 
\end{equation}
A solution to this equation is given by 
\begin{equation} 
	F(z,j)  
	= \frac{1}{2  \hbar} \frac{ \Gamma\left(\frac{1}{2 \hbar} (z+\hbar(j+1) \right) \Gamma\left(\frac{1}{2 \hbar} (z -  \hbar j )  \right)} 
	{\Gamma\left(\frac{1}{2 \hbar} (z+\hbar(j+2) \right)  \Gamma\left(\frac{1}{2 \hbar} (z-\hbar j+ \hbar) \right)}.
\end{equation}

Note that factorize $F(z,j)$ is a product of two factors, one of which only depends on $z_+ = z+\hbar (j+1/2)$ the other on $z_- = z - \hbar (j+1/2)$:
\begin{equation} 
	F(z,j) = G(z_+) G(z_-)  ,
\end{equation}
where
\begin{equation} 
	G(z) =\frac{1}{(2 \hbar)^{1/2}}    \frac{ \Gamma\left(\frac{1}{2 \hbar} (z+\tfrac{\hbar}{2}) \right)} 
	{\Gamma\left(\frac{1}{2 \hbar} (z+\tfrac{3\hbar}{2}) \right)} .
\end{equation}
This factorization corresponds to the factorization of the T-operator
into a product of two 't Hooft lines.

\section{Quantization of analytically-continued Wilson lines}\label{appendix:parabolicverma}
Here we will prove the following.
\begin{theorem} 
	The analytically-continued Wilson line given by topological quantum mechanics on $T^\ast G/P$ exists at the quantum level. Furthermore, this line defect has only two deformations, one given by deforming  the topological quantum mechanics by a Wess-Zumino term associated to the class in $H^2(G/P) = \C$, and one given by the position in the $z$-plane. 
\end{theorem}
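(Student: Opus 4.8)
The plan is to identify the dg algebra $A$ of local operators on the line defect and then compute its cohomology in degrees $1$ and $2$, in exactly the spirit of the 't Hooft line quantization theorem of section \ref{sec:quantizing_tHooft}. The defect is topological quantum mechanics on $T^\ast G/P$, with operator algebra $\op{Diff}(G/P, K^{1/2})$, coupled to the bulk gauge theory by gauging the $G$-action through its moment map. Working perturbatively near the base point I would replace $G/P$ by its open cell $\exp(\mf{g}_{-1}) \cong \mf{g}_{-1}$, so that the matter content of $A$ is (the quantization of) the functions on $T^\ast \mf{g}_{-1} = \mf{g}_{-1} \oplus \mf{g}_1$, while the bulk contributes the ghosts $\partial_z^k \c^a$ together with the Chevalley--Eilenberg differential of $\g[[z]]$. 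Thus classically the gauge part of $A$ is $C^\ast(\g[[z]], \Oo(T^\ast G/P))$, where the constant mode $\g \subset \g[[z]]$ acts via the moment map and the higher modes $z^k \g$ act trivially. Deformations of the defect are $H^1(A)$ and anomalies are $H^2(A)$.

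As in section \ref{sec:quantizing_tHooft}, the first simplification is that every class must lie in the trivial representation of the reductive factor $\mf{g}_0$, which acts semisimply, so I would restrict to $\mf{g}_0$-invariant cochains. Writing $\mf{g}_0 = \C\cdot\mu \oplus \mf{l}$ with $\mf{l}$ semisimple, the invariant cochains split into those built purely from $\mf{g}_0[[z]] = \mf{l}[[z]] \oplus \C\cdot\mu[[z]]$ and those pairing a ghost in $\mf{g}_1$ against one in $\mf{g}_{-1}$ (possibly together with matter operators). For the $\mf{l}[[z]]$ piece I would invoke \cite{MR2415401}, which identifies $H^\ast(\mf{l}[[z]])$ with $H^\ast(\mf{l})$ and hence gives vanishing in degrees $1$ and $2$; these cochains therefore produce neither new closed classes nor obstructions. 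The terms of the form $\partial_z^k \c^i \partial_z^l \c_i$ are then handled by the same explicit BRST bookkeeping as in the 't Hooft computation: for each total $z$-weight there is at most one invariant closed combination, and it is exact, so nothing survives into $H^2$ from this sector.

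The two genuine deformations should then appear from two distinct sources. The ghost $\c^0$ in the $\mu$-direction is closed but not exact, and its descendant $\int A^0$ is, by the Witten effect (section \ref{sec:Witten}), precisely a shift of the spectral parameter; this is the first deformation, the position in the $z$-plane. The second deformation is internal to the matter system: it is the associative deformation of $\op{Diff}(G/P, K^{1/2})$ implementing the twist $K^{1/2} \to K^{c}$, realized geometrically by adding a Wess--Zumino term built from a representative of the generator of $H^2(G/P) = \C$. Since this deformation is manifestly $G$-equivariant it is $\g[[z]]$-invariant and is not killed by the differential. I would then verify that these two classes exhaust $H^1(A)$, and that $H^2(A)$ vanishes once the Hochschild contribution is included, establishing existence at the quantum level and exactly two deformations.

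The main subtlety is that the correct deformation complex combines the gauge-theoretic Chevalley--Eilenberg differential of $\g[[z]]$ with the Hochschild (associative-algebra) deformation complex of the matter algebra $\op{Diff}(G/P, K^{1/2})$; the twist deformation lives in the Hochschild part and is invisible to the trivial-coefficient computation used for a single 't Hooft line. I expect the hardest step to be controlling the Lie algebra cohomology of $\g[[z]]$ with these non-trivial coefficients and confirming that the strong-Macdonald-type input \cite{MR2415401} is strong enough to eliminate every potential degree-$2$ obstruction arising from the semisimple directions $\mf{l}[[z]]$, precisely as it did in section \ref{sec:quantizing_tHooft}.
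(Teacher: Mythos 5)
There is a genuine gap. Your strategy transplants the trivial-coefficient Chevalley--Eilenberg computation from the 't Hooft line quantization (section \ref{sec:quantizing_tHooft}), but the relevant complex here is $C^\ast(\g[[z]], B)$ with the \emph{nontrivial} module $B \cong \Oo(T^\ast G/P)$, on which the zero modes act through the moment map. The reduction to $\mf{g}_0$-invariants, the appeal to \cite{MR2415401} for $\mf{l}[[z]]$, and the $\partial_z^k \c^i \partial_z^l \c_i$ bookkeeping are all specific to trivial coefficients and do not carry over; in particular your identification of the spectral-parameter deformation with the ghost $\c^0$ fails here, since with the full set of $\g[[z]]$ ghosts present and the moment map acting on $B$, the cochain $\c^0 \otimes 1$ is no longer closed. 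The correct representative of that deformation is the unique copy of the adjoint representation inside $B$ (the moment map itself) paired with $\partial_z \c$, i.e.\ the coupling of $\partial_z A$ to the moment map.

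More importantly, the entire content of the theorem sits inside your sentence ``I would then verify that these two classes exhaust $H^1(A)$, and that $H^2(A)$ vanishes,'' for which you give no argument. The paper's route is to quote the general result of \cite{Costello:2017dso} (appendix C): $H^2(\g[[z]],B)$ vanishes unless there is a $G$-equivariant map $\wedge^2\g \to B$ not factoring through the adjoint, and $H^1(\g[[z]],B)$ has dimension equal to the number of adjoint copies in $B$. This reduces the theorem to two representation-theoretic facts about $\Oo(T^\ast G/P)$: (a) it contains exactly one copy of the adjoint, and (b) every $G$-map $\wedge^2\g \to \Oo(T^\ast G/P)$ factors through the adjoint. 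Proving these is where all the work is: one uses Peter--Weyl to rewrite the multiplicity spaces as $(\g \otimes \Sym^\ast \g_1)^P$ and $(\wedge^2\g \otimes \Sym^\ast\g_1)^P$, grades by the coweight $\rho$, passes to $\mf{l}$- and then $\g_1$-invariants, and finally checks case by case (types $A$, $B/D$ vector and spinor, $C$, $E_6$, $E_7$) that the adjoint of $\mf{l}$ occurs exactly once in $\g_1 \otimes \g_{-1}$. None of this appears in your proposal, and your proposed machinery would not produce it. Your separation of the Wess--Zumino/twist deformation into the ``Hochschild part'' is consistent with how the paper handles it (by classifying quantizations of the Poisson algebra via $H^2(G/P)$ before coupling to the gauge field), but that observation alone does not close the argument.
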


\begin{proof}
  Let us be a little more precise about the statement.  The Lagrangian
  for the quantum-mechanical system is
  \begin{equation}
    \frac{1}{\hbar} \int p \d q. 
  \end{equation}
  We will first fix some quantization of the quantum mechanical
  system, compatible with the $\C^\times$ action which scales $p$ and
  $\hbar$. Algebraically, this amounts to deforming the Poisson
  algebra of functions $\Oo(T^\ast G/P)$ into a non-commutative
  algebra equipped with a filtration whose associated graded is the
  original Poisson algebra. It is known \cite{MR2119140} that there is
  a bijection between such quantizations and classes in $H^2(G/P)$. In
  terms of the Lagrangian, this term is the Wess-Zumino term
  associated to a class in $H^2(G/P)$.

  Let us fix one such quantization, with quantum algebra $B$.  This is
  the algebra of local operators on the line defect. We now analyze
  whether we can couple this to four-dimensional Chern-Simons
  theory. Algebraically, this amounts to finding a homomorphism from
  the Yangian $Y(\g)$ to $B$, extending the known homomorphism
  $U \g \to B$.

  When we treat the gauge field classically, we can couple the line
  defect with algebra of operators $B$. The algebra of operators of
  the coupled system is $C^\ast(\g[[z]], B)$, the Lie algebra cochain
  complex of $\g[[z]]$ with coefficients in $B$.

  Anomalies to the line defect existing at the quantum level are given
  by $H^2$ with coefficients in the algebra of local operators of the
  couple along the line.  Deformations of the line defect are given by
  $H^1$.

  Consider any line defect with algebra of operators $B$.  In
  \cite{Costello:2017dso}, appendix C, it was proved that the
  cohomology group $H^2(\g[[z]], B)$ vanishes unless there is a
  $G$-invariant map $\wedge^2 \g \to B$ which does not factor through
  the adjoint representation.  Further, the group $H^1(\g[[z]], B)$
  has dimension the number of copies of the adjoint representation in
  $B$.  Each copy of the adjoint representation gives a deformation of
  the line defect where we couple $\partial_z A$ to that copy of the
  adjoint.

  If $B$ is any of the quantum algebras deforming functions on
  $T^\ast G/P$, then, as a representation of $G$, it is isomorphic to
  functions on $T^\ast G/P$. We need to check that:
  \begin{enumerate} 
  \item There is only one copy of the adjoint representation in $\Oo(T^\ast G/P)$. 
  \item Every map of $G$-representations $\wedge^2 \g \to \Oo(T^\ast G/P)$ factors through the adjoint representation.
  \end{enumerate}

  The case $G = SL_2(\C)$, $G/P = \CP^1$ is rather special. Here,
  since $\wedge^2 \mf{sl}_2(\C) = \mf{sl}_2(\C)$, the second condition
  is vacuous, and it is rather easy to see that there is only one copy
  of the adjoint representation in $\op{Diff}(\CP^1,K^{1/2})$. We will
  assume that $G \neq SL_2(\C)$ in what follows

  As a $G$-representation, the space $\Oo(T^\ast G/P)$ is isomorphic
  to the $P$-invariants in $\Oo(G) \times \Sym^\ast \mf{g}_1$, where
  as above we decompose
  $\mf{g} = \mf{g}_{-1} \oplus \mf{g}_0 \oplus \mf{g}_1$ and the Lie
  algebra of $P$ is $\mf{g}_0 \oplus \mf{g}_1$. Also $P$ acts on the
  right on $\Oo(G)$, and $G$ acts on the left on $\Oo(G)$ and
  $\Sym^\ast \mf{g}_1$ has only a $P$ action, not a $G$-action.

  By the Peter-Weyl theorem, $\Oo(G) = \oplus V_L \otimes V^\ast_R$
  where the direct sum is over all irreducible representations $V$ of
  $G$, and $V_L$ has the left action of $G$, $V^\ast_R$ is the dual
  representation with the right action of $G$.

  Therefore, functions on $T^\ast G/P$ can be written as
  \begin{equation} 
    \oplus_{V} V \otimes \left( V^\ast \otimes \Sym^\ast \g_1 \right)^P ,
  \end{equation}
  where we are summing over all irreducible representations of $G$.
  Each irreducible representation $V^\ast$ of $G$ is restricted to a
  $P$-representation, and each $\Sym^k \g_1$ is viewed as a $P$
  representation because $\g_1 \subset \g$ is a sub-$P$ representation
  of $\g_1$. (Recall that $P = \exp(\g_0 \oplus \g_1)$).

  The number of copies of the adjoint representation appearing in
  $\Oo(T^\ast G/P)$ is
  \begin{equation} 
    \op{dim}  ( \g \otimes \Sym^\ast \g_1)^P. 
  \end{equation}
  Note that
  \begin{equation} 
    \mf{g}_0 = \mf{l} \oplus \C \cdot \rho ,
  \end{equation}
  where $\mf{l}$ is a semi-simple Lie algebra with no Abelian factors,
  $\rho$ is the coweight defining $P$. The eigenvalues of $\rho$ on
  $\g_1,\g_0,\g_{-1}$ are $1,0,-1$. We can first pass to
  $\rho$-invariants of $\g \otimes \Sym^\ast \g_1$, which are
 \begin{equation} 
   \g_{-1} \otimes \Sym^1 \g_1 \oplus \mf{l}\otimes \Sym^0 \g_1 \oplus \C \cdot \rho\otimes \Sym^0 \g_1.
  \end{equation}
  Next, let us pass to $\mf{l}$-invariants. In every case, $\g_{-1}$
  is an irreducible representation of $\mf{l}$.  Then the space of
  $\mf{l}$-invariants is two dimensional, given by the invariants in
  $\g_{-1} \otimes \g_1$ and by the third factor $\C \cdot \rho$.
  But, $\rho$ is not invariant under $\g_1 \subset \mf{p}$, so that
  there is only at most a one-dimensional subspace of $P$-invariants,
  and hence at most one copy of the adjoint representation.  Since
  there is evidently at least one copy of the adjoint representation
  appearing in functions on $T^\ast G/P$, this proves there is exactly
  one copy, as desired.

  Next, we need to check that the space of $P$-invariants in
  $\wedge^2 \g \otimes \Sym^\ast \g_1$ is one-dimensional (spanned by
  the $P$-invariants in the adjoint representation).  This is required
  to show that there are no obstructions to quantizing the line
  defect, i.e.\ to show that the algebra of twisted differential
  operators on $G/P$ has a homomorphism from the Yangian.

  The $\rho$-invariants in $\wedge^2 \g \otimes \Sym^\ast \g_1$ are
  \begin{multline} 
    \wedge^2 \g_0 \otimes \Sym^0 \g_1 \oplus (\g_1 \otimes \g_{-1}) \otimes \Sym^0 \g_1 \\
    \oplus (\g_{-1} \otimes \g_0)\otimes \Sym^1 \g_1 \oplus (\wedge^2 \g_{-1}) \otimes \Sym^2 \g_1. 
  \end{multline}
  Let us now pass to $\mf{l}$-invariants. There are no $\mf{l}$-invariants in 
  \begin{equation} 
    \wedge^2 \g_0 = \wedge^2 (\mf{l} \oplus \C \cdot \rho) = \mf{l} \oplus \wedge^2 \mf{l}
  \end{equation}
  because $\mf{l}$ is semi-simple with no Abelian factors.  The possible $\mf{l}$ invariants live in
  \begin{align}	
    \g_{-1} \otimes \mf{l} \otimes \Sym^1 \g_1, \label{sym1}\\
    \g_{-1} \otimes \C \cdot \rho \otimes \Sym^1 \g_1, \label{sym1other}\\
    \g_{-1} \otimes \g_1 \otimes \Sym^0 \g_1, \label{sym0} \\
    \wedge^2 \g_{-1} \otimes \Sym^2 \g_1. \label{sym2}
  \end{align}	
  The first three lines manifestly contain at least one
  $\mf{l}$-invariant element.  The last line may or may not.

  Next, let us impose the constraint of invariance under $\g_1$,
  recalling that we are interested in invariants for
  $\mf{p} =\g_0 \oplus \g_1$.  Since $\g_1$ acts trivially on each
  symmetric power $\Sym^k \g_1$, we find that $\g_1$ invariance
  imposes independent constraints for each power $\Sym^k \g_1$ that
  appears.

  We will treat the $\Sym^0 \g_1$ factor \eqref{sym0} first. Using the
  fact that the component of the Lie bracket
  $\g_1 \otimes \g_{-1} \to \g_0$ landing in $\C \cdot \rho$ provides
  a non-degenerate pairing between $\g_1$ and $\g_{-1}$, we see that
  there are no $\g_1$ invariants in $\g_{-1} \otimes \g_1$.
	
  For the $\Sym^2 \g_1$ factor \eqref{sym2}, the same argument implies
  that there are no $\g_1$ invariants in $\wedge^2 \g_{-1}$.

  Only the $\Sym^1 \g_1$ factors \eqref{sym1}, \eqref{sym1other}
  remain.  Let us make the assumption that the adjoint representation
  of $\mf{l}$ appears only once in $\g_{-1} \otimes \g_1$. We will
  check this on a case by case basis shortly.

  Under this assumption, there are two $\mf{l}$ invariant tensors in
  $\g_{-1} \otimes \g_0 \otimes \Sym^1 \g_{1}$. One is in
  $\g_{-1} \otimes \C \cdot \rho \otimes \Sym^1 \g_1$ and one is in
  $\g_{-1} \otimes \mf{l} \otimes \Sym^1\g_1$. This first of these two
  tensor is not $\g_1$ invariant, so that we are left with at most one
  $\mf{p}$ invariant element in $\wedge^2 \g \otimes \Sym^\ast \g_1$,
  as desired.

  It remains to check that there is only one copy of the adjoint
  representation of $\mf{l}$ in $\g_1 \otimes \g_{-1}$.

  There are minuscule coweights for the Lie algebras of all classical
  types and $E_6,E_7$.  In type $A$, the minuscule coweights of
  $\mf{sl}_n(\C)$ are given by Lie algebra elements
  $\rho = \op{Diag}(1^k ,0^{n-k})$ for $1 \le k \le (n-1)/2$.  The
  simple Lie algebra is $\mf{l} = \mf{sl}_k \oplus \mf{sl}_{n-k}$.  If
  we decompose $\C^{n} = \C^k \oplus \C^{n-k}$, then the
  representation $\g_1$, $\g_{-1}$ of $\mf{l}$ are
  $\g_1 = \Hom(\C^k, \C^{n-k})$ and $\g_{-1}=\Hom(\C^{n-k}, \C^k)$.
  Thus, $\g_{\pm 1}$ are the bifundamental representations of
  $\mf{sl}_{k} \oplus \mf{sl}_{n-k}$.

  Manifestly, there is only one copy of $\mf{l}$ in the tensor product
  $\g_1 \otimes \g_{-1}$ in this case.

  For $SO(2n,\C)$, the spinorial minuscule coweight is defined as
  follows. (There are two Weyl orbits of coweights like this related
  by an outer automorphism of $SO(2n,\C)$; the argument is the same in
  each case.)  We decompose $\C^{2n} = \C^n_+ \oplus \C^n_-$ where
  $\C^n_{\pm}$ are null.  There is an element $\rho \in \mf{so}(2n)$
  where $\rho$ acts on $\C^n_+$ by $1/2$ and on $\C^n_-$ by $-1/2$.
  The subgroup $\mf{l}$ is $\mf{sl}_n$, where $\C^n_{\pm}$ are the
  fundamental and antifundamental representations of $\mf{l}$.  The
  subspaces $\mf{g}_{\pm 1}$ of $\mf{so}(2n,\C)$ are
  $\wedge^2 \C^n_{\pm}$, the exterior square of the fundamental and
  antifundamental representations of $\mf{sl}_n(\C)$.

  Again, there is only one copy of the adjoint of $\mf{sl}_n(\C)$ in
  the tensor product $\g_1 \otimes \g_{-1}$.

  For $SO(n+2,\C)$ there is another minuscule coweight obtained from
  the orthogonal decomposition $\C^{n+2} = \C^2 \oplus \C^n$.  The
  coweight $\rho$ is the generator of the $SO(2,\C)$ rotating
  $\C^2$. The simple algebra $\mf{l}$ is $\mf{so}(n,\C)$.  The spaces
  $\mf{g}_{\pm 1}$ are both the vector representation $\C^n$ of
  $\mf{so}(n,\C)$. Again, there is only one copy of $\mf{so}(n,\C)$ in
  $\g_1 \otimes \g_{-1}$.

  For $Sp(2n,\C)$ there is a minuscule coweight coming from the
  decomposition $\C^{2n} = \C^n_+ \oplus \C^n_-$ where $\C^n_{\pm}$
  are Lagrangians, and $\rho$ acts on $\C^n_{\pm}$ with weights
  $\pm 1/2$.  The Lie algebra $\mf{l}$ is $\mf{sl}_n$, and $\g_{\pm}$
  are $\Sym^2 \C^n_{\pm}$. These are the symmetric squares of the
  fundamental and antifundamental representations.  Again, it is easy
  to see that there is only one $\mf{sl}_n$ invariant element in
  $\g_1 \otimes \g_{-1}$.

  The remaining simple algebras with minuscule coweights are the
  exceptional algebras $\mf{e}_6$ and $\mf{e}_7$.  The algebra
  $\mf{e}_6$ has a minuscule coweight where $\mf{l} = \mf{so}(10)$,
  and $\g_{\pm 1}$ are the spin representations $S_{\pm}$.  If $V$
  denotes the vector representation of $\mf{so}(10)$, then
  \begin{equation} 
    S_+\otimes S_- = \wedge^0 V \oplus \wedge^2 V \oplus \wedge^4 V 
  \end{equation}
  so that the adjoint representation of $\mf{so}(10)$ appears exactly once.

  For $\mf{e}_7$, there is a minuscule coweight where
  $\mf{l} = \mf{e}_6$ and where $\g_{\pm}$ are the two irreducible
  $27$ dimensional representations of $\mf{e}_6$, which we denote
  $\mbf{27}$ and $\br{\mbf{27}}$.  One can check that there is only
  one copy of the adjoint in $\mbf{27} \otimes \br{\mbf{27}}$.
\end{proof}

\bibliographystyle{JHEP}
\bibliography{q}

\providecommand{\href}[2]{#2}\begingroup\raggedright\begin{thebibliography}{10}

\bibitem{Costello:2013zra}
K.~Costello, \emph{Supersymmetric gauge theory and the {Y}angian},
  \href{https://arxiv.org/abs/1303.2632}{{\ttfamily 1303.2632}}.

\bibitem{Costello:2017dso}
K.~Costello, E.~Witten and M.~Yamazaki, \emph{Gauge theory and integrability,
  {I}}, \href{https://doi.org/10.4310/ICCM.2018.v6.n1.a6}{\emph{ICCM Not.}
  {\bfseries 6} (2018) 46} [\href{https://arxiv.org/abs/1709.09993}{{\ttfamily
  1709.09993}}].

\bibitem{Costello:2018gyb}
K.~Costello, E.~Witten and M.~Yamazaki, \emph{Gauge theory and integrability,
  {II}}, \href{https://doi.org/10.4310/ICCM.2018.v6.n1.a7}{\emph{ICCM Not.}
  {\bfseries 6} (2018) 120} [\href{https://arxiv.org/abs/1802.01579}{{\ttfamily
  1802.01579}}].

\bibitem{Costello:2019tri}
K.~Costello and M.~Yamazaki, \emph{{Gauge Theory And Integrability, III}},
  \href{https://arxiv.org/abs/1908.02289}{{\ttfamily 1908.02289}}.

\bibitem{Vicedo:2019dej}
B.~Vicedo, \emph{Holomorphic {C}hern-{S}imons theory and affine {G}audin
  models},  \href{https://arxiv.org/abs/1908.07511}{{\ttfamily 1908.07511}}.

\bibitem{Delduc:2019whp}
F.~Delduc, S.~Lacroix, M.~Magro and B.~Vicedo, \emph{A unifying 2{D} action for
  integrable {$\sigma$}-models from 4{D} {C}hern-{S}imons theory},
  \href{https://doi.org/10.1007/s11005-020-01268-y}{\emph{Lett. Math. Phys.}
  {\bfseries 110} (2020) 1645}
  [\href{https://arxiv.org/abs/1909.13824}{{\ttfamily 1909.13824}}].

\bibitem{Fukushima:2020dcp}
O.~Fukushima, J.-i.~Sakamoto and K.~Yoshida, \emph{Yang-{B}axter deformations
  of the {$\rm AdS_5\times S^5$} supercoset sigma model from 4{D}
  {C}hern-{S}imons theory},
  \href{https://doi.org/10.1007/jhep09(2020)100}{\emph{JHEP} {\bfseries 09}
  (2020) Art. 100} [\href{https://arxiv.org/abs/2005.04950}{{\ttfamily
  2005.04950}}].

\bibitem{Bykov:2020nal}
D.~Bykov, \emph{{Quantum flag manifold $\sigma$-models and Hermitian Ricci
  flow}},  \href{https://arxiv.org/abs/2006.14124}{{\ttfamily 2006.14124}}.

\bibitem{Bittleston:2019gkq}
R.~Bittleston and D.~Skinner, \emph{Gauge theory and boundary integrability},
  \href{https://doi.org/10.1007/jhep05(2019)195}{\emph{JHEP} {\bfseries 05}
  (2019) 195, 52} [\href{https://arxiv.org/abs/1903.03601}{{\ttfamily
  1903.03601}}].

\bibitem{Bittleston:2019mol}
R.~Bittleston and D.~Skinner, \emph{Gauge theory and boundary integrability.
  {P}art {II}. {E}lliptic and trigonometric cases},
  \href{https://doi.org/10.1007/jhep06(2020)080}{\emph{JHEP} {\bfseries 06}
  (2020) 080, 34} [\href{https://arxiv.org/abs/1912.13441}{{\ttfamily
  1912.13441}}].

\bibitem{Costello:2020lpi}
K.~Costello and B.~Stefa\'{n}ski, Jr., \emph{Chern-{S}imons origin of
  superstring integrability},
  \href{https://doi.org/10.1103/physrevlett.125.121602}{\emph{Phys. Rev. Lett.}
  {\bfseries 125} (2020) 121602, 6}
  [\href{https://arxiv.org/abs/2005.03064}{{\ttfamily 2005.03064}}].

\bibitem{Baxter:1972hz}
R.J.~Baxter, \emph{Partition function of the eight-vertex lattice model},
  \href{https://doi.org/10.1016/0003-4916(72)90335-1}{\emph{Ann. Phys.}
  {\bfseries 70} (1972) 193}.

\bibitem{Bazhanov:2010ts}
V.V.~Bazhanov, T.~Lukowski, C.~Meneghelli and M.~Staudacher, \emph{A shortcut
  to the {Q}-operator},
  \href{https://doi.org/10.1088/1742-5468/2010/11/P11002}{\emph{J. Stat. Mech.}
  {\bfseries 1011} (2010) P11002}
  [\href{https://arxiv.org/abs/1005.3261}{{\ttfamily 1005.3261}}].

\bibitem{Bazhanov:2010jq}
V.V.~Bazhanov, R.~Frassek, T.~\L~ukowski, C.~Meneghelli and M.~Staudacher,
  \emph{Baxter {$\bf Q$}-operators and representations of {Y}angians},
  \href{https://doi.org/10.1016/j.nuclphysb.2011.04.006}{\emph{Nuclear Phys. B}
  {\bfseries 850} (2011) 148}
  [\href{https://arxiv.org/abs/1010.3699}{{\ttfamily 1010.3699}}].

\bibitem{Frassek:2020nki}
R.~Frassek, \emph{Oscillator realisations associated to the {D}-type {Y}angian:
  towards the operatorial {Q}-system of orthogonal spin chains},
  \href{https://doi.org/10.1016/j.nuclphysb.2020.115063}{\emph{Nuclear Phys. B}
  {\bfseries 956} (2020) 115063, 22}
  [\href{https://arxiv.org/abs/2001.06825}{{\ttfamily 2001.06825}}].

\bibitem{Witten:1979ey}
E.~Witten, \emph{Dyons of charge $e\theta/2\pi$},
  \href{https://doi.org/10.1016/0370-2693(79)90838-4}{\emph{Phys. Lett. B}
  {\bfseries 86} (1979) 283}.

\bibitem{Brundan:2004ca}
J.~Brundan and A.~Kleshchev, \emph{Shifted {Y}angians and finite
  {$W$}-algebras}, \href{https://doi.org/10.1016/j.aim.2004.11.004}{\emph{Adv.
  Math.} {\bfseries 200} (2006) 136}
  [\href{https://arxiv.org/abs/math/0407012}{{\ttfamily math/0407012}}].

\bibitem{MR3248988}
J.~Kamnitzer, B.~Webster, A.~Weekes and O.~Yacobi, \emph{Yangians and
  quantizations of slices in the affine {G}rassmannian},
  \href{https://doi.org/10.2140/ant.2014.8.857}{\emph{Algebra Number Theory}
  {\bfseries 8} (2014) 857}.

\bibitem{Frassek:2020lky}
R.~Frassek, V.~Pestun and A.~Tsymbaliuk, \emph{{Lax matrices from
  antidominantly shifted Yangians and quantum affine algebras}},
  \href{https://arxiv.org/abs/2001.04929}{{\ttfamily 2001.04929}}.

\bibitem{Braverman:2016pwk}
A.~Braverman, M.~Finkelberg and H.~Nakajima, \emph{Coulomb branches of {$3d$}
  $\mathcal{N}=4$ quiver gauge theories and slices in the affine
  {G}rassmannian},
  \href{https://doi.org/10.4310/ATMP.2019.v23.n1.a3}{\emph{Adv. Theor. Math.
  Phys.} {\bfseries 23} (2019) 75}
  [\href{https://arxiv.org/abs/1604.03625}{{\ttfamily 1604.03625}}].

\bibitem{Costello:2018txb}
K.~Costello and J.~Yagi, \emph{Unification of integrability in supersymmetric
  gauge theories},  \href{https://arxiv.org/abs/1810.01970}{{\ttfamily
  1810.01970}}.

\bibitem{Witten:2010zr}
E.~Witten, \emph{A new look at the path integral of quantum mechanics},  in
  \emph{Surveys in differential geometry. {V}olume {XV}. {P}erspectives in
  mathematics and physics}, vol.~15 of \emph{Surv. Differ. Geom.}, p.~345, Int.
  Press, Somerville, MA (2011),
  \href{https://doi.org/10.4310/SDG.2010.v15.n1.a11}{DOI}
  [\href{https://arxiv.org/abs/1009.6032}{{\ttfamily 1009.6032}}].

\bibitem{Kapustin:2005py}
A.~Kapustin, \emph{Wilson--'t {H}ooft operators in four-dimensional gauge
  theories and {$S$}-duality},
  \href{https://doi.org/10.1103/PhysRevD.74.025005}{\emph{Phys. Rev. D (3)}
  {\bfseries 74} (2006) 025005}.

\bibitem{Kapustin:2006hi}
A.~Kapustin, \emph{{Holomorphic reduction of $\mathcal{N} = 2$ gauge theories,
  Wilson--'t Hooft operators, and S-duality}},
  \href{https://arxiv.org/abs/hep-th/0612119}{{\ttfamily hep-th/0612119}}.

\bibitem{Kapustin:2006pk}
A.~Kapustin and E.~Witten, \emph{Electric-magnetic duality and the geometric
  {L}anglands program},
  \href{https://doi.org/10.4310/CNTP.2007.v1.n1.a1}{\emph{Commun. Number Theory
  Phys.} {\bfseries 1} (2007) 1}
  [\href{https://arxiv.org/abs/hep-th/0604151}{{\ttfamily hep-th/0604151}}].

\bibitem{MR3752460}
X.~Zhu, \emph{An introduction to affine {G}rassmannians and the geometric
  {S}atake equivalence},  in \emph{Geometry of moduli spaces and representation
  theory}, vol.~24 of \emph{IAS/Park City Math. Ser.}, pp.~59--154, Amer. Math.
  Soc., Providence, RI (2017).

\bibitem{Gomis:2011pf}
J.~Gomis, T.~Okuda and V.~Pestun, \emph{Exact results for 't {H}ooft loops in
  gauge theories on {$S^4$}},
  \href{https://doi.org/10.1007/JHEP05(2012)141}{\emph{JHEP} {\bfseries 05}
  (2012) 141} [\href{https://arxiv.org/abs/1105.2568}{{\ttfamily 1105.2568}}].

\bibitem{Ashwinkumar:2018tmm}
M.~Ashwinkumar, M.-C.~Tan and Q.~Zhao, \emph{Branes and categorifying
  integrable lattice models},
  \href{https://doi.org/10.4310/atmp.2020.v24.n1.a1}{\emph{Adv. Theor. Math.
  Phys.} {\bfseries 24} (2020) 1}
  [\href{https://arxiv.org/abs/1806.02821}{{\ttfamily 1806.02821}}].

\bibitem{Nekrasov:String-Math-2017}
N.~Nekrasov, ``Open-closed (little) string duality and
  {C}hern-{S}imons-{B}ethe/gauge correspondence.'' Talk at String Math 2017,
  July 24--28, 2017.

\bibitem{Kamnitzer2020}
J.~Kamnitzer, K.~Pham and A.~Weekes, \emph{Hamiltonian reduction for affine
  grassmannian slices and truncated shifted yangians},
  \href{https://arxiv.org/abs/2009.11791}{{\ttfamily 2009.11791}}.

\bibitem{MR1795753}
B.H.~Gross, \emph{On minuscule representations and the principal {${\rm
  SL}_2$}},
  \href{https://doi.org/10.1090/S1088-4165-00-00106-0}{\emph{Represent. Theory}
  {\bfseries 4} (2000) 225}.

\bibitem{Ferrando:2020vzk}
G.~Ferrando, R.~Frassek and V.~Kazakov, \emph{{QQ-system and Weyl-type transfer
  matrices in integrable SO(2r) spin chains}},
  \href{https://arxiv.org/abs/2008.04336}{{\ttfamily 2008.04336}}.

\bibitem{Zhang:2018sej}
H.~Zhang, \emph{Yangians and {B}axter's relations},
  \href{https://doi.org/10.1007/s11005-020-01285-x}{\emph{Lett. Math. Phys.}
  {\bfseries 110} (2020) 2113}
  [\href{https://arxiv.org/abs/1808.02294}{{\ttfamily 1808.02294}}].

\bibitem{MR3761996}
M.~Finkelberg, J.~Kamnitzer, K.~Pham, L.~Rybnikov and A.~Weekes,
  \emph{Comultiplication for shifted {Y}angians and quantum open {T}oda
  lattice}, \href{https://doi.org/10.1016/j.aim.2017.06.018}{\emph{Adv. Math.}
  {\bfseries 327} (2018) 349}
  [\href{https://arxiv.org/abs/1608.03331}{{\ttfamily 1608.03331}}].

\bibitem{Braverman:2016wma}
A.~Braverman, M.~Finkelberg and H.~Nakajima, \emph{Towards a mathematical
  definition of {C}oulomb branches of 3-dimensional $\mathcal{N}=4$ gauge
  theories, {II}},
  \href{https://doi.org/10.4310/ATMP.2018.v22.n5.a1}{\emph{Adv. Theor. Math.
  Phys.} {\bfseries 22} (2018) 1071}
  [\href{https://arxiv.org/abs/1601.03586}{{\ttfamily 1601.03586}}].

\bibitem{Kamnitzer}
J.~Kamnitzer, M.~McBreen and N.~Proudfoot, \emph{The quantum hikita
  conjecture},  \href{https://arxiv.org/abs/1807.09858}{{\ttfamily
  1807.09858}}.

\bibitem{Assel:2015oxa}
B.~Assel and J.~Gomis, \emph{Mirror symmetry and loop operators},
  \href{https://doi.org/10.1007/JHEP11(2015)055}{\emph{JHEP} {\bfseries 11}
  (2015) 055} [\href{https://arxiv.org/abs/1506.01718}{{\ttfamily
  1506.01718}}].

\bibitem{Dimofte:2019zzj}
T.~Dimofte, N.~Garner, M.~Geracie and J.~Hilburn, \emph{Mirror symmetry and
  line operators}, \href{https://doi.org/10.1007/jhep02(2020)075}{\emph{JHEP}
  {\bfseries 02} (2020) 075, 147}
  [\href{https://arxiv.org/abs/1908.00013}{{\ttfamily 1908.00013}}].

\bibitem{Gaiotto:2012xa}
D.~Gaiotto, L.~Rastelli and S.S.~Razamat, \emph{Bootstrapping the
  superconformal index with surface defects},
  \href{https://doi.org/10.1007/JHEP01(2013)022}{\emph{JHEP} {\bfseries 01}
  (2013) 022} [\href{https://arxiv.org/abs/1207.3577}{{\ttfamily 1207.3577}}].

\bibitem{Gukov:2006jk}
S.~Gukov and E.~Witten, \emph{Gauge theory, ramification, and the geometric
  {L}anglands program},  in \emph{Current developments in mathematics, 2006},
  p.~35, Int. Press, Somerville, MA (2008)
  [\href{https://arxiv.org/abs/hep-th/0612073}{{\ttfamily hep-th/0612073}}].

\bibitem{Cattaneo:1999fm}
A.S.~Cattaneo and G.~Felder, \emph{A path integral approach to the {K}ontsevich
  quantization formula},
  \href{https://doi.org/10.1007/s002200000229}{\emph{Comm. Math. Phys.}
  {\bfseries 212} (2000) 591}
  [\href{https://arxiv.org/abs/math/9902090}{{\ttfamily math/9902090}}].

\bibitem{MR1212625}
L.W.~Tu, \emph{Semistable bundles over an elliptic curve},
  \href{https://doi.org/10.1006/aima.1993.1011}{\emph{Adv. Math.} {\bfseries
  98} (1993) 1}.

\bibitem{Costello:2016vjw}
K.~Costello and O.~Gwilliam, \emph{Factorization algebras in quantum field
  theory. {V}ol. 1}, vol.~31 of \emph{New Mathematical Monographs}, Cambridge
  University Press, Cambridge (2017),
  \href{https://doi.org/10.1017/9781316678626}{10.1017/9781316678626}.

\bibitem{MR2415401}
S.~Fishel, I.~Grojnowski and C.~Teleman, \emph{The strong {M}acdonald
  conjecture and {H}odge theory on the loop {G}rassmannian},
  \href{https://doi.org/10.4007/annals.2008.168.175}{\emph{Ann. of Math. (2)}
  {\bfseries 168} (2008) 175}
  [\href{https://arxiv.org/abs/math/0411355}{{\ttfamily math/0411355}}].

\bibitem{Costello:2020jbh}
K.~Costello and N.M.~Paquette, \emph{Twisted supergravity and {K}oszul duality:
  a case study in {AdS}$_3$},
  \href{https://arxiv.org/abs/2001.02177}{{\ttfamily 2001.02177}}.

\bibitem{MR2119140}
R.~Bezrukavnikov and D.~Kaledin, \emph{Fedosov quantization in algebraic
  context},
  \href{https://doi.org/10.17323/1609-4514-2004-4-3-559-592}{\emph{Mosc. Math.
  J.} {\bfseries 4} (2004) 559}
  [\href{https://arxiv.org/abs/math/0309290}{{\ttfamily math/0309290}}].

\end{thebibliography}\endgroup

\end{document}